\theoremstyle{plain}
\newtheorem{definition}{Definition}
\newtheorem{theorem}{Theorem}
\newtheorem{lemma}{Lemma}
\newtheorem{observation}{Observation}
\newtheorem{proposition}{Proposition}
\newtheorem{condition}{Condition}
{%\theorembodyfont{\rmfamily}

}
\newtheorem{claim}{Claim}
\newcommand{\ADD}[1]{#1}
\newcommand{\REMOVE}[1]{}
\DeclareMathOperator{\length}{length}
\DeclareMathOperator{\bend}{bend}
\newenvironment{pf}{\begin{proof}}{\end{proof}}
\newcommand{\true}{\ensuremath\mathtt{true}}
\newcommand{\vect} {\mathbf}
\newcommand{\TT}{\mathrm{T}}
\newcommand{\BB}{\mathrm{B}}
\newcommand{\LL}{\mathrm{L}}
\newcommand{\RR}{\mathrm{R}}
\begin{document}

\title{Multi-Sided Boundary Labeling\thanks{A preliminary version of
    this paper has appeared in \emph{Proc.  13th Int. Algorithms Data
      Struct. Symp. (WADS'13)}, volume 8037 of \emph{Lect. Notes
      Comput. Sci.}, pages 463--474, Springer-Verlag.  This research
    was initiated during the GraDr Midterm meeting at the TU Berlin in
    October 2012.  The meeting was supported by the ESF EuroGIGA
    networking grant.  Ph.~Kindermann acknowledges support by the ESF
    EuroGIGA project GraDR (DFG grant Wo~758/5-1).}}

\author{%
Philipp~Kindermann\thanks{Lehrstuhl f\"ur Informatik I, Universit\"at
W\"urzburg, Germany.
    Homepage: http://www1.informatik.uni-wuerzburg.de/en/staff}
  \and
  Benjamin~Niedermann\thanks{Fakult{\"a}t f{\"u}r Informatik,
    Karlsruher Institut f\"ur Technologie (KIT), Germany.
    Email: \{rutter,$\,$niedermann\}@kit.edu}
  \and
  Ignaz~Rutter\footnotemark[3]
  \and
  Marcus~Schaefer\thanks{College of Computing and Digital Media,
    DePaul University, Chicago, IL, USA.
    Email: \mbox{mschaefer@cs.depaul.edu}}
  \and
Andr{\'e} Schulz\thanks{Institut f\"ur Mathematische Logik und
Grundlagenforschung, Universit\"at M\"unster, Germany.
  Email: \mbox{andre.schulz@uni-muenster.de}}
  \and
  Alexander~Wolff\footnotemark[2]
}

\date{\begin{minipage}{.5\textwidth}\raggedleft Submitted to Algorithmica in May 2014,\\
  revised in April 2015,\\
  accepted in July 2015.\end{minipage}}

% \keywords{%
%   Computational geometry, boundary labeling, dynamic program.}

\maketitle

\begin{abstract}
  In the \emph{Boundary Labeling} problem, we are given a set of~$n$
  points, referred to as \emph{sites}, inside an axis-parallel
  rectangle~$R$, and a set of~$n$ pairwise disjoint rectangular labels
  that are attached to~$R$ from the outside. The task is to connect
  the sites to the labels by non-intersecting rectilinear paths,
  so-called \emph{leaders}, with at most one bend.

  In this paper, we study the \emph{Multi-Sided Boundary Labeling}
  problem, with labels lying on at least two sides of the enclosing
  rectangle.  We present a polynomial-time algorithm that computes a
  crossing-free leader layout if one exists.  So far, such an
  algorithm has only been known for the cases in which labels lie on one
  side or on two opposite sides of~$R$ (here a crossing-free solution
  always exists).  The case where labels may lie on adjacent sides is
  more difficult.  We present efficient algorithms for testing the
  existence of a crossing-free leader layout that labels all sites and
  also for maximizing the number of labeled sites in a crossing-free
  leader layout.  For two-sided boundary labeling with adjacent sides,
  we further show how to minimize the total leader length in a
  crossing-free layout.
\end{abstract}

\section{Introduction}\label{sec:intro}

Label placement is an important problem in cartography and, more
generally, information visualization.  Features such as points, lines,
and regions in maps, diagrams, and technical drawings often have to be
labeled so that users understand better what they see.  Even very
restricted versions of the label-placement problem are
NP-hard~\cite{ksw-plsl-99}, which explains why labeling a map manually
is a tedious task that has been estimated to take 50\% of total map
production time~\cite{m-ctcc-80}.  The ACM Computational Geometry Impact
Task Force report~\cite{c-cgitf-99} identified label placement as an
important research area.  The point-labeling problem in particular has
received considerable attention, from practitioners and theoreticians
alike.  The latter have proposed approximation algorithms for various
objectives (label number versus label size), label shapes (such as
axis-parallel rectangles or disks), and label-placement models
(so-called fixed-position models versus slider models).

The traditional label-placement models for point labeling require that
a label is placed such that a point on its boundary coincides with the
point to be labeled, the \emph{site}.  This can make it impossible to
label all sites with labels of sufficient size if some sites are very
close together.  For this reason, Freeman et al.~\cite{fmc-alssm-96}
and Zoraster \cite{z-prusa-97} advocated the use of \emph{leaders},
(usually short) line segments that connect sites to labels.  In order
to ensure that the background image or map remains visible even in
the presence of large labels, Bekos et al.~\cite{bksw-blmea-07} took
a more radical approach.  They introduced models and algorithms for
\emph{boundary labeling}, where all labels are placed beyond the
boundary of the map and are connected to the sites by straight-line or
rectilinear leaders (see Fig.~\ref{fig:example}).

\newcommand{\mapScale}{0.38}
\begin{figure}[tb]
\centering
  \begin{subfigure}[t]{.42\textwidth}
    \includegraphics[scale=\mapScale,page=1]{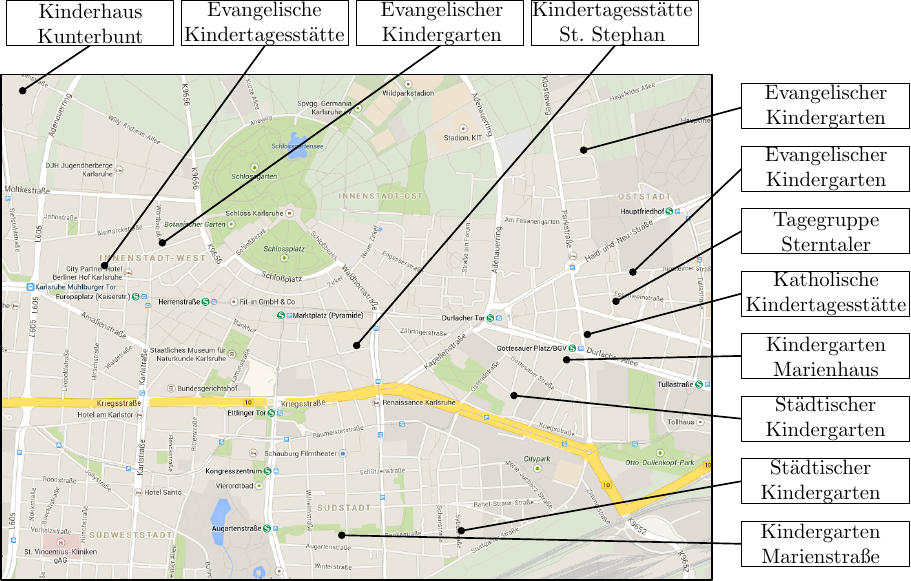}
    \caption{straight-line leaders}
    \label{sfg:original}
  \end{subfigure}
  \hfill
  \begin{subfigure}[t]{.425\textwidth}
    \hspace*{.5mm}%
    \raisebox{0.5mm}{\includegraphics[scale=\mapScale,page=2]{karte}}
    \caption{$opo$-leaders}
    \label{sfg:opo}
  \end{subfigure}
  \\
  \centering
  \begin{subfigure}[t]{.32\textwidth}
    \raisebox{0.5mm}{\includegraphics[scale=\mapScale,page=3]{karte}}
    \caption{$po$-leaders}
    \label{sfg:po}
  \end{subfigure}

  \caption{Labeling of kin\-der\-gar\-tens in Karlsruhe, Germany. The
    pictures show different types of leaders with labels on adjacent
    sides of the map.  For better readability, we have simplified the
    label texts.}
  \label{fig:example}
\end{figure}

\paragraph{Problem statement.}
\label{sub:definition}
Following Bekos et al. \cite{bksw-blmea-07}
% \todo{Marcus: I only had a quick look at the literature on boundary
%   labeling (btw, changed Bekos et al. [4, Fig 12] to Figure 16, I
%   don’t think 12 was the right picture, but I only had access to a pdf
%   version of the paper), but it seems to me that the boundary labeling
%   typically comes with an assignment between sites and labels. We
%   don’t really identify this as a separate variant in the
%   previous/related work section, and we don’t really have a strong
%   reference for the model we introduce in which no assignment is
%   given, do we? I’m just exposing my lack of knowledge of the
%   literature, I’m sure.}, 
we define the
\textsc{Boundary Labeling} problem as follows.  We are given an
axis-parallel rectangle $R=[0,W]\times[0,H]$, which is called the
\emph{enclosing rectangle}, a set~$P\subset R$ of~$n$ points
$p_1,\dots,p_n$, called \emph{sites}, within the rectangle~$R$,
and a set~$L$ of~$m \le n$ axis-parallel rectangles
$\ell_1,\dots,\ell_m$ of equal size, called
\emph{labels}.  The labels 
lie in the complement of~$R$ and touch the boundary of~$R$.  No two
labels overlap.  We denote an instance of the problem by the
triplet~$(R,L,P)$.  A \emph{solution} of a problem instance is a set
of~$m$ curves~$c_1,\dots,c_m$ in the interior of~$R$, called
\emph{leaders}, that connect sites to labels such that the leaders
% I assume we allow in general more labels than points?
\begin{inparaenum}[a)]
\item induce a matching between the labels and (a subset of) the sites,
\item touch the associated labels on the boundary of~$R$.
\end{inparaenum} %
Following previous work, we do not define labels as the text
  associated with the sites, but as the empty rectangles into which
  that text will be placed (during a post-processing step).  This
  approach is justified by our assumption that all label rectangles
  have the same size.

A solution is \emph{planar} if the leaders do not intersect.  We call
an instance \emph{solvable} if a planar solution exists. Note
that we do not prescribe which site connects to which label. The
endpoint of a curve at a label is called a \emph{port}.  We
distinguish two % incarnations
versions of the \textsc{Boundary Labeling}
problem: either the position of the ports on the boundary of~$R$ is
fixed and part of the input, or the ports \emph{slide}, i.e., their
exact location is not prescribed.

We restrict our solutions to \emph{$po$-leaders},
that is, starting at a site, the first line segment of a leader is
parallel ($p$) to the side of~$R$ touching the label it leads to,
and the second line segment is orthogonal ($o$) to that side; see
Fig.~\ref{sfg:po}.  (Fig.~\ref{sfg:opo} shows a labeling with
so-called \emph{$opo$-leaders}, which were investigated by Bekos et
al.~\cite{bksw-blmea-07}).  Bekos et al.~\cite[Fig.~16]{bkps-afbl-10}
observed that not every instance admits a planar solution
with $po$-leaders in which all sites are labeled (even if $m = n$).

% \begin{figure}[tb]
%   \centering
%   \includegraphics{definition}
%   \caption{Planar solution for an instance of \textsc{Two-Sided Boundary
%       Labeling with Adjacent Sides}.}
%   \label{fig:def}
% \end{figure}

\paragraph{Previous and related work.}
%
%\todo{Marcus, Previous and related work: already mentioned the Lubiw
%  article, which may be more relevant than the homotopic redrawing
%  article? All these would be for the model in which sites are
%  assigned to labels, I think.  (Schnipsel aus anderer Email:  (Chan,
%  Hoffmann, Kiazyk, Lubiw, Minimum Length Embedding of Planar Graphs
%  at Fixed Vertex Locations) gesehen? Das scheint eher relevant zu
%  sein, und erwaehnt noch ein paar andere interessante Artikel by
%  Liebling und Bastert/Fekete))}, 
For~$po$-labeling,
Bekos et al.~\cite{bksw-blmea-07} gave a simple
quadratic-time algorithm for the one-sided case that, in a first pass,
produces a labeling of minimum total leader length by matching sites
and ports from bottom to top.  In a second pass, their algorithm
removes all intersections without increasing the total leader length.
This result was improved by Benkert et al.~\cite{bhkn-amcbl-09} who
gave an~$O(n \log n)$-time algorithm for the same objective function
and an~$O(n^3)$-time algorithm for a very general class of objective
functions, including, for example, bend minimization.  They extend
the latter result to the two-sided case (with labels on
opposite sides of~$R$), resulting in an~$O(n^8)$-time algorithm.  For
the special case of leader-length minimization, Bekos et
al.~\cite{bksw-blmea-07} gave a simple dynamic program running in
$O(n^2)$ time.  All these algorithms work both for fixed and sliding
ports.

Leaders that contain a diagonal part have been studied by Benkert et
al.~\cite{bhkn-amcbl-09} and by Bekos et al.~\cite{bkns-blol-10}.
Recently, N\"ollenburg et al.~\cite{nps-d1sbl-10} have investigated a
dynamic scenario for the one-sided case, Gemsa et
al.~\cite{ghn-blapi-11} have used multi-layer boundary labeling to
label panorama images, and Fink et al.~\cite{fhssw-alfr-InfoVis12}
have boundary labeled focus regions, for example, in interactive
on-line maps.  Lin et al.~\cite{lky-m1bl-08} consider boundary
labeling where more than one site may be labeled by the same label.
Lin~\cite{l-cfmob-10} and Bekos et al.~\cite{bkfhknrs-mtoblwb-GD13} study
hyperleaders that connect each label to a set of sites.

At its core, the boundary labeling problem asks for a non-intersecting
perfect (or maximum) matching on a bipartite graph.  Note that an
instance may have a planar solution, although all of its leader-length
minimal matchings have crossings.  In fact, the ratio between a
length-minimal solution and a length-minimal crossing-free
matching can be arbitrarily bad; see Fig.~\ref{fig:crossmatching}.
When connecting points and sites with straight-line segments, the
minimum Euclidean matching is necessarily crossing-free.
For this case an $O(n^{2+\varepsilon})$-time  $O(n^{1+\varepsilon})$-space
algorithm exists~\cite{AES99}. 
% The minimum-length solution for connecting fixed
% pairs of points using rectilinear paths in the
% presence of obstacles is NP-hard, but there is a 2-approximation~\cite{SV10}.
%\AW{I removed Speckmann/Verbeek.  TODO: Mention Chan et
%  al. \cite{chkl-mlepg-GD13}, and possibly Liebing and Bastert/Fekete.}

\begin{figure}
  \centering
  \includegraphics[page=1,width=.2\textwidth]{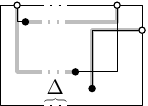}
  \caption{Length-mi\-ni\-mal solutions may have crossings. By
    increasing $\Delta$ we can make the ratio between the
    length-minimal matching and the length-minimal crossing-free
    matching arbitrarily small.}
  \label{fig:crossmatching}
\end{figure}

Boundary labeling can also be seen as a graph-drawing problem where
the class of graphs to be drawn is restricted to matchings.  The
restriction concerning the positions of the graph vertices (that is,
sites and ports) has been studied for less restricted graph classes
under the name \emph{point-set embeddability (PSE)}, usually following
the straight-line drawing convention for
edges~\cite{gmpp-eptvs-AMM91}. 
For polygonal edges, Bastert and Fekete~\cite{bf-gv-96}
  proved that PSE with minimum number of bends or minimum
  total edge length is NP-hard, even when the
  graph is a matching.  For minimizing the total edge length and the
  same graph class, Liebling
  et al.~\cite{lmmps-dpp} introduced heuristics and Chan
  et al.~\cite{chkl-mlepg-GD13} presented approximation algorithms.
  Chan et al.\ also considered paths and general planar graphs.  
PSE has also been combined with the
ortho-geodesic drawing convention~\cite{kkrw-mgepg-GD09}, which
generalizes~$po$-labeling by allowing edges to have more than one
bend.  The case where the mapping between ports and sites is given has
been studied in VLSI layout \cite{rcs-sbw-JA86}.

\paragraph{Our contribution.}
In the first part of the paper, we investigate the problem \textsc{Two-Sided
Boundary Labeling
with Adjacent Sides} where all labels lie on two \emph{adjacent}
sides of~$R$, without loss of generality, on the top and right side.
Note that point data often comes in a
coordinate system; then it is natural to have labels on adjacent
sides (for example, opposite the coordinate axes).  We argue that this
problem is more
difficult than the case where labels lie on opposite sides, which has
been studied before: with labels on opposite sides, (a)~there is
always a solution where all sites are labeled (if $m=n$) and (b)~a
feasible solution can be obtained by considering two instances of the
one-sided case.

We present an algorithm that, given an instance with~$n$ labels with
fixed ports and~$n$ sites, decides whether a 
planar solution exists where all sites are labeled and, if yes,
computes a layout of the leaders
(see Section~\ref{sec:two-sided-algorithm}).  Our algorithm uses
dynamic programming to ``guess'' a partition of the sites into the two
sets that are connected to the leaders on the top side and on the
right side.  The algorithm runs in~$O(n^2)$ time and uses~$O(n)$
space.

We study several extensions of our main result (see
Section~\ref{sec:two-sided-extensions}).  First, we show that our
approach for fixed ports can also be made to work for sliding ports.
Second, we \ADD{optimally} solve the label-number maximization problem (\ADD{in $O(n^3 \log n)$
time using $O(n)$ space}).  This is interesting if the position
of the sites and labels does not allow for a perfect matching or if
there are more sites than labels.
Finally, we present a modification of our algorithm that minimizes the
leader length (in~$O(n^8 \log n)$ time and~$O(n^6)$ space).

In the second part of the paper, we investigate the problems
\textsc{Three-Sided Boundary Labeling} and \textsc{Four-Sided Boundary
  Labeling} where the labels may lie on three or even all four sides
of~$R$, respectively.  To that end we generalize the concept of
partitioning the sites labeled by leaders of different sides. In this
way we obtain subinstances that we can solve using the algorithm for
the two-sided case.  We obtain an algorithm solving the
four-sided case in $O(n^9)$ time and~$O(n)$ space and an algorithm
 solving the three-sided case in $O(n^4)$ time and~$O(n)$ space.
Except for the leader-length minimization, all extensions presented
previously extend to the three- and four-sided case, of course with a
corresponding impact on the running~time \ADD{and space requirements}.

\paragraph{Notation.}
We call the labels that lie on the right (left/top/bottom) side of
$R$ \emph{right (left/top/bottom) labels}.  The \emph{type} of a
label refers to the side of~$R$ on which it is located.  The
\emph{type} of a leader (or a site) is simply the type of its label.
We assume that no two sites lie on the same horizontal or vertical
line, and no site lies on a horizontal or vertical line through a port
or an edge of a label.

For a solution~$\cal L$ of a boundary labeling problem, we define
several measures that will be used to compare different solutions.
We denote the total length of all leaders
in~$\cal L$ by~$\length(\mathcal{L})$.
Moreover, we denote
by~$|\mathcal L|_x$ the total length of all horizontal segments of
leaders that connect a left or right label to a site.  Similarly, we
denote by~$|\mathcal L|_y$ the total length of the vertical segments
of leaders that connect top or bottom labels to sites.  Note that
generally,~$|\mathcal L|_x + |\mathcal
L|_y \neq \length(\mathcal L)$.

We denote the (uniquely defined) leader connecting a site~$p$ to a
port~$t$ of a label~$\ell$ by~$\lambda(p,t)$.  We denote the bend of
the leader~$\lambda(p,t)$ by~$\bend(p,t)$.  In the case of fixed
ports, we identify ports with labels and simply write~$\lambda(p,\ell)$
and~$\bend(p,\ell)$, respectively.

\section{Structure of Two-Sided Planar Solutions}
\label{sec:two-sided-structure}

In this section, we \ADD{tackle} the two-sided boundary labeling problem
with adjacent sides by presenting a series of structural results of
increasing strength.  We assume that the labels are located on the top
and right sides of~$R$.  For simplicity, we assume that we have fixed ports.  By
identifying the ports with their labels, we use~$L$ to denote the set
of ports of all labels.  For sliding ports, we can simply fix all
ports to the bottom-left corner of their corresponding labels.  First
we show that a planar two-sided solution admits a transformation
sustaining planarity such that the result of the transformation can be
split into two one-sided solutions by constructing an
$xy$-monotone, rectilinear curve from the top-right to the bottom-left
corner of~$R$; see Fig.~\ref{fig:xysep}.  Afterwards, we provide a
necessary and sufficient criterion to decide whether there exists a
planar solution for a given separation.  This will form the basis of
our dynamic programming algorithm, which we present in
Section~\ref{sec:two-sided-algorithm}.

\begin{lemma}\label{lem:boxlemma}
  Consider a solution~$\mathcal{L}$ for~$(R,L,P)$ and let~$P'\subseteq
  P$ be sites of the same type.  Let~$L'\subseteq L$ be the
  set of labels of the sites in~$P'$.  Let~$K \subseteq R$ be a
  rectangle that contains all bends of the leaders of~$P'$.  If the
  leaders of~$P \setminus P'$ do not intersect~$K$, then we can
  rematch~$P'$ and~$L'$ such that the resulting
  solution~$\mathcal{L}'$ has the following properties:
  \begin{inparaenum}[(i)]
    \item all intersections in~$K$ are removed,
    \item there are no new intersections of leaders outside of~$K$,
    \item $|\mathcal{L}'|_x = |\mathcal{L}|_x$, $|\mathcal{L}'|_y =
      |\mathcal{L}|_y$, and
    \item $\length(\mathcal L') \le \length(\mathcal L)$.
  \end{inparaenum}
\end{lemma}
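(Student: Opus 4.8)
The plan is to reduce the claim to a one-dimensional re-assignment problem for the leaders of $P'$ and to solve it by an uncrossing (exchange) argument. By symmetry I assume that $P'$ has type right, so each of its leaders runs vertically from its site $p=(x_p,y_p)$ to its bend and then horizontally to a port on the right side; a \emph{rematch} of $P'$ and $L'$ is therefore nothing but a permutation of the port heights among the sites of $P'$. The first thing I would record is that the horizontal segment of such a leader has length $W-x_p$, which depends only on the site, not on the assigned port. Hence the total horizontal length of the $P'$-leaders, which is precisely their contribution to $|\mathcal{L}|_x$, is the same for every rematch, whereas the vertical segments of right leaders contribute to neither $|\mathcal{L}|_x$ nor $|\mathcal{L}|_y$, and the top/bottom leaders in $P\setminus P'$ are untouched. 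This already yields property~(iii) for an arbitrary rematch and reduces~(iv) to showing that some rematch does not increase the total \emph{vertical} length $\sum_{p\in P'}|y_p-u_{\sigma(p)}|$, where $u_{\sigma(p)}$ is the height of the port assigned to $p$.

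Next I would establish~(ii) for every rematch from the hypothesis on $K$. Write $K=[a_1,a_2]\times[b_1,b_2]$. Since all bends $(x_p,u_p)$ lie in $K$, we have $x_p\in[a_1,a_2]$ for every $p\in P'$ and all port heights lie in $[b_1,b_2]$; as a rematch only permutes port heights, the new bends again lie in $K$. The key observation is that the portion of each $P'$-leader \emph{outside} $K$ is independent of the matching: the horizontal part to the right of $K$ is the segment $[a_2,W]\times\{u_{\sigma(p)}\}$, and over all $p\in P'$ these make up the fixed set of segments $[a_2,W]\times\{u\}$ ranging over all port heights $u$; and the vertical stub of $p$ below (above) $K$ is $\{x_p\}\times[y_p,b_1]$ (resp.\ $\{x_p\}\times[b_2,y_p]$), determined by the site alone. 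Thus the union of the $P'$-leaders outside $K$ is the same point set before and after any rematch. Since the leaders of $P\setminus P'$ avoid $K$, each crossing they make with a $P'$-leader occurs outside $K$ and is hence preserved, never created; and two $P'$-leaders do not cross outside $K$ at all, as there their horizontal parts share one horizontal band while their vertical stubs occupy pairwise disjoint regions. This gives~(ii).

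It remains to pick the permutation so that~(i) and~(iv) hold together. Here I would first note that every crossing of two right leaders lies inside $K$: the leaders of $p$ and $q$ with $x_p<x_q$ cross exactly when the port height of $p$ lies strictly between $y_q$ and the port height of $q$, and the crossing point then has $x$-coordinate $x_q\in[a_1,a_2]$ and $y$-coordinate (the port height of $p$) in $[b_1,b_2]$. Starting from the matching of $\mathcal{L}$, I would repeatedly \emph{flip} a crossing pair, that is, exchange their two ports. A short computation with the reverse triangle inequality shows that a crossing pair always satisfies $|y_p-u_q|+|y_q-u_p|\le|y_p-u_p|+|y_q-u_q|$, so a flip never increases the total vertical length, keeping~(iv) intact throughout. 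Geometrically, the symmetric difference of the two leaders before and after a flip is exactly the boundary of the axis-parallel rectangle $B$ spanned by their two bends, the flip replacing the bottom-right arc of $\partial B$ by its top-left arc.

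The main obstacle, as usual for such uncrossing arguments, is to guarantee termination, i.e.\ that flips do not keep re-creating crossings. The rectangle description shows that a third leader which passes \emph{straight} through $B$ meets the new top-left arc exactly as often as the old bottom-right arc, so it gains no crossing; the only dangerous third leaders are those that turn inside $B$ or, more subtly, those whose \emph{site} lies inside $B$ while their port lies beyond $B$, since such a leader can acquire a fresh crossing with the flipped pair. I would control this by choosing carefully \emph{which} crossing to flip -- for instance an innermost one whose rectangle is free of other sites and bends -- and by restricting the flips to the set of length-minimal matchings (in which, by the length computation above, every crossing-flip is cost-neutral), using the number of crossings as a secondary, strictly decreasing potential. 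Pinning down a choice rule for which crossing to flip that provably avoids the site-inside-$B$ configuration and therefore drives the crossing count monotonically to zero is the step I expect to require the most care; it is exactly the content of the classical one-sided uncrossing procedure of Bekos et al., here applied to the right leaders confined to $K$.
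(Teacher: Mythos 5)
Your proposal is correct and takes essentially the same route as the paper: you first show that an arbitrary rematching of $P'$ with $L'$ changes nothing outside of $K$ (the paper argues this via pairwise swaps, whose effect is confined to the boundary of the rectangle spanned by the two bends, while you argue via invariance of the union of the $P'$-leaders outside $K$), and you then resolve the crossings inside $K$ by appealing to the known one-sided machinery. The termination step you flag as needing the most care is handled in the paper in exactly the way you anticipate, namely as a black-box invocation of the one-sided length-minimization result (Benkert et al., with sites and ports outside $K$ projected onto the boundary of $K$), so your deferral to the classical one-sided uncrossing procedure is on the same footing as the paper's own proof.
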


\begin{pf}
  Without loss of generality, we assume that~$P'$ contains top sites;
  the other cases are symmetric.  We first prove that, no matter how
  we change the assignment between~$P'$ and~$L'$, new intersection
  points can arise only in~$K$.  This enables us to construct the
  required solution.
  \begin{claim}\label{clm:boxclaim}
    Let~$\ell, \ell' \in L'$ and~$p, p' \in P'$ such that~$\ell$
    labels~$p$ and~$\ell'$ labels~$p'$.  Changing the matching by
    rerouting~$p$ to~$\ell'$ and~$p'$ to~$\ell$ does not introduce
    new intersections outside of~$K$.
  \end{claim}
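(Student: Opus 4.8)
The plan is to exploit the rigid geometry of top leaders. Placing $R=[0,W]\times[0,H]$ with the top labels on the line $y=H$, a top leader $\lambda(q,\ell'')$ from a site $q=(q_x,q_y)$ to (the port identified with) a label $\ell''$ of abscissa $\ell''_x$ consists of a horizontal segment at height $q_y$ from $q_x$ to $\ell''_x$, followed by a vertical segment at $x=\ell''_x$ from $q_y$ up to $H$; its bend is $\bend(q,\ell'')=(\ell''_x,q_y)$. Thus the horizontal part of a top leader is determined solely by its \emph{site} (it always sits at the site's height), and the vertical part solely by its \emph{label} (it always sits at the label's abscissa and runs up to $H$). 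Since the rematching only swaps $\ell$ and $\ell'$ between $p$ and $p'$, the heights $\{p_y,p'_y\}$ of horizontal parts and the abscissae $\{\ell_x,\ell'_x\}$ of vertical parts are both preserved. First I would check that the two new bends lie in $K$: writing $K=[x_1,x_2]\times[y_1,y_2]$, the hypothesis that $K$ contains the old bends $(\ell_x,p_y)$ and $(\ell'_x,p'_y)$ gives $\ell_x,\ell'_x\in[x_1,x_2]$ and $p_y,p'_y\in[y_1,y_2]$, so the new bends $(\ell'_x,p_y)$ and $(\ell_x,p'_y)$ lie in $K$ as well.

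The heart of the argument is to show that, as point sets,
\[
  \bigl(\lambda(p,\ell)\cup\lambda(p',\ell')\bigr)\setminus K
  \;=\;
  \bigl(\lambda(p,\ell')\cup\lambda(p',\ell)\bigr)\setminus K .
\]
I would prove this piecewise. For the horizontal parts: both the old and the new leader of $p$ run at height $p_y$ starting from $p_x$ and terminate at a bend whose abscissa lies in $[x_1,x_2]$; since $p_y\in[y_1,y_2]$, the portion lying outside $K$ is in either case exactly the sub-segment at height $p_y$ from $p_x$ to the vertical side of $K$ it first meets (and is empty when $p_x\in[x_1,x_2]$), so $\lambda(p,\ell)$ and $\lambda(p,\ell')$ have identical horizontal traces outside $K$, and symmetrically for $p'$. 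For the vertical parts: the segment at $x=\ell_x$ appears in the old configuration as part of $p$'s leader (running from $p_y$ up to $H$) and in the new one as part of $p'$'s leader (running from $p'_y$ up to $H$); since $p_y,p'_y\ge y_1$ and $\ell_x\in[x_1,x_2]$, its trace outside $K$ is in both cases exactly $\{\ell_x\}\times(y_2,H]$, and likewise the segment at $x=\ell'_x$ contributes $\{\ell'_x\}\times(y_2,H]$ in both configurations. Collecting the pieces yields the displayed equality.

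This identity immediately rules out new crossings with any \emph{unchanged} leader $c$ (one of a site other than $p,p'$): a crossing point $z\notin K$ of a rematched leader with $c$ lies in the left-hand set, hence on an old leader of $p$ or $p'$, and as $c$ is unchanged, $z$ was already a crossing before the swap. The only remaining—and slightly fiddly—case is a crossing $z\notin K$ between the two \emph{rematched} leaders themselves, and this is the step I expect to require the most care. Here I would use the decomposition above to write $z$ as an intersection of two of the traced pieces and dispatch each possibility via the general-position assumptions: two horizontal pieces lie at the distinct heights $p_y\ne p'_y$; two vertical pieces lie at the distinct abscissae $\ell_x\ne\ell'_x$; and in each of the two mixed cases the horizontal and the vertical piece come from the \emph{same} old leader, so they can meet only at that leader's bend, which lies in $K$. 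Hence no such $z$ exists outside $K$, and the rematching introduces no new intersection outside $K$, as claimed.
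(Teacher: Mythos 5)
Your proof is correct and takes essentially the same route as the paper's: the paper notes that rerouting replaces the restriction of $\lambda(p,\ell)\cup\lambda(p',\ell')$ to the boundary of the rectangle $K'$ spanned by $\bend(p,\ell)$ and $\bend(p',\ell')$ by its complement with respect to that boundary, so the union of the two leaders changes only inside $K'\subseteq K$ --- which is exactly your point-set identity outside $K$. Your explicit case analysis for a possible crossing between the two rerouted leaders themselves outside $K$ (ruled out because mixed horizontal/vertical pieces come from the same old leader and can only meet at its bend, which lies in $K$) spells out a step that the paper's terse ``the statement of the claim follows'' leaves implicit.
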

  Let~$K'\subseteq K$ be the rectangle spanned by~$\bend(p,\ell)$
  and~$\bend(p',\ell')$.  When rerouting, we
  replace~$\lambda(p,\ell) \cup \lambda(p',\ell')$ restricted to
  the boundary of~$K'$ by its complement with respect to the
  boundary of~$K'$; see Fig.~\ref{fig:boxclaim} for an example.
  Thus, any changes concerning the leaders occur only in~$K'$. The
  statement of the claim follows.

\begin{figure}[tb]
  \centering
  %\begin{subfigure}[t]{.54\textwidth}
    \centering
    \includegraphics[page=2]{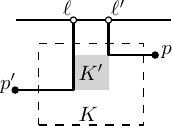}\hspace{1.0cm}%
    \includegraphics[page=1]{boxlemma}
   \hfill
%   \begin{subfigure}[t]{.42\textwidth}
%     \centering
%     \includegraphics[page=3]{boxlemma}\hfil%
%     \includegraphics[page=4]{boxlemma}
%     \caption{removing the highest crossing~$c$ does not increase the
%       total leader length}
%     \label{fig:remove-crossing}
%   \end{subfigure}
  \caption{Illustration of the proof of Lemma~\ref{lem:boxlemma}.
      Rerouting~$\lambda(p,\ell)$ and~$\lambda(p',\ell')$
      to~$\lambda(p,\ell')$ and $\lambda(p',\ell)$ changes leaders
      only on the boundary of~$K'$ }
  \label{fig:boxlemma}
      \label{fig:boxclaim}
\end{figure}

  Since any rerouting can be seen as a sequence of pairwise
  reroutings, the above claim shows that we can rematch~$L'$
  and~$P'$ arbitrarily without running the risk of creating new
  conflicts outside of~$K$.  To resolve the conflicts
  inside~$K$, we use the length-minimization algorithm for
  one-sided boundary labeling by Benkert et al.~\cite{bhkn-amcbl-09},
  with the sites and ports outside~$K$ projected onto the boundary
  of~$K$. Thus, we obtain a solution~$\mathcal L'$ satisfying
  properties~(i)--(iv).
\end{pf}

\begin{definition}\label{def:xysep}
  We call an $xy$-monotone, rectilinear curve connecting the top-right
  to the bottom-left corner of~$R$ an \emph{$xy$-separating curve}.
  We say that a planar solution to \textsc{Two-Sided Boundary Labeling
    with Adjacent Sides} is \emph{$xy$-separated} if and only if there
  exists an $xy$-separating curve~$C$ such that
  \begin{compactenum}[a)]
    \item the sites that are connected to the top side and all their
      leaders lie on or above~$C$
    \item the sites that are connected to the right side and all their
      leaders lie below~$C$.
  \end{compactenum}
\end{definition}

It is not hard to see that a planar solution is not $xy$-separated if
there exists a site~$p$ that is labeled to the right side and a
site~$q$ that is labeled to the top side with~$x(p)<x(q)$ and
$y(p)>y(q)$.  There are exactly four patterns in a possible planar
solution that satisfy this condition; see Fig.~\ref{fig:patterns}.  In
Lemma~\ref{lem:xysep}, we show that these patterns are the only ones
that can violate $xy$-separability.

\begin{figure}[tb]
  \begin{minipage}[b]{.3\textwidth}
    \centering
    \includegraphics{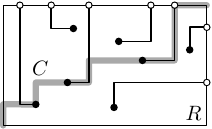}
  \end{minipage}
  \hfill
  \begin{minipage}[b]{.67\textwidth}
    \renewcommand{\thesubfigure}{P\arabic{subfigure}}
    \begin{subfigure}[b]{.23\textwidth}
      \centering
      \includegraphics[page=1,width=\linewidth]{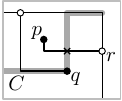}
      \caption{}
      \label{fig:pattern-1}
    \end{subfigure}
    \hfill
    \begin{subfigure}[b]{.23\textwidth}
      \centering
      \includegraphics[page=2,width=\linewidth]{patternscrossings}
      \caption{}
      \label{fig:pattern-2}
    \end{subfigure}
    \hfill
    \begin{subfigure}[b]{.23\textwidth}
      \centering
      \includegraphics[page=4,width=\linewidth]{patternscrossings}
      \caption{}
      \label{fig:pattern-3}
    \end{subfigure}
    \hfill
    \begin{subfigure}[b]{.23\textwidth}
      \centering
      \includegraphics[page=3,width=\linewidth]{patternscrossings}
      \caption{}
      \label{fig:pattern-4}
    \end{subfigure}
    \renewcommand{\thesubfigure}{\alpha{subfigure}}
  \end{minipage}

  \begin{minipage}[t]{.3\textwidth}
    \centering
    \caption{An $xy$-separating curve %$C$
      of a planar solution.}
    \label{fig:xysep}
  \end{minipage}\hfill
  \begin{minipage}[t]{.67\textwidth}
    \centering
    \caption{A planar solution that contains any of the above four
      patterns P1--P4 is not $xy$-separated.}
    \label{fig:patterns}
  \end{minipage}
\end{figure}

\newcommand{\lemXySepText}{A planar solution is $xy$-separated if and
  only if it does not contain any of the patterns P1--P4 in
  Fig.~\ref{fig:patterns}.}

\begin{lemma}\label{lem:xysep}
  \lemXySepText
\end{lemma}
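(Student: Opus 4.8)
The plan is to prove the two implications separately, reusing the monotonicity observation stated before the lemma for the easy direction and constructing the separating curve explicitly for the hard one. Throughout I will use the general-position assumption from the Notation paragraph (no two sites share a coordinate) to avoid degenerate equalities, and I will represent an $xy$-separating curve as the graph of a non-decreasing step function $f$ from the bottom-left corner $(0,0)$ to the top-right corner $(W,H)$, so that a point lies \emph{below} $C$ iff its $y$-coordinate is less than $f$ at its $x$-coordinate, and \emph{above} otherwise.

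For ``$xy$-separated $\Rightarrow$ no pattern'' I argue by contraposition. Each of P1--P4 contains a right site $p$ and a top site $q$ with $x(p)<x(q)$ and $y(p)>y(q)$, so it suffices to show that such a pair already forbids $xy$-separation. In an $xy$-separated solution $p$ (and its leader) lies below $C$ and $q$ lies on or above $C$, i.e.\ $y(p)<f(x(p))$ and $y(q)>f(x(q))$. Since $x(p)<x(q)$ and $f$ is non-decreasing, $y(p)<f(x(p))\le f(x(q))<y(q)$, contradicting $y(p)>y(q)$. This uses only the \emph{sites}, which is why it is the easy direction.

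For the converse I turn the search for $C$ into a point-separation problem. With labels on the top and on the right, a right leader of $p$ consists of a vertical segment at $x(p)$ and a horizontal segment running rightwards to its port at height $\eta_p$; it lies entirely below $C$ exactly when $C$ passes above the single point $d_p=(x(p),\max(y(p),\eta_p))$, the topmost-leftmost point of the leader. Symmetrically, a top leader of $q$ (horizontal segment at height $y(q)$, vertical segment up to its port at $x$-coordinate $\xi_q$) lies entirely above $C$ exactly when $C$ passes below $a_q=(\max(x(q),\xi_q),y(q))$. Hence an $xy$-separating curve exists iff some non-decreasing step function from $(0,0)$ to $(W,H)$ passes above all $d_p$ and below all $a_q$. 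I would first record the standard running-maximum fact that such a function exists \emph{unless} some $d_p$ lies weakly to the upper-left of some $a_q$, i.e.\ $x(p)\le\max(x(q),\xi_q)$ and $\max(y(p),\eta_p)\ge y(q)$ (the pairwise obstruction).

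The crux is then to show that this pairwise obstruction can occur only between two leaders whose sites realize a forbidden pattern; by contraposition this gives ``no pattern $\Rightarrow$ $xy$-separated''. Assume a right leader of $p$ and a top leader of $q$ realize the obstruction, and invoke planarity through the only two segment pairs that can meet: $p$'s vertical segment against $q$'s horizontal segment, and $p$'s horizontal segment against $q$'s vertical segment. A case distinction on the leader directions shows that if neither pair crosses then necessarily $y(p)>y(q)$ and $x(p)<x(q)$. Indeed, assuming $y(p)\le y(q)$ forces $\eta_p\ge y(q)$, so $p$'s vertical part spans height $y(q)$; avoiding the $p$-horizontal/$q$-vertical crossing then forces $\xi_q<x(p)$, and together with the obstruction this places $x(p)$ strictly inside $q$'s horizontal segment, reintroducing a crossing. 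An analogous argument (now avoiding the other crossing) rules out $x(p)\ge x(q)$. Thus $p$ and $q$ sit in the pattern position, and the up/down choice for $p$'s port and the left/right choice for $q$'s port give exactly the four planar realizations P1--P4 of Fig.~\ref{fig:patterns}. I expect this geometric case analysis—tying the port and bend positions back to the site inequalities via the non-crossing of the two segment pairs—to be the main obstacle, while the point-separation reduction and the running-maximum construction are routine.
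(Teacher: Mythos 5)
Your proof is correct, and it takes a genuinely different route from the paper's. The paper proves the hard direction constructively: it builds one specific staircase curve that hugs the top leaders from below (moving left whenever possible, dropping down only at $x$-coordinates of top sites and top ports), and then analyzes the \emph{topmost} crossing of this curve with a right leader---the only kind of crossing that can occur---distinguishing whether it lies on a vertical or a horizontal segment of the curve, and extracting a pattern in each case. You instead compress each leader to a single critical point ($d_p=(x(p),\max(y(p),\eta_p))$ for right leaders, $a_q=(\max(x(q),\xi_q),y(q))$ for top leaders), reduce $xy$-separability to staircase separation of these two point families, invoke the running-maximum criterion (a separating monotone curve exists iff no $d_p$ dominates some $a_q$), and then show \emph{locally}, using planarity of only the two leaders involved in a dominance pair, that the pair forces $x(p)<x(q)$ and $y(p)>y(q)$. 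I checked your crux case analysis and it goes through: assuming $y(p)\le y(q)$, the obstruction forces $\eta_p>y(q)$, non-crossing of $p$'s horizontal segment with $q$'s vertical segment forces $\xi_q<x(p)$, and then the obstruction places $x(p)$ strictly inside $q$'s horizontal segment, contradicting planarity; the symmetric argument excludes $x(p)>x(q)$ (there, avoiding the vertical/horizontal crossing forces $\eta_p>y(q)$, which then produces the other crossing). The strict inequalities you need are covered by the paper's general-position assumption, which you invoke. The two proofs are in a sense mirror images---the paper's greedy curve is the maximal staircase below the $a_q$'s, yours the minimal staircase above the $d_p$'s---but your decomposition is more modular: the combinatorial separability criterion and the geometric pattern-extraction are cleanly separated, and the latter involves only two leaders rather than the global construction. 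What the paper's version buys in exchange is that its explicit curve construction foreshadows the grid-based curves used by the dynamic program in Section~\ref{sec:two-sided-algorithm}; your easy direction, on the other hand, is more rigorous than the paper's one-line ``obviously,'' since you derive the contradiction directly from monotonicity of the separating curve.
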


\begin{pf}
  Obviously, the planar solution is not $xy$-separated if one of these
  patterns occurs.  Let us assume that none of these patterns exists.
  We construct an $xy$-monotone curve~$C$ from the top-right corner
  of~$R$ to its bottom-left corner.  We move to the left whenever
  possible, and down only when we reach the $x$-coordinate of a
  site~$p$ that is connected to the top, or when we reach the
  $x$-coordinate of a port of a top label, labeling a site~$p$.  If we
  have to move down, we move down as far as necessary to avoid the
  corresponding leader, namely down to the $y$-coordinate of~$p$.
  Finally, when we reach the left boundary of~$R$, we move down to the
  bottom-left corner of~$R$.  If $C$ is free of crossings, then we have
  found an $xy$-separating curve.  (For an example,
  see curve~$C$ in Fig.~\ref{fig:xysep}.)

  Assume for a contradiction, that a crossing arises during the
  construction, and consider the topmost such crossing.  Note that, by
  the construction of~$C$, crossings can only occur with leaders that
  connect a site~$p$ to a right port~$r$.  We distinguish two cases,
  based on whether the crossing occurs on a horizontal or a vertical
  segment of the curve~$C$.

  If~$C$ is crossed on a vertical segment, then this segment belongs to
  a leader connecting a site~$q$ to a top port~$t$, and we have
  reached the $x$-coordinate of either the port or the site.
  Had we, however, reached the $x$-coordinate of the port, this
  would imply a crossing between~$\lambda(p,r)$ and~$\lambda(q,t)$.
  Thus, we have reached the $x$-coordinate of~$q$.  This means
  that~$p$ lies to the left of and above~$q$, and we have found one
  of the patterns~\subref{fig:pattern-1} and~\subref{fig:pattern-2};
  see Fig.~\ref{fig:patterns}.

  If~$C$ is crossed on a horizontal segment, then~$p$ must lie
  above~$r$.  Otherwise, there would be another crossing of~$C$ with
  the same leader, which is above the current one.  This would
  contradict the choice of the topmost crossing.  Consider the
  previous segment of~$C$, which is responsible for reaching the
  $y$-coordinate of the current segment.  This vertical segment belongs
  to a leader connecting a site~$q$ to a top port~$t$.  Since leaders
  do not cross, $q$ is to the right of~$p$, and the crossing
  on~$C$ implies that~$q$ is below~$p$.  We have found one of the
  patterns~\subref{fig:pattern-3} and~\subref{fig:pattern-4}; see
  Fig.~\ref{fig:patterns}.
\end{pf}

Observe that patterns~\subref{fig:pattern-1}
and~\subref{fig:pattern-2} can be transformed into
patterns~\subref{fig:pattern-3} and~\subref{fig:pattern-4},
respectively, by mirroring the instance diagonally.  Next, we prove
constructively that, by rerouting pairs of leaders, any planar
solution can be transformed into an $xy$-separated planar solution.

\begin{proposition}\label{prop:solut}
  If there exists a planar solution~$\mathcal L$ to \textsc{Two-Sided
    Boundary Labeling with Adjacent Sides}, then there exists an
  $xy$-separated planar solution $\mathcal L'$ with~$\length(\mathcal
  L') \le \length(\mathcal L)$, $|\mathcal L'|_x \le |\mathcal L|_x$,
  and $|\mathcal L'|_y \le |\mathcal L|_y$.
\end{proposition}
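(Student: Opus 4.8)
The goal is to transform an arbitrary planar solution $\mathcal{L}$ into an $xy$-separated one without increasing any of the three length measures. By Lemma~\ref{lem:xysep}, a planar solution fails to be $xy$-separated exactly when it contains one of the four patterns P1--P4. My plan is therefore to argue that each occurrence of a forbidden pattern can be removed by a local rerouting that both destroys the pattern and is covered by the machinery of Lemma~\ref{lem:boxlemma}, so that planarity is preserved and the length measures do not increase. The overall structure will be: (1) identify a forbidden pattern; (2) apply a suitable rerouting via the box lemma to a set of same-type sites; (3) argue that this makes measurable progress toward $xy$-separability; and (4) iterate until no pattern remains, whence Lemma~\ref{lem:xysep} certifies the result is $xy$-separated.

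\textbf{The key steps.}
First I would observe, using the diagonal-mirroring symmetry noted just before the proposition, that it suffices to handle patterns P3 and P4 (equivalently P1 and P2); this halves the case analysis. Next, given a forbidden pattern involving a right-site $p$ and a top-site $q$ with the offending coordinate relationship, I would select the two leaders in question and consider the rectangle $K$ spanned by their bends (or a slightly enlarged rectangle containing all bends of the relevant same-type sites). The crucial point is that the two sites causing a pattern are \emph{not} of the same type---one goes to the top, one to the right---so the box lemma does not directly apply to the offending pair. To handle this I would instead reroute within a single type: take all top sites (say) whose leaders participate in patterns, collect their bends inside a rectangle $K$ that no other leader meets, and rematch them via Lemma~\ref{lem:boxlemma}. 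Properties (iii) and (iv) of that lemma immediately give $|\mathcal{L}'|_x = |\mathcal{L}|_x$, $|\mathcal{L}'|_y = |\mathcal{L}|_y$, and $\length(\mathcal{L}') \le \length(\mathcal{L})$; property (ii) preserves planarity outside $K$, and (i) removes the local crossings. I would then define a potential function---for instance, the number of pattern occurrences, or a lexicographic count weighted by position---and show each rerouting strictly decreases it, guaranteeing termination after finitely many steps.

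\textbf{The main obstacle.}
The hard part will be showing that a single rerouting actually reduces the total number of forbidden patterns rather than merely relocating or multiplying them: resolving one crossing by swapping a matching could in principle create a new pattern elsewhere. The cleanest way around this, which I would pursue, is to avoid an iterative crossing-by-crossing argument altogether and instead use the $xy$-separating curve $C$ constructed in the proof of Lemma~\ref{lem:xysep} as a global guide. Concretely, once I know (by the absence of patterns, or after forcing their absence) that a curve $C$ exists, $C$ partitions $R$ into a top region and a right region; I would then apply Lemma~\ref{lem:boxlemma} once to the top sites with $K$ the top region and once to the right sites with $K$ the right region, each rematching producing a clean one-sided layout on its side of $C$. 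The real content is verifying that the top and right leaders can be made to stay on their respective sides of $C$ simultaneously, which reduces to checking that each rerouting keeps bends within its region and that the projected one-sided instances solved by the Benkert et al.\ algorithm yield leaders confined to that region---so the inequalities on $\length$, $|\cdot|_x$, and $|\cdot|_y$ follow from the box lemma applied independently on each side.
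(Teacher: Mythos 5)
There is a genuine gap, and it sits at the core of your plan. Your step (2) proposes to destroy a pattern by rematching sites \emph{of a single type} via Lemma~\ref{lem:boxlemma}. This can never work: whether a planar solution is $xy$-separated depends only on which sites are assigned to top labels and which to right labels, not on how the sites of one type are matched among the labels of that type. Indeed, if a right-labeled site~$p$ and a top-labeled site~$q$ satisfy $x(p)<x(q)$ and $y(p)>y(q)$ (the situation underlying P1--P4), then any $xy$-monotone curve from the top-right to the bottom-left corner that keeps~$q$ above it lies below~$y(q)$ at~$x(q)$, hence also at~$x(p)<x(q)$, so~$p$ lies above the curve as well --- no separating curve exists for \emph{any} leader geometry realizing this type assignment. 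Since Lemma~\ref{lem:boxlemma} rematches~$P'$ only to labels~$L'$ of the same type, it preserves the type assignment and therefore cannot remove the obstruction. The step you explicitly rule out --- swapping the offending pair across types --- is exactly the step the paper performs: it reroutes~$p$ to the top port~$t$ and~$q$ to the right port~$r$. This swap is not covered by the box lemma and need not be; its role is that it strictly decreases the total leader length (by at least $2(y(p)-y(q))$) while not increasing $|\cdot|_x$ or $|\cdot|_y$, and Lemma~\ref{lem:boxlemma} is used only afterwards, to repair the crossings the swap creates.

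Two further ingredients of the paper's proof are missing, and both address obstacles you yourself name. First, termination: instead of counting patterns (your worry that a rerouting might create new patterns elsewhere is justified), the paper takes~$\mathcal L$ to be a planar solution of \emph{minimum total leader length} and derives a contradiction --- if a pattern existed, the swap plus cleanup would yield a planar solution of strictly smaller length. Second, applicability of the cleanup: the pattern is chosen \emph{extremally} ($p$ rightmost among all patterns, then~$q$ bottommost), which is what guarantees that the rectangles~$A'$ (spanned by~$\bend(q,t)$ and~$p$) and~$B'$ (spanned by~$q$ and~$\bend(p,r)$), where the new crossings can occur, are met only by leaders of a single type with bends inside them, so that Lemma~\ref{lem:boxlemma} applies there. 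Finally, your fallback plan is circular: you propose to use the curve~$C$ of Lemma~\ref{lem:xysep} as a global guide ``once I know (by the absence of patterns, or after forcing their absence) that a curve~$C$ exists,'' but the existence of~$C$ is \emph{equivalent} to the absence of patterns, which is precisely what the proposition must establish; and once~$C$ exists, the solution is already $xy$-separated, so there is nothing left to fix.
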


\begin{pf}
  Let~$\mathcal L$ be a planar solution of minimum total leader
  length.  We show that $\cal L$ is $xy$-separated.  Assume, for the
  sake of contradiction, that ~$\cal L$ is not $xy$-separated.  Then,
  by Lemma~\ref{lem:xysep}, $\cal L$ contains one of the
  patterns~\subref{fig:pattern-1}--\subref{fig:pattern-4}.  Without
  loss of generality, we can assume that the pattern is of
  type~\subref{fig:pattern-3} or~\subref{fig:pattern-4}.  Otherwise,
  we mirror the instance diagonally.

  Consider all patterns~$(p,q)$ in $\mathcal L$ of
  type~\subref{fig:pattern-3} or~\subref{fig:pattern-4} such that $p$
  is a right site (with port~$r$) and~$q$ is a top site (with
  port~$t$).  Among all such patterns, consider the ones where~$p$ is
  rightmost and among these pick one where~$q$ is bottommost.
  Let~$A$ be the rectangle spanned by~$p$ and~$t$; see
  Fig.~\ref{fig:eliminate}..  Let~$A'$ be the rectangle spanned
  by~$\bend(q,t)$ and~$p$.  Let~$B$ be the rectangle spanned by~$q$
  and~$r$.  Let~$B'$ be the rectangle spanned by~$q$ and~$\bend(p,r)$.
  Then we claim the following:
  \begin{compactenum}[(i)]
  \item Sites in the interiors of~$A$ and~$A'$ are connected to the top.
  \item Sites in the interiors of~$B$ and~$B'$ are connected to the
    right.
  \end{compactenum}
  Property~(i) is due to the choice of~$p$ as the rightmost site
  involved in such a pattern.  Similarly, property~(ii) is due to the
  choice of~$q$ as the bottommost site that forms a pattern with~$p$.
  This settles our claim.

  Our goal is to change the labeling by rerouting~$p$ to~$t$ and~$q$
  to~$r$, which decreases the total leader length, but may introduce
  crossings.  We then use Lemma~\ref{lem:boxlemma} to remove the
  crossings without increasing the total leader length.
  Let~$\mathcal L''$ be the labeling obtained from~$\mathcal L$ by
  rerouting~$p$ to~$t$ and~$q$ to~$r$.  We have~$|\mathcal L''|_y \le
  |\mathcal L|_y - (y(p) - y(q))$ and~$|\mathcal L''|_x = |\mathcal
  L|_x - (x(q) - x(p))$.  Moreover,~$\length(\mathcal L'') \le
  \length(\mathcal L) - 2 (y(p) - y(q))$, as at least twice the
  vertical distance between~$p$ and~$q$ is saved; see
  Fig.~\ref{fig:eliminate}.  Since the original labeling was planar,
  crossings may only arise on the horizontal segment of~$\lambda(p,t)$
  and on the vertical segment of~$\lambda(q,r)$.

\begin{figure}
  \centering
    \begin{subfigure}{.3\linewidth}
      \centering
      \includegraphics[page=2]{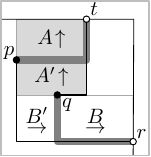}
      \caption{pattern~\subref{fig:pattern-3}}
    \end{subfigure}
    \begin{subfigure}{.3\linewidth}
      \centering
      \includegraphics[page=1]{eliminate}
      \caption{pattern~\subref{fig:pattern-4}}
    \end{subfigure}
    \caption{Types (top$\,$=$\,\uparrow$ / right$\,$=$\,\to$) of the
      sites inside rectangles~$A$, $A'$, $B$,
      and~$B'$.}
    \label{fig:eliminate}
\end{figure}

  By properties (i) and (ii), all leaders that cross the new
  leader~$\lambda(p,t)$ have their bends inside~$A'$, and all leaders
  that cross the new leader~$\lambda(q,r)$ have their bends inside~$B'$.
  Thus, we can apply Lemma~\ref{lem:boxlemma} to the rectangles~$A'$ and~$B'$
  to resolve all new crossings.  The resulting solution~$\mathcal L'$
  is planar and has length less than~$\length(\mathcal L)$.  This is a
  contradiction to the choice of~$\mathcal L$.
\end{pf}

Since every solvable instance of \textsc{Two-Sided Boundary Labeling
  with Adjacent Sides} admits an $xy$-separated planar solution,
it suffices to search for such a solution.  Moreover, an
$xy$-separated planar solution that
minimizes the total leader length is a solution of minimum length.  In
Lemma~\ref{lem:condit} we provide a necessary and sufficient criterion
to decide whether, for a given $xy$-monotone curve~$C$, there is a
planar solution that is separated by~$C$.  We denote the region of~$R$
above~$C$  by~$R_\TT$ and the region of~$R$ below~$C$
by~$R_\RR$. We do not include $C$ in either $R_\TT$ or~$R_\RR$, so these 
regions are open at~$C$.

For any point~\ADD{$a\in R$}, we define the
rectangle~\ADD{$R_a$}, spanned by the top-right corner of~$R$ and~\ADD{$a$}.
We define~\ADD{$R_a$} such that it is closed but does \emph{not} contain
its top-left corner.  In particular, we consider the port of a
top label as contained in~\ADD{$R_a$}, only if it is not the upper left
corner.

A rectangle~\ADD{$R_a$} is \emph{valid} if the number of sites of~$P$
above~$C$ that belong to~\ADD{$R_a$} is at least as large as the
number of ports on the top side of~\ADD{$R_a$}.  \ADD{The central idea is
that the labels on the top side of a valid rectangle~$R_a$ can be
connected to the sites in~$R_a$ by leaders that are completely
contained inside that rectangle.}
We are now ready to present the \emph{strip condition}.

\begin{condition}
  The \emph{horizontal strip condition} of the point~\ADD{$b\in C$} is satisfied if
  there exists a point~\ADD{$a \in R_\TT$} with~\ADD{$y(a)=y(b)$}
  and~\ADD{$x(a)\le x(b)$} such that~\ADD{$R_{a}$} is valid.
\end{condition}

Without loss of generality we may assume that the curve~$C$ is rectilinear.\REMOVE{ We
observe that for the horizontal strip condition, it is sufficient
to consider the horizontal segments of~$C$.} \ADD{The condition is named after the horizontal segments through points in~$C$.}

We now prove that, for a given $xy$-monotone curve~$C$ connecting the
top-right corner to the bottom-left corner of~$R$, there exists a
planar solution in~$R_\TT$ for the top labels if and only if
all points of~$C$ satisfy the strip condition.

\begin{lemma}
  \label{lem:condit}
  Let~$C$ be an $xy$-monotone curve from the top-right corner of~$R$
  to the bottom-left corner of~$R$.  Let~$P' \subseteq P$ be the sites
  that are in~$R_\TT$.  There is a planar solution that uses all top
  labels of~$R$ to label sites in~$P'$ in such a way that all leaders
  are in~$R_\TT$ if and only if each point of~$C$
  satisfies the strip condition.
\end{lemma}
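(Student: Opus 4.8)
The statement is an equivalence, so I would prove the two directions separately. The forward direction (a planar solution forces the strip condition) is a counting argument, while the converse (the strip condition yields a planar solution) is the constructive heart of the lemma and, I expect, the main obstacle. Throughout I use the observation made just after the condition: it suffices to verify the strip condition at the left endpoints of the horizontal segments of $C$, since moving $q$ rightward along a segment only enlarges the admissible range for $p$, so the left endpoints are the binding points.

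\emph{Necessity.} Assume a planar labeling $\mathcal{L}$ with all top leaders inside $R_\TT$, and fix a point $q\in C$ on a horizontal segment at height $y_0$. The plan is to exhibit a single valid rectangle $R_p$. The key geometric observation is that a top leader cannot cross $C$: if a port $t=(x_t,H)$ is matched to a site $s$ whose vertical leader segment meets height $y_0$ (that is, $y(s)<y_0$ and $x_t$ lies over the horizontal segment of $C$), then that vertical segment, or the horizontal segment running at height $y(s)<y_0$, would pass through or underneath the horizontal segment of $C$ and leave $R_\TT$ --- a contradiction. Hence every top port lying to the right of $q$ is matched to a site at height at least $y_0$. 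I would then take $p$ at height $y_0$ and push $x(p)$ leftward to the smallest $x$-coordinate reached by any of these leaders (the standard Hall ``closure''): this closes up the set of ports contained in $R_p$ so that each of them is matched to a site that also lies in $R_p$. Counting the matched pairs yields at least as many sites above $C$ in $R_p$ as top ports in $R_p$, so $R_p$ is valid and the strip condition at $q$ holds. The routine part here is the case analysis verifying that no leader of a port in $R_p$ escapes $R_p$.

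\emph{Sufficiency.} Assume the strip condition holds at every point of $C$; I would build a planar labeling by induction on the number $m$ of top ports, guided by the family of valid rectangles certifying the condition at the left endpoints of the horizontal segments of $C$ (cf.\ Fig.~\ref{fig:strips:onesided:case3}). For the inductive step I peel off one port--site pair: using the valid rectangle at the topmost binding segment, I would match a suitable extreme port (for instance the rightmost one, whose vertical segment is the rightmost and hence unobstructed from the right) to a site inside that rectangle chosen so that the connecting $po$-leader stays above $C$; validity guarantees that such a site exists. Inside a genuine rectangle containing at least as many sites as top ports, the one-sided algorithm of Benkert et al.~\cite{bhkn-amcbl-09} (equivalently the one-sided result of Bekos et al.~\cite{bksw-blmea-07}) already produces a crossing-free assignment, which I use to realize the matching within each rectangle; the staircase shape of $R_\TT$ is absorbed by the nesting of the rectangles, and monotonicity of $C$ ensures the routed leaders remain in $R_\TT$.

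\emph{Main obstacle.} The crux is the inductive bookkeeping: showing that removing the peeled port--site pair leaves a residual instance that still satisfies the strip condition at \emph{every} point, so that the induction goes through; equivalently, that the nested valid rectangles assemble into one globally planar routing without leaders of different rectangles interfering. This is exactly where the interaction between the orthogonally convex (staircase) shape of $R_\TT$ and the counting condition is delicate: one must argue that deleting a matched site from a valid rectangle cannot destroy validity of any other rectangle in the family, which relies on choosing the peeled site as an innermost, extreme one so that each affected rectangle loses exactly one site and one port. Once this invariant is in place, planarity and containment in $R_\TT$ follow from the one-sided subroutine, completing the proof of Lemma~\ref{lem:condit}.
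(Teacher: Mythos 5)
Your necessity direction is sound and close in spirit to the paper's: both rest on the observation that planarity prevents the leaders of ports inside a candidate rectangle from escaping it. One caveat: your closure is not achieved by a single leftward push. After moving $x(p)$ to the leftmost site matched to a port right of $q$, the rectangle $R_p$ acquires new ports whose matched sites may lie below the horizontal line through $q$ (yet above $C$, since $C$ descends to the left of $q$), so you must iterate, and at each round invoke planarity again (the horizontal segment of an already-included leader blocks any newly included port from reaching a site below the strip). The paper avoids the iteration altogether by seeding with the \emph{rightmost crossing} of a top leader with the horizontal line through $q$; for that choice planarity gives closure in one step.

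The sufficiency direction has a genuine gap: the inductive invariant you name is never established, and the two formulations you call ``equivalent'' are not. The counting half is in fact easy for your choice of peeled port: since the top side of every rectangle $R_p$ extends to the right boundary of $R$, any $R_p$ that contains a port contains the rightmost port $t$; hence removing $t$ together with any matched site preserves validity of every rectangle (a rectangle containing the site but not $t$ contains no ports at all). What does not follow is the geometric half: the strip condition for the residual instance is a statement about $C$ and the \emph{remaining} sites and ports only, so the planar solution produced by the inductive call is free to cross the leader you have already drawn. Concretely, if you match the rightmost port $t$ to a site $s$ with $x(s)<x(t)$, then any later port whose $x$-coordinate lies strictly between $x(s)$ and $x(t)$ and is matched to a site below $y(s)$ has a vertical segment crossing the horizontal segment of $\lambda(s,t)$ --- and choosing $s$ topmost or ``innermost'' does not exclude such ports (if $s$ is topmost, \emph{every} later port between $x(s)$ and $x(t)$ creates such a crossing). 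This interference is exactly what the paper's proof is organized around: it does not peel single pairs but processes whole strips, maintaining together with the partial matching $\mathcal L_i$ an $xy$-monotone curve $C_i$ inside the current rectangle $R_{p_i}$ that separates labeled from unlabeled sites and is crossed by no leader, and it defines the nested rectangles by $x(p_i)=\min_{j\le i}x(p'_j)$ so that every newly used port lies strictly to the left of all previously used ports while every newly labeled site lies below $C_{i-1}$; these two facts are what rule out crossings between stages. Without this (or an equivalent) mechanism your induction does not close; likewise your assertion that ``validity guarantees'' a site whose leader to $t$ stays above $C$ is a counting-to-geometry leap, and bridging it is precisely the content of the paper's construction.
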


\begin{pf}
For the proof we call a
region~$S\subseteq R$ \emph{balanced} if it contains the same number
of sites as it contains ports.
  To show that the conditions are necessary, let~$\mathcal L$ be a
  planar solution for which all top leaders are above~$C$.  Consider a
  point~\ADD{$b\in C$}.  If \ADD{$y(p)\ge y(b)$} for all sites~\ADD{$p\in P'$},
  rectangle~\ADD{$R_{a}$} with \ADD{$a=(0,y(b))$} is clearly valid, and thus the
  strip condition for~\ADD{$b$} is satisfied.  Hence, assume that there is a
  site~\ADD{$p \in P'$} with~\ADD{$y(p)<y(b)$} that is labeled by a top label;
  see Fig.~\ref{fig:strips:onesided:case1}.  Then, the vertical
  segment of this leader crosses the horizontal line~$h$
  through~\ADD{$b$}. Let~\ADD{$a$} denote the rightmost such crossing of a leader
  of a site in~$P'$ with~$h$.  We claim that~\ADD{$R_{a}$} is valid.  To see
  this, observe that all sites of~$P'$ top-right of~\ADD{$a$} are contained
  in~\ADD{$R_{a}$}.  Since no leader may cross the vertical segment
  defining~\ADD{$a$}, the number of sites in~\ADD{$R_{a}\cap R_\TT$} is balanced,
  i.e., \ADD{$R_{a}$} is valid.

  Conversely, we show that if the conditions are satisfied, then a
  corresponding planar solution exists.  For each horizontal segment
  of~$C$ consider the horizontal line through the segment. We denote
  the part of these lines within~$R$ by~$h_1,\dots,h_l$,
  respectively, and let~$h_0$ be the top side of~$R$. The
  line segments~$h_1,\dots,h_l$ partition~$R_\TT$ into~$l$
  strips, which we denote by~$S_1,\dots,S_l$ from top to bottom, such
  that strip~$S_i$ is bounded by~$h_i$ from below
  for~$i=1,\dots,l$; see Fig.~\ref{fig:strips:onesided:case2}.
  Additionally, we define~$S_0$ to be the empty strip that coincides
  with~$h_0$.  Let~$S_{k}$ be the last strip that contains sites of~$P'$.
  For~$i=0,\dots,k-1$, let~\ADD{$a_i'$} denote the rightmost point
  of~$h_i\cap R_\TT$ such that~\ADD{$R_{a_i'}$} is valid.
  Such a point exists since the leftmost point of~$h_i\cap C$
  satisfies the strip condition.
  We define~\ADD{$a_i$} to be the point on~$h_i \cap R_\TT$, whose
  $x$-coordinate is~\ADD{$\min_{j \le i}\{x(a_j')\}$}.  Note that~\ADD{$R_{a_i}$}
  is a valid rectangle, as, by definition, it completely contains some
  valid rectangle~\ADD{$R_{a_j'}$ with~$x(a_j') = x(a_i)$}.  Also by
  definition the sequence formed by the points~\ADD{$a_i$} has
  decreasing~$x$-coordinates, i.e., the \ADD{$R_{a_i}$} grow to the left;
  see Fig.~\ref{fig:strips:onesided:case3}.

\begin{figure}
    \begin{subfigure}{.32\linewidth}
      \centering
      \includegraphics[page=1]{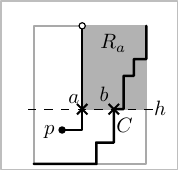}
      \caption{}\label{fig:strips:onesided:case1}
    \end{subfigure}
    \hfill
    \begin{subfigure}{.32\linewidth}
      \centering
      \includegraphics[page=2]{strips_one_sided}
      \caption{}\label{fig:strips:onesided:case2}
    \end{subfigure}
    \hfill
    \begin{subfigure}{.32\linewidth}
      \centering
      \includegraphics[page=3]{strips_one_sided}
      \caption{}\label{fig:strips:onesided:case3}
    \end{subfigure}
    \caption{The strip condition.
      \subref{fig:strips:onesided:case1})~The
      horizontal strip condition of~\ADD{$b$} is satisfied by~\ADD{$a$}.
      \subref{fig:strips:onesided:case2})~The
      horizontal segments of~$C$ partition the strips~$S_0,
      S_1,\ldots,S_k$.
      \subref{fig:strips:onesided:case3})~Constructing a planar
      labeling from a sequence of valid rectangles.}
\end{figure}

  We prove inductively that, for each~$i=0,\dots,k$, there is a planar
  labeling~$\mathcal L_i$ that matches the labels on the top side
  of~\ADD{$R_{a_i}$} to points contained in~\ADD{$R_{a_i}$}, in such a way that
  there exists an~$xy$-monotone curve~$C_i$ from the top-left
  to the bottom-right corner of~\ADD{$R_{a_i}$} that separates the
  labeled sites from the unlabeled sites without intersecting any
  leaders.  Then~$\mathcal L_k$ is the required labeling.

  For~$i=0$,~$\mathcal L_0 = \emptyset$ is a planar solution.
  Consider a strip~$S_i$ with~$0 < i \le k$; see
  Fig.~\ref{fig:strips:onesided:case3}.  By the induction hypothesis,
  we have a curve~$C_{i-1}$ and a planar labeling~$\mathcal L_{i-1}$,
  which matches the labels on the top side of~\ADD{$R_{a_{i-1}}$} to
  the sites in~\ADD{$R_{a_{i-1}}$} above~$C_{i-1}$.  To extend it to a
  planar solution~$\mathcal L_i$, we additionally need to match the
  remaining labels on the top side of~\ADD{$R_{a_i}$} and construct a
  corresponding curve~$C_i$.  Let~$P_i$ denote the set of unlabeled
  sites in~\ADD{$R_{a_i}$}.  By the validity of~\ADD{$R_{a_i}$}, this
  number is at least as large as the number of unused ports at the top
  side of~\ADD{$R_{a_i}$}. \ADD{We arbitrarily match these ports to
    the topmost sites of~$P_i$ that are not labeled in~$\mathcal
    L_{i-1}$. We denote the resulting labeling by $\mathcal L'_i$. We
    observe that no leader of~$\mathcal L'_i$ crosses the
    curve~$C_{i-1}$, and hence such leaders cannot cross leaders
    in~$\mathcal L_{i-1}$. Let~$h$ be the topmost horizontal line such
    that all labeled sites of~$\mathcal L'_i$ lie above $h$. Further,
    let~$K$ be the rectangle that is spanned by the top-left corner
    of~$R_{a_{i-1}}$ and the intersection of~$h$ with the left side
    of~$R_{a_i}$. Since the ports of~$\mathcal L'_i$ lie on the top
    side of~$K$, any leader's bend of $\mathcal L'_i$ lies in~$K$. We
    apply Lemma~\ref{lem:boxlemma} on $\mathcal L'_i$ to obtain a
    planar labeling $\mathcal L''_i$, which has no crossings with
    $\mathcal L_{i-1}$. Hence, the set $\mathcal L_i=\mathcal
    L''_{i}\cup \mathcal L_{i-1}$ is the required labeling.}

  It remains to construct the curve~$C_i$.  For this, we start at the
  top-left corner of~\ADD{$R_{a_i}$} and move down vertically, until
  we have passed all labeled sites.  We then move right until we
  either hit~$C_{i-1}$ or the right side of~$R$.  In the former case,
  we follow~$C_{i-1}$ until we arrive at the right side of~$R$.
  Finally, we move down until we arrive at the bottom-right corner
  of~\ADD{$R_{a_i}$}.  Note that all labeled sites are above~$C_i$,
  unlabeled sites are below~$C_i$, and no leader is crossed by~$C_i$.
  This is true since we first move below the new leaders and then
  follow the previous curve~$C_{i-1}$.
\end{pf}

A symmetric strip condition (with vertical strips) can be
obtained for the right region~$R_\RR$ of a partitioned instance.  The
characterization is completely symmetric.

In the following we observe two properties of the strip condition. The first observation
states that the horizontal strip condition
at~$(x,y)$ is independent of the exact shape of the curve between the top-right corner~$r$ of $R$
and~$(x,y)$, as
long as the number of sites above the curve remains the same.  This is crucial for
using dynamic programming to test the existence of a suitable curve.
The second observation states that the horizontal strip condition can
only be violated when the curve passes the $x$-coordinate of a top site.  This
enables us to discretize the problem.

\begin{observation}\label{obs:stripnumber}
  The horizontal strip condition for a point~\ADD{$a\in C$} depends only on the number
  of sites in~\ADD{$R_a$} above~$C$, in the following sense:
  Let~$C$ and~$C'$ be two $xy$-monotone curves from~$r$ to~\ADD{$a$}
  with~$u$ sites in~\ADD{$R_a$} above~$C$ and~$C'$, respectively.
  Then, \ADD{$a$} satisfies the strip condition for~$C$ if and only if it
  satisfies the strip condition for~$C'$.
\end{observation}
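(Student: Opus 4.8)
The plan is to split the rectangle $R_q$ off from $R_p$ and to argue that the only curve-dependent contribution to the strip condition is the number $u$ of sites of $R_q$ lying above the curve. Fix the point $q\in C$ and consider any candidate point $p$ with $y(p)=y(q)$ and $x(p)\le x(q)$ that appears in the strip condition. Since $p$ lies weakly to the left of $q$ at the same height, the rectangle $R_p$ contains $R_q$, and the difference $R_p\setminus R_q$ is the horizontal strip $\{(x,y): x(p)\le x< x(q),\ y\ge y(q)\}$ (truncated at the top side of $R$). I would write the number of sites of $R_p$ above $C$ as the number of sites of $R_q$ above $C$ plus the number of sites of $R_p\setminus R_q$ above $C$.

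The key step is a monotonicity observation about the curve. Because $C$ runs from $r=(W,H)$ to $q$ and is $xy$-monotone, both coordinates decrease weakly along $C$; hence every point of $C$ satisfies $x\ge x(q)$ and $y\ge y(q)$, so $C$ is contained in $R_q$. Consequently the strip $R_p\setminus R_q$, which lies strictly to the left of the vertical line $x=x(q)$, does not meet $C$ and sits entirely to its upper left, i.e., in $R_\TT$. Therefore every site of $R_p\setminus R_q$ is above $C$, and the number of such sites depends only on $p$ and the instance, not on the shape of $C$. By the same containment, every candidate point $p$ (a point at height $y(q)$ strictly left of $q$) lies in $R_\TT$ for both curves, so the set of admissible witnesses is itself curve-independent.

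Putting the pieces together, for a fixed candidate $p$ the number of sites of $R_p$ above $C$ equals $u$ plus the curve-independent count of sites in $R_p\setminus R_q$, while the number of top ports on $R_p$ is a purely geometric quantity determined by $p$ and the instance. Hence, given $u$, the validity of each rectangle $R_p$ is the same for $C$ and for $C'$, and the collection of witnesses over which the strip condition quantifies is the same as well. It follows that $q$ satisfies the strip condition for $C$ if and only if it satisfies it for $C'$, which is exactly the claim.

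I expect the monotonicity step, namely the containment $C\subseteq R_q$ and the resulting placement of the strip $R_p\setminus R_q$ in $R_\TT$, to be the crux; once it is in place, the site and port counts split additively and the remainder is bookkeeping. A minor point requiring care is the convention that $R_p$ is closed but omits its top-left corner, together with the corresponding treatment of a top port sitting there; since this convention is applied identically to $C$ and $C'$, it does not affect the equivalence.
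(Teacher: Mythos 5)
Your proof is correct: the containment $C \subseteq R_q$ forced by $xy$-monotonicity, the resulting fact that every site (and every witness point $p$) in $R_p \setminus R_q$ lies above any curve from $r$ to $q$, and the additive split of the validity count into $u$ plus curve-independent quantities is exactly the right reasoning. The paper states this as an Observation without proof, but its implicit justification---visible later when the strip-condition test is implemented as $u + B_\TT[s,t] \ge \tau_{s,t}$ with $B_\TT[s,t]$ and $\tau_{s,t}$ curve-independent---rests on the same decomposition you give, so your write-up is essentially the intended argument made explicit.
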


\begin{observation}\label{obs:stripwalk}
  Let~\ADD{$a,b\in C, x(a)\le x(b)$} such that there is no top site~$\ell$
  with~\ADD{$x(a)<x(\ell)\le x(b)$}. Then, \ADD{$a$} satisfies the horizontal strip
  condition for~$C$ if and only if~\ADD{$b$} satisfies the horizontal strip condition
  for~$C$.
\end{observation}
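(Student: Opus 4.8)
The plan is to turn the horizontal strip condition into a single scalar inequality and then show that this inequality is insensitive to sliding the test point across a top-label-free horizontal interval. For a point $z\in C$ and a position $x\le x(z)$, validity of $R_{(x,y(z))}$ says exactly that the number of sites above $C$ inside that rectangle is at least $T(x)$, the number of top ports with $x$-coordinate at least $x$; hence $z$ satisfies the strip condition if and only if $\min_{x\le x(z)}\bigl(T(x)-s_z(x)\bigr)\le 0$, where $s_z(x)$ counts the sites above $C$ in $R_{(x,y(z))}$. The key simplification is that the part of $C$ traversed after $z$ never rises above the height $y(z)$, so every site whose $x$-coordinate is below $x(z)$ and whose height is at least $y(z)$ is automatically above $C$; thus $s_z(x)$ splits as a curve-independent count of such ``left'' sites plus the number $n_z$ of sites above $C$ in $R_z$. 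This is precisely the decoupling furnished by Observation~\ref{obs:stripnumber}, and I would use it to treat the strip condition as a statement about $n_z$ together with the fixed positions of sites and ports alone.

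By the remark preceding the statement it suffices to treat the case where $q$ and $q'$ lie on a common horizontal segment of $C$, so that $y(q)=y(q')$ and only the horizontal sweep from $x(q')$ down to $x(q)$ must be understood. With the test height held fixed, write the reformulated condition as $n_z\ge g_z$, where $g_z=\min_{x\le x(z)}\bigl(T(x)-a_z(x)\bigr)$ and $a_z(x)$ is the curve-independent number of sites with $x$-coordinate in $[x,x(z))$ and height at least $y(z)$. Let $D$ be the number of sites above $C$ whose $x$-coordinate lies in $[x(q),x(q'))$. Moving the separator leftward from $q'$ to $q$ enlarges the upper-right rectangle by exactly these $D$ sites, so $n_q=n_{q'}+D$; the same $D$ sites are added to $a_z(x)$ for every $x\le x(q)$, and since the hypothesis forbids a top label in $(x(q),x(q')]$ the port count $T$ is constant on $(x(q),x(q')]$, so the extra positions available to $q'$ cannot create a smaller minimum. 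A short computation then yields $g_q=g_{q'}+D$, whence $n_q\ge g_q$ if and only if $n_{q'}\ge g_{q'}$; that is, $q$ and $q'$ satisfy the strip condition simultaneously.

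I expect the delicate point to be isolating which crossings actually flip the condition. As the sweep proceeds, the balance $T(x)-s(x)$ jumps up by one whenever it passes a top port (so $T$ drops) and down by one whenever it passes a site above $C$; the accounting above shows that each site crossing is compensated by the matching change of $n_z$ and therefore never alters the truth value, whereas a port crossing can, which is exactly why the excluded breakpoints in $(x(q),x(q')]$ are the $x$-coordinates of top labels $\ell$. The main work is to make this compensation precise in the general case, where the admissible range $x\le x(z)$ and the bottom edge $y(z)$ of the candidate rectangles move simultaneously; here I would again lean on Observation~\ref{obs:stripnumber} to strip away the dependence on the shape of $C$, reducing everything to counting sites and ports in a single horizontal strip and noting that on a vertical segment the test height may be lowered to the next horizontal segment without passing any port's $x$-coordinate.
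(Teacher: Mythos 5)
The paper states this as an \emph{Observation} and never proves it, so your argument has to stand on its own. Its core computation is sound and is surely the intended counting argument: for two points of $C$ \emph{at the same height}, the decomposition $s_z(x)=a_z(x)+n_z$, the identities $n_q=n_{q'}+D$ and $g_q=g_{q'}+D$, and the remark that positions $x\in(x(q),x(q')]$ cannot lower the minimum when no top port lies in that interval, together give exactly the claimed equivalence. Your silent correction of ``top site $\ell$'' to ``top label/port'' is also the right reading; it is the only one consistent with how the observation is invoked in Case~2 of the proof of Lemma~\ref{lem:recurrence}, where the edge being traversed \emph{does} pass a site.

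The genuine gap is the reduction in your second paragraph, echoed in the last sentence of your third: it does \emph{not} suffice to treat $q,q'$ on a common horizontal segment, and the step ``on a vertical segment the test height may be lowered to the next horizontal segment without passing any port's $x$-coordinate'' is precisely where the argument breaks. Lowering the bottom edge of the candidate rectangles from $y(q')$ to $y(q)$ enlarges every rectangle by a horizontal slab, and the sites in that slab to the left of $x(q)=x(q')$ lie above every continuation of $C$, so they are added to $s_z(x)$ with no compensating change in $T$; this gain is one-directional and is not governed by top-port events at all. Concretely: one top port at $x=0.9W$, one site at $(0.5W,0.7H)$, and $C$ running $(W,H)\to(W,0.9H)\to(0.8W,0.9H)\to(0.8W,0.5H)\to(0.4W,0.5H)\to(0.4W,0)\to(0,0)$. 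Then $q=(0.8W,0.5H)$ satisfies the horizontal strip condition (the rectangle spanned by $(0.35W,0.5H)$ and $r$ contains that site above $C$ and one top port), while $q'=(0.8W,0.9H)$ does not (every admissible rectangle contains the port but no site above $C$), even though the hypothesis is vacuous because $x(q)=x(q')$. So the biconditional you set out to prove is false for points separated by a vertical descent; no completion of your reduction can exist. What is true, and what the dynamic program actually needs, is (i) your same-height equivalence, plus (ii) the directional statement that, between top-port events, the horizontal strip condition can only switch from violated to satisfied as one walks along $C$ away from $r$ (your own monotonicity computation proves this: the minimum $g_z$ can only drop while $n_z$ can only grow). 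You should either restrict the statement to $y(q)=y(q')$ or prove this one-sided version; as written, your proof claims strictly more than is provable.
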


Symmetric statements hold for the vertical strip condition.  In the
following, we say that a point~$(x,y)$ on a curve~$C$ satisfies the
\emph{strip condition} if it satisfies both the horizontal and the
vertical strip condition.

\section{Algorithm for the Two-Sided Case}
\label{sec:two-sided-algorithm}

How can we find an $xy$-monotone curve~$C$ that satisfies
the strip conditions? For that purpose we only consider
$xy$-monotone curves contained in some graph $G$ that is dual to the
rectangular grid induced by the
sites and ports of the given instance.
Note that this is not a restriction since all leaders are contained in
the grid induced by the sites and ports. Thus, every $xy$-monotone curve
that does not intersect the leaders can be transformed into an
equivalent $xy$-monotone curve that lies on~$G$.

When traversing an edge~$e$ of~$G$, we pass the $x$- or $y$-coordinate
of exactly one entity of our instance; either a site (\emph{site event}) or
a port (\emph{port event}). When passing a site, the position of the
site relative to~$e$ (above/below $e$ or right/left of $e$)
decides whether the site
is connected to the top or to the right side. Clearly, there is an
exponential number of possible $xy$-monotone traversals through
the grid. In the following, we describe a dynamic program that
finds an $xy$-separating curve in~$O(n^3)$ time.

Let~$m_\RR$~and~$m_\TT$ be the numbers of ports on the
right and top side of~$R$, respectively. Also, let $N=n+m_\TT+2$ and
$M=n+m_\RR+2$, then the grid~$G$ has
size~$N\times M$. We define the grid
points as $G(s,t)$, $0\le s \le N$, $0\le t\le M$ with~$G(0,0)$ being the
bottom-left and~$r:=G(N,M)$ being the top-right
corner of~$R$. Finally, let~$G_x(s):=x(G(s,0))$
and~$G_y(t):=y(G(0,t))$.

\begin{figure}[tb]
  \centering
  \includegraphics{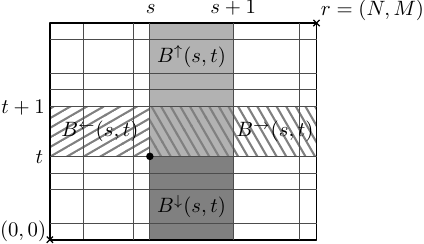}
  \caption{The four boxes~$B^\uparrow(s,t), B^\downarrow(s,t),
    B^\leftarrow(s,t)$ and~$B^\rightarrow(s,t)$ defined by
    grid point~$(s,t)$.}
  \label{fig:fourboxes}
\end{figure}

For each grid point~$(s,t)$ that is neither on the topmost row
nor on the rightmost column, we define four boxes~$B^\uparrow(s,t),
B^\downarrow(s,t), B^\leftarrow(s,t)$ and~$B^\rightarrow(s,t)$ as
follows; see Fig.~\ref{fig:fourboxes} for an illustration.

\begin{compactenum}
\item $B^\uparrow(s,t) = \{
(x,y) \in R \mid G_x(s) \le x \le G_x(s+1) \wedge y \ge G_y(t)\}$

\item $B^\downarrow(s,t) = \{ (x,y) \in R \mid G_x(s) \le x \le
  G_x(s+1) \wedge y \le G_y(t)\}$

\item $B^\leftarrow(s,t) = \{(x,y) \in R \mid G_y(t) \le y \le
  G_y(t+1) \wedge x \le G_x(s)\}$

\item $B^\rightarrow(s,t) = \{(x,y) \in R \mid G_y(t) \le y \le
  G_y(t+1) \wedge x \ge G_x(s)\}$
\end{compactenum}
We define a table~$T[(s,t),u,b]$ that assigns to each
grid position~$(s,t)$ and number of points~$u$ and~$b$ a Boolean value.
We define~$T[(s,t), u,b]$ to be $\true$ if and only if there
exists an $xy$-monotone curve~$C$ satisfying the following conditions.
\begin{compactenum}[(i)]
  \item Curve~$C$ starts at~$r$ and ends at~$G(s,t)$.
  \item Inside the rectangle spanned by~$r$ and~$G(s,t)$, there are~$u$
    sites of~$P$ above~$C$ and~$b$ sites of~$P$ below~$C$.
  \item For each grid point on~$C$, the strip condition holds.
\end{compactenum}

These conditions together with Proposition~\ref{prop:solut} and
Lemma~\ref{lem:condit} imply that the instance admits a planar solution if
and only if $T[(0,0),u,b]=\true$ for some $u$ and~$b$.

We define a Boolean function~$S[(s,t),u,b]$ that is true if and only
if the strip condition at~$(s,t)$ is satisfied for some $xy$-monotone
curve~$C$ (and thus by Observation~\ref{obs:stripnumber} for all such
curves) from~$r$ to~$G(s,t)$ with~$u$ sites above and~$b$ sites
below~$C$.
%Note that
%this information suffices by Observation~\ref{obs:stripnumber}.
%
The following lemma gives a recurrence for~$T$, which is essentially a
disjunction of two values, each of which is determined by distinguishing
three cases.

\begin{lemma}\label{lem:recurrence}
  For $s=N$ and~$t=M$, it holds that $T[(s,t),0,0]=\true$.
  For $s\in[0,N-1]$ and $t\in[0,M-1]$, it holds that

  \vspace{1em}

  \begin{tabular}{cl}
    & $\left\{\begin{array}{l@{\quad}l@{\quad}l}
    T[(s+1,t),u,b]\wedge S[(s,t),u,b] && L\cap B^\uparrow(s,t)\neq\emptyset \\
    T[(s+1,t),u-1,b] &\text{if}&  P\cap B^\uparrow(s,t)\neq\emptyset \\
    T[(s+1,t),u,b] &&  P\cap B^\downarrow(s,t)\neq\emptyset \end{array}\right\}$\\[1ex]
    $T[(s,t),u,b]=$&    \multicolumn{1}{c}{$\bigvee$}\\
    &$\left\{\begin{array}{l@{\quad}l@{\quad}l}
    T[(s,t+1),u,b]\wedge S[(s,t),u,b] && L\cap B^\rightarrow(s,t)\neq\emptyset \\
    T[(s,t+1),u,b-1] &\text{if}&  P\cap B^\rightarrow(s,t)\neq\emptyset \\
    T[(s,t+1),u,b] &&  P\cap B^\leftarrow(s,t)\neq\emptyset \end{array}\right\}$.
  \end{tabular}

\end{lemma}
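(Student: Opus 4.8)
The plan is to prove the recurrence for $T[(s,t),u,b]$ by showing that an $xy$-monotone curve $C$ from $r$ to $G(s,t)$ witnessing $T[(s,t),u,b]=\true$ exists if and only if one of the two disjuncts holds. The key observation is that any such curve, starting at $G(s,t)$ and moving toward $r$, must take its first step either upward (into $B^\uparrow(s,t)$, reaching $G(s,t+1)$) or rightward (into $B^\rightarrow(s,t)$, reaching $G(s+1,t)$). Wait---reconsidering the geometry: a curve from $r=G(N,M)$ down to $G(s,t)$ arrives at $G(s,t)$ having come either from $G(s+1,t)$ (a horizontal last edge) or from $G(s,t+1)$ (a vertical last edge). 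The two bracketed groups in the recurrence correspond exactly to these two cases, so the outer $\bigvee$ encodes ``the last edge of $C$ is horizontal, or it is vertical.'' I will treat the two groups symmetrically and argue only the first (horizontal last edge, via $G(s+1,t)$) in detail.

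\medskip

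\noindent\textbf{Reducing to a single edge.} For the first group, I would fix a witness curve $C$ whose last edge is the horizontal segment from $G(s+1,t)$ to $G(s,t)$, and let $C'$ be $C$ with that edge removed, so $C'$ runs from $r$ to $G(s+1,t)$. The heart of the proof is a case analysis on which entity's coordinate is passed when traversing this edge, exactly as set up in the text preceding the lemma: the edge is dual to a grid cell containing either a port or a site, lying in one of the boxes $B^\uparrow(s,t)$ or $B^\downarrow(s,t)$. \emph{First}, if a top port lies in $B^\uparrow(s,t)$, then crossing the edge adds that port above the curve but no new site, so the counts $(u,b)$ are unchanged between $C$ and $C'$; however, the strip condition must now be checked at the newly exposed grid point, which is why this case carries the conjunct $S[(s,t),u,b]$ and reduces to $T[(s+1,t),u,b]$. \emph{Second}, if a site lies in $B^\uparrow(s,t)$, then that site sits above the last edge, so $C$ has one more site above it than $C'$ does; hence the reduction to $T[(s+1,t),u-1,b]$, with no strip condition required since passing a site cannot newly violate it (this is the content of Observation~\ref{obs:stripwalk}, which localizes possible violations to top-site $x$-coordinates---already handled by the port case and by $S$). \emph{Third}, if a site lies in $B^\downarrow(s,t)$, it falls below the last edge, contributing to $b$ rather than $u$, giving $T[(s+1,t),u,b]$.

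\medskip

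\noindent\textbf{Correctness of the counts and the strip condition.} The forward direction (a witness for $T[(s,t),u,b]$ yields a witness for one of the reduced entries) follows by removing the last edge and re-tallying sites by the case above; the backward direction (a reduced witness extends to a witness) follows by appending the last edge and invoking Observation~\ref{obs:stripnumber} to guarantee that the strip condition at every grid point of $C'$ is unaffected by the extension, since it depends only on the number of sites in each $R_q$ above the curve, which the bookkeeping preserves. The symmetric second group, with last edge vertical from $G(s,t+1)$, is handled identically with the roles of $u$/$b$, top/right, and the boxes $B^\rightarrow$/$B^\leftarrow$ interchanged, as the whole setup is diagonally symmetric. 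The base case $T[(N,M),0,0]=\true$ is immediate: the degenerate curve at $r$ spans an empty rectangle with no sites on either side and vacuously satisfies the strip condition.

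\medskip

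\noindent\textbf{The main obstacle} I expect is justifying precisely \emph{where} the conjunct $S[(s,t),u,b]$ must appear and where it may be omitted. The subtlety is that a newly traversed edge can only create a fresh strip-condition violation when it passes the $x$-coordinate of a top port (equivalently, of a top site's leader), so the $S$-check is needed exactly in the port-in-$B^\uparrow$ case and is redundant in the two site cases; pinning this down cleanly requires appealing to Observation~\ref{obs:stripwalk} to argue that between consecutive top-site $x$-coordinates the strip condition is constant along $C$, so it suffices to test it once per such coordinate. A secondary care point is confirming that the argument $u-1$ (resp. $b-1$) is always well-defined, i.e., that the site case genuinely requires at least one site above (resp. below); this is automatic because the case hypothesis $P\cap B^\uparrow(s,t)\neq\emptyset$ places a concrete site above the last edge.
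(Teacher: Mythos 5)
Your proposal takes essentially the same route as the paper's proof: decompose a witness curve by its last edge (horizontal into $G(s+1,t)$ or vertical into $G(s,t+1)$, which is exactly what the outer disjunction encodes), then case-split on whether traversing that edge is a port event or a site event, using Observation~\ref{obs:stripnumber} to transfer strip conditions in the backward direction and Observation~\ref{obs:stripwalk} to argue that the $S$-check is needed only at port events. The base case, the port case, and the site-in-$B^\uparrow(s,t)$ case are handled as in the paper, and your ``main obstacle'' paragraph correctly pins down that a fresh strip-condition violation can arise only when the curve passes the $x$-coordinate of a top \emph{port} (your wording wavers between ``top site'' and ``top port'', but so does the paper's Observation~\ref{obs:stripwalk}, and your obstacle paragraph resolves it the right way).

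The one step that genuinely fails is your third case. You write that a site in $B^\downarrow(s,t)$ ``falls below the last edge, contributing to $b$ rather than $u$, giving $T[(s+1,t),u,b]$'' --- but these two halves contradict each other: if that site were counted in $b$ for $C$ but not for $C'$, the reduction would have to be to $T[(s+1,t),u,b-1]$, not $T[(s+1,t),u,b]$. The correct justification is different. By condition~(ii) of the table's definition, $u$ and $b$ count only sites \emph{inside the rectangle spanned by $r$ and $G(s,t)$}; a site in $B^\downarrow(s,t)$ lies strictly below that rectangle (and also outside the rectangle spanned by $r$ and $G(s+1,t)$), so it contributes to neither count for either curve, and both counts are simply unchanged. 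Such a site is charged to $b$ only later, when a vertical edge of the curve passes its row --- that is precisely the $T[(s,t+1),u,b-1]$ line of the second group, and charging it to $b$ now as well would double-count it. Be aware that the paper's own proof is loose at this very spot: its ``symmetrically, if $p$ lies below $e$ \dots\ $b-1$ sites below $C'$'' sentence really describes the site-in-$B^\rightarrow$ subcase of the vertical-edge group rather than the site-in-$B^\downarrow$ subcase of the horizontal-edge group. So your slip mirrors the paper's, but a clean proof of the recurrence as stated must use the rectangle-restricted counting to justify that third line.
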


\begin{proof}
  We show equivalence of the two terms. Let~$C$ be an $xy$-monotone curve
  from~$r$ to~$(s,t)$. Let~$e$ be the last segment of~$C$ and let~$C'=C-e$.
  Since~$C$ is $xy$-monotone, $C'$ ends either at the grid point $(s+1,t)$
  or at $(s,t+1)$. Without loss of generality, we assume that~$C'$
  ends at~$(s+1,t)$. We show that~$T[(s,t),u,b]=\true$ if and only if
  the first term of the right hand side is $\true$. Analogous arguments
  apply for~$C'$ ending at~$(s,t+1)$ and the second term. Note that, by
  construction, property (i) is satisfied for~$C$ and~$C'$.

  We distinguish cases based on whether the traversal along the segment~$e$
  from~$(s+1,t)$ to~$(s,t)$ is a port event or a site event.

  \textbf{Case 1:} Traversal of~$e$ is a port event.  Since~$e$ passes a
  port, all sites that lie in the rectangle spanned by~$r$ and~$G(s,t)$
  also lie in the rectangle spanned by~$r$ and~$G(s+1,t)$. Thus,
  the numbers~$u$ and~$b$ of such sites above and below~$C$ is the same as
  the numbers of sites above and below~$C'$, respectively. Hence, property
  (ii) holds for~$C$ if and only if it holds for~$C'$.

  Because~$C'$ is
  a subset of~$C$, the strip condition holds for every point of~$C$
  if and only if it holds for every point of~$C'$ and for~$(s,t)$.
  Thus, property (iii) is satisfied for~$C$ if and only if it is satisfied
  for~$C'$ and $S[(s,t),u,b]=\true$.

  \textbf{Case 2:} Traversal of~$e$ passes a site~$p$.
  For property (iii), observe that, since the traversal of~$e$ is a site
  event, the strip conditions for~$(s,t)$ and~$(s+1,t)$ are equivalent
  by Observation~\ref{obs:stripwalk}.

  For property (ii), note that, except for~$p$, the sites that
  lie in the rectangle spanned
  by~$r$ and~$G(s,t)$ also lie in the rectangle spanned by~$r$
  and~$G(s+1,t)$. If~$p$ lies above~$e$, there are~$u$ sites above
  and~$b$ sites below~$C$ if and only if there are~$u-1$ sites above
  and~$b$ sites below~$C'$, respectively. Symmetrically, if~$p$ lies
  below~$e$, there are~$u$ sites above and~$b$ sites below~$C$ if and only
  if there are~$u$ sites above and~$b-1$ sites below~$C'$, respectively.
  In either case, $C$ satisfies condition (ii) if and only if~$C'$ does.
\end{proof}

Clearly, the recurrence from Lemma~\ref{lem:recurrence} can be used to
compute~$T$ in polynomial time via dynamic programming. Note that it
suffices to store~$u$, as the number of sites below the curve~$C$ can
directly be derived from~$u$ and all sites that are contained in the
rectangle spanned by~$r$ and~$G(s,t)$. Thus, in the following we work
with~$T[(s,t),u]$. The running time crucially relies on the number of
strip conditions that need to be checked.  We show
that after a $O(n^2)$ preprocessing phase, such queries can be answered
in~$O(1)$ time.

To implement the test of the strip conditions, we use a table~$B_\TT$,
which stores in~$B_\TT[s,t]$ how large a
deficit of sites to the right can be compensated by sites above and
to the left of~$G(s,t)$.  That is,~$B_\TT[s,t]$ is the maximum
value~$k$ such that there exists a rectangle~$K_{B_\TT[s,t]}$ with
lower right corner~$G(s,t)$ whose top side is bounded by the top
side of~$R$, and that contains~$k$ more sites in its interior, than
it has ports on its top side.  Once we have computed this matrix, it
is possible to query the strip condition in the dynamic
program that computes~$T$ in~$O(1)$ time as follows:  Assume we have an
entry~$T[(s,t),u]$, and we wish to check its strip condition.
Consider a curve~$C$ from~$r$ to~$G(s,t)$ such that~$u$ sites are
above~$C$.  The strip condition is satisfied if and only
if~$u+B_\TT[s,t]$ is at least as large as the number of top ports to
the right of~$G(s,t)$. This is true if the rectangle spanned by the lower
left corner of~$K_{B_\TT[s,t]}$ and~$r$ contains at
least~$u+B_\TT[s,t]$ sites, which is an upper bound on the number of
ports on the top side of that rectangle.

We now show how to compute~$B_\TT$ in~$O(n^2)$ time.  We compute each
row separately, starting from the left side.  We initialize~$B_\TT[0,t]
= 0$ for~$t=0,\dots,M$, since in the final column, no deficit can
be compensated.  The matrix~$B$ can be filled by a horizontal sweep.
The entry~$B_\TT[s,t]$ can be derived from the already
computed entry~$B_\TT[s-1,t]$.  If the step from~$s-1$ to~$s$ is a site event,
 the amount of the deficit we can compensate increases by~1. If
it is a port event the amount of the deficit we can compensate decreases by~1.  Moreover, the
compensation potential never goes below~0.  We obtain
$$B_\TT[s,t]=
\begin{cases}
  B_\TT[s-1,t] + 1 & \text{if step is site event}, \\
  \max\{B_\TT[s-1,t] - 1,0\} & \text{if step is port event}. \\
\end{cases}
$$ The table can be clearly filled out in $O(n^2)$ time.  A similar
matrix~$B_\RR$ can be computed for the vertical strips.  Altogether,
this yields an algorithm for \textsc{Two-Sided Boundary Labeling with
  Adjacent Sides} that runs in~$O(n^3)$ time and uses~$O(n^3)$ space.
However, the entries of
each row and column of~$T$ depend only on the previous row and column,
which allows us to reduce the storage requirement to~$O(n^2)$.  Using
Hirschberg's algorithm~\cite{h-lsamcs-75}, we can still backtrack the
dynamic program and find a solution corresponding to an entry in the
last cell in the same running time.  We have the following theorem.

\newcommand{\thmTwoSidedCorrectText}{\textsc{Two-Sided Boundary
  Labeling with Adjacent Sides} can be solved in~$O(n^3)$ time
  using~$O(n^2)$ space.}
\begin{theorem}\label{thm:two-sided-correct}
  \thmTwoSidedCorrectText
\end{theorem}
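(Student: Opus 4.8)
The plan is to assemble the pieces that have already been proved into a clean correctness-and-complexity argument for the dynamic program. On the correctness side, I would first invoke Proposition~\ref{prop:solut} to reduce the existence of any planar solution to the existence of an \emph{$xy$-separated} planar solution, and then invoke Lemma~\ref{lem:condit} (together with its symmetric vertical counterpart) to reduce the existence of an $xy$-separated solution for a \emph{fixed} separating curve~$C$ to the statement that every point of~$C$ satisfies the (horizontal and vertical) strip condition. The remaining task is to show that the table~$T$ correctly enumerates, via the recurrence of Lemma~\ref{lem:recurrence}, exactly those curves on~$G$ all of whose grid points satisfy the strip condition. I would argue that restricting to curves lying on the dual grid~$G$ loses no generality, as already noted in the text, since any valid $xy$-monotone curve can be pushed onto~$G$. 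Then a straightforward induction on the length of the curve, using Lemma~\ref{lem:recurrence} as the inductive step, establishes that $T[(s,t),u]=\true$ exactly when a suitable curve from~$r$ to~$G(s,t)$ exists; reading off the answer at $T[(0,0),\cdot]$ then settles the decision problem, and backtracking recovers an actual curve and hence (via the constructive direction of Lemma~\ref{lem:condit}) an actual labeling.

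For the complexity, I would proceed in two stages. First I would bound the cost of evaluating the recurrence assuming strip-condition queries are $O(1)$: the table has $O(N\cdot M) = O(n^2)$ grid points and, for each, the parameter~$u$ ranges over $O(n)$ values, so there are $O(n^3)$ entries, each computed in constant time from two previously computed entries by Lemma~\ref{lem:recurrence}. The crux is therefore to justify that each strip-condition test indeed runs in $O(1)$ amortized time after preprocessing. Here I would rely on Observation~\ref{obs:stripnumber}, which guarantees that the strip condition at~$(s,t)$ depends only on~$u$ and not on the detailed shape of~$C$, so that a single Boolean $S[(s,t),u]$ is well defined and can be queried. I would then describe the auxiliary table~$B_\TT$ (and its symmetric partner~$B_\RR$), explain that $B_\TT[s,t]$ records the maximal compensable site deficit, verify the stated sweep recurrence distinguishing site events from port events with the clamp at~$0$, and conclude that~$B_\TT$ fills in $O(n^2)$ time. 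The strip condition at $T[(s,t),u]$ then reduces to the single comparison of $u + B_\TT[s,t]$ against the number of top ports to the right of~$G(s,t)$, giving the promised $O(1)$ query.

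Finally I would address the space bound, which is where a little extra care is needed to get from the naive $O(n^3)$ down to~$O(n^2)$. The observation is that each row and column of~$T$ depends only on the immediately preceding row and column, so it suffices to keep a sliding window rather than the full three-dimensional table; this reduces storage to~$O(n^2)$ while preserving the running time. To still recover an explicit solution rather than merely a yes/no answer, I would invoke Hirschberg's divide-and-conquer technique~\cite{h-lsamcs-75} to backtrack the dynamic program within the same time and space bounds, as the text indicates. Combining the $O(n^3)$ time with the $O(n^2)$ space then yields the theorem.

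The main obstacle, I expect, is not any single deep step but the bookkeeping that makes the $O(1)$ strip-condition query airtight: one must check that the quantity $u + B_\TT[s,t]$ really captures ``enough sites are available above and to the left to feed all top ports to the right,'' and in particular that the upper-bound argument (the rectangle spanned by the lower-left corner of $K_{B_\TT[s,t]}$ and~$r$ containing at least $u+B_\TT[s,t]$ sites bounds the ports on its top side) is valid. Getting the off-by-one conventions right—matching the definition of~$R_p$ as closed but excluding its top-left corner, and the corresponding treatment of ports—will require the most attention; everything else follows by combining the already-established Proposition~\ref{prop:solut}, Lemma~\ref{lem:condit}, Lemma~\ref{lem:recurrence}, and Observations~\ref{obs:stripnumber}--\ref{obs:stripwalk}.
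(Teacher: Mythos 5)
Your proposal is correct and follows essentially the same route as the paper: Proposition~\ref{prop:solut} and Lemma~\ref{lem:condit} for correctness, the recurrence of Lemma~\ref{lem:recurrence} over the grid~$G$ with the $u$-parameter (dropping~$b$), the $B_\TT$/$B_\RR$ deficit tables for $O(1)$ strip-condition queries after $O(n^2)$ preprocessing, and the sliding-window plus Hirschberg argument for the $O(n^2)$ space bound. No substantive difference from the paper's own argument.
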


Our next goal is to improve the performance of our algorithm by
reducing the number of dimensions of the table~$T$ by~1.  As a first
step, we show that for any search position~$\vect c=(s,t)$, the 
set of all~$u$ with~$T[\vect c, u] = \true$ is an
interval.

\begin{lemma}\label{lem:interval}
  Let~$T[\vect c, u] = T[\vect c,u'] = \true$  with~$u <
  u'$.  Then~$T[\vect c, u''] = \true$  for~$u \le u'' \le
  u'$.
\end{lemma}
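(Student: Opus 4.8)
The plan is to reformulate feasibility as a ``tube'' constraint on the running site count and then to interpolate between two feasible curves by a monotone sweep. For an $xy$-monotone curve~$C$ from~$r$ to~$G(s,t)$ and a grid point~$q$ on~$C$, write~$\alpha_C(q)$ for the number of sites of~$P$ above~$C$ that lie in~$R_q$. By Observation~\ref{obs:stripnumber} the horizontal strip condition at~$q$ depends only on~$\alpha_C(q)$, and the explicit test ``$u+B_\TT[s,t]\ge \#\{\text{top ports right of }G(s,t)\}$'' shows it is monotone increasing in~$\alpha_C(q)$; hence it holds iff $\alpha_C(q)\ge\tau(q)$ for a threshold~$\tau(q)$ depending only on~$q$. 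Symmetrically, the vertical strip condition holds iff the number of sites below~$C$ in~$R_q$ is large enough, i.e.\ iff $\alpha_C(q)\le\sigma(q)$ for a threshold~$\sigma(q)$ depending only on~$q$. Thus a curve is a feasible witness for some $T[\vect c,u]$ exactly when $\tau(q)\le\alpha_C(q)\le\sigma(q)$ at every grid point~$q$ on~$C$, and its count is $u=\alpha_C(\vect c)$.

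First I would prove a \emph{monotone interpolation} lemma: if $C_\uparrow\ge C_\downarrow$ are two feasible curves, one lying pointwise weakly above the other, with counts $u_\downarrow\ge u_\uparrow$, then every integer in $[u_\uparrow,u_\downarrow]$ is the count of a feasible curve. For a vertical sweep line~$x=\xi$ define the hybrid~$H_\xi$ that follows~$C_\uparrow$ for $x\ge\xi$ and~$C_\downarrow$ for $x<\xi$, joined by a vertical segment; since $C_\uparrow\ge C_\downarrow$ this segment points downward, so $H_\xi$ is again $xy$-monotone. For a point~$q$ on the $C_\uparrow$-part we have $R_q\subseteq\{x\ge\xi\}$, whence $\alpha_{H_\xi}(q)=\alpha_{C_\uparrow}(q)$ lies in the tube. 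For a point~$q$ on the $C_\downarrow$-part, comparing the two pieces inside~$R_q$ gives $\alpha_{C_\uparrow}(q)\le\alpha_{H_\xi}(q)\le\alpha_{C_\downarrow}(q)$, so the lower bound~$\tau(q)$ is inherited from~$C_\uparrow$ and the upper bound~$\sigma(q)$ from~$C_\downarrow$; hence $H_\xi$ is feasible. Finally, a site changes sides only when~$\xi$ passes its $x$-coordinate, and because a site above~$C_\uparrow$ is automatically above~$C_\downarrow$, each such event changes the count by~$0$ or~$-1$. Thus the count is a unit-step monotone function of~$\xi$ running from~$u_\downarrow$ down to~$u_\uparrow$, and therefore attains every intermediate integer.

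It then remains to exhibit, for fixed~$\vect c$, two nested feasible curves whose counts bracket the whole feasible range. I would take the pointwise-lowest feasible curve~$C^\downarrow$ and the pointwise-highest feasible curve~$C^\uparrow$, built greedily from~$r$: for~$C^\downarrow$ always step downward while the new point still satisfies $\alpha\le\sigma$, and step left otherwise; for~$C^\uparrow$ always step left while $\alpha\ge\tau$ can still be maintained. Lowering a curve only increases every~$\alpha_C(q)$, which preserves the lower bound~$\tau$; dually, raising preserves~$\sigma$. Using that \emph{some} feasible curve exists (e.g.\ a witness for $T[\vect c,u]$), an exchange argument should show that the greedy curves are feasible, that $C^\uparrow\ge C^\downarrow$, and that every feasible curve has count between $\mathrm{count}(C^\uparrow)=u_{\min}$ and $\mathrm{count}(C^\downarrow)=u_{\max}$. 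Applying the interpolation lemma to $C^\uparrow,C^\downarrow$ then shows that the feasible counts are exactly the interval $[u_{\min},u_{\max}]$; since $u,u'$ are feasible we get $[u,u']\subseteq[u_{\min},u_{\max}]$, so every~$u''$ with $u\le u''\le u'$ is feasible, i.e.\ $T[\vect c,u'']=\true$.

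The hard part is this last paragraph: showing that well-defined pointwise-extreme feasible curves exist and are nested. The tube characterization is deceptively symmetric -- the lower bound~$\tau$ is an upward-monotone (order-ideal) condition on the set of sites placed above the curve, whereas the upper bound~$\sigma$ is downward-monotone, and the two pull in opposite directions, so the feasible above-sets are \emph{not} closed under union or intersection. Consequently one cannot simply take the upper/lower envelopes of~$C_1$ and~$C_2$: their counts fall \emph{outside} $[u,u']$, and neither envelope need be feasible. The greedy construction and its exchange argument are exactly what must do the work of forcing nestedness and bridging the gap. Should that step prove stubborn, a fallback is to run the sweep directly between~$C_1$ and~$C_2$ after uncrossing them, but the uncrossing would have to preserve both individual counts and feasibility, which is itself delicate.
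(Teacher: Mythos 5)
Your tube reformulation and your interpolation lemma are sound: the inheritance argument for the hybrid $H_\xi$ (horizontal strip condition inherited from the upper curve, vertical strip condition from the lower curve) is exactly the style of reasoning the paper itself uses, and the unit-step count argument for the sweep is correct. But the proposal has a genuine gap, and you have located it yourself: everything hinges on producing two \emph{nested} feasible witnesses whose counts bracket $[u,u']$, and the greedy construction of pointwise-extremal feasible curves $C^\uparrow,C^\downarrow$ is never carried out. This step is not routine. The constraints $\tau$ and $\sigma$ pull in opposite directions, so a curve that greedily descends subject to $\sigma$ may later be unable to meet $\tau$ (and dually); the exchange argument would have to compare the greedy curve against \emph{all} feasible curves simultaneously, which is essentially as hard as the statement being proved. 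As written, the reduction from the given witnesses $C,C'$ (which may cross each other) to nested ones is missing, and the proof does not go through without it.

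The paper's proof shows that this reduction is unnecessary: one can splice the two given curves directly, even when they cross, by choosing the splice abscissa where their relative order is \emph{forced}. Since $u<u'$, there is a rightmost site $p$ that lies below $C$ (the witness for $u$) and above $C'$ (the witness for $u'$); at $x(p)$ the curve $C$ necessarily passes above $C'$, so the grid point $v$ of $C$ immediately left of $p$ lies above the grid point $v'$ of $C'$ immediately left of $p$. The hybrid that follows $C$ from $r$ to $v$, drops vertically to $v'$, and then follows $C'$ to $\vect c$ is therefore automatically $xy$-monotone---no global nestedness is needed. Its feasibility is verified exactly as in your interpolation lemma: the prefix is a subcurve of $C$ starting at $r$; the vertical segment lies below $C$ and above $C'$, so it inherits the horizontal condition from $C$ and the vertical condition from $C'$; the suffix is handled via Observation~\ref{obs:stripnumber}. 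By the rightmost choice of $p$, the hybrid has $u'-1$ sites above it, and iterating this descent produces every count in $[u,u']$. So the fix for your argument is not to build extremal curves but to make your sweep adaptive: instead of sweeping all $\xi$, splice once, at the single position (just left of the rightmost site on which the two curves disagree) where the downward orientation of the joining segment is guaranteed.
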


\begin{pf}
  Let~$C$ be a curve corresponding to the entry~$T[\vect c, u]$.  That
  is~$C$ connects~$r$ to~$\vect c$ such that any point on~$C$
  satisfies the strip condition.  Similarly, let~$C'$ be a curve
  corresponding to~$T[\vect c, u']$; see
  Fig.~\ref{fig:storage_improvement}.

  Since~$u$ and~$u'$ differ, there is a rightmost site~$p$, such
  that~$p$ is below~$C$ and above~$C'$.  Let~$v$ and~$v'$ be the grid
  points of~$C$ and~$C'$ that are immediately to the left of~$p$.
  Note that~$v$ is above~$v'$ since~$C$ is above~$p$ and~$C'$ is below
  it.  Consider the curve~$C''$ that starts at~$r$ and follows~$C$
  until~$v$, then moves down vertically to~$v'$, and from there
  follows~$C'$ to~$p$.  Obviously~$C''$ is an $xy$-monotone curve, and
  it has above it the same sites as~$C'$, except for~$p$, which is
  below it.  Thus there are~$u'' = u'-1$ sites above~$C''$ in the
  rectangle spanned by~$p$ and~$r$.  If all points of~$C''$
  satisfy the strip condition, then this implies~$T[\vect c,
  u''] = \true$.

  We show that indeed the strip condition is satisfied for any point on~$C''$.
  Let~$C_1$ be the subcurve of $C''$ that connects $r$ to $v$, let $C_2$ be
  the segment $vv'$ and let~$C_3$ be the subcurve of~$C''$ that
  connects $v'$ to $\vect c$.
  Since $C_1$ is also a subcurve of~$C$ and it starts at $r$, it directly
  follows that any point of~$C_1$ satisfies the strip condition.
  For the
  points on~$C_2$ we can argue as follows. Since~$C_2$ lies below~$C$
  and any point of $C$ satisfies the horizontal strip condition, any point
  of~$C_2$ must satisfy the horizontal strip condition. Analogously, because
  $C_2$ lies above~$C'$ and any point of $C'$ satisfies the vertical strip
  condition, each point of $C_2$ must satisfy the vertical strip condition.
  Finally, since~$C_3$ is a subcurve of $C'$, any point of $C'$ satisfies the
  strip condition and any point of $C_1$ and $C_2$ satisfies the strip
  condition, it directly follows that any point of $C_3$ satisfies the strip
  condition.
\end{pf}

Using Lemma~\ref{lem:interval}, we can reduce the dimension of the
table~$T$ by~1.  It suffices to store at each entry~$T[\vect c]$ the
boundaries of the $u$-interval.  This reduces the amount of storage
to~$O(n^2)$ without increasing the running time.  Using Hirschberg's
algorithm, the storage for~$T$ even decreases to~$O(n)$.  Tables~$B_\TT$ and~$B_\RR$ 
still have size~$O(n^2)$, however.

Our next goal is to reduce the running time to~$O(n^2)$.  An entry
in~$B_\TT[s,t]$ tells us which deficits can be compensated.  This can
also be interpreted as a lower bound on the number of sites a curve
from~$r$ to~$G(s,t)$ must have above it, in order to satisfy the
horizontal strip condition.  Namely, let~$\tau_{s,t}$ denote the
number of ports on the top side of the rectangle spanned by~$G(s,t)$
and~$r$.  Then~$u \ge \tau_{s,t} - B_\TT[s,t]$ is equivalent to
satisfying the horizontal strip condition for the strip directly
above~$G(s,t)$.  Similarly, the corresponding entry~$B_\RR[s,t]$ gives
a lower bound on the number of sites below such a curve, which in
turn, together with the number of sites contained in the rectangle
spanned by~$G(s,t)$ and~$r$ implies an \emph{upper bound} on the
number of sites above the curve.  Thus,~$B_\TT$,~$B_\RR$, and the
information on how many sites, top ports and right ports are in the
rectangle spanned by~$G(s,t)$ and~$r$ together imply a lower and an
upper bound, and thus an interval of~$u$-values, for which the
horizontal and vertical strip conditions at~$G(s,t)$ is satisfied.
Hence the program can simply intersect this interval with the
union of the intervals obtained from~$T[(s,t) - \vect{\Delta c}]$,
where~$\vect{\Delta c}$ has exactly one non-zero entry, which is~1.
Consequently, the amount of work per entry of~$T$ is still~$O(1)$.
Note that by Lemma~\ref{lem:interval} the result of this computation
is again an interval.

\REMOVE{ Next, we would like to reduce the storage using Hirschberg's
algorithm~\cite{h-lsamcs-75}, which immediately reduces the storage
requirement of~$T$ to~$O(n)$.  We would like to reduce the storage
requirement for~$B_\TT$ and~$B_\RR$ by using Hirschberg's algorithm as
well.  However,}

Now we turn to the space consumption.  Hirschberg's algorithm~\cite{h-lsamcs-75} 
immediately reduces the space consumption of~$T$ to~$O(n)$.  We would
like to apply the same trick to~$B_\TT$ and to~$B_\RR$.  Recall that~$B_\TT$ is 
computed from left to right and~$B_\RR$ from bottom to top.
Unfortunately, this is opposite to the order we use for computing~$T$,
where we proceed from top-right to bottom-left. We can fix this problem by running
the dynamic programs for computing~$B_\TT$ and~$B_\RR$ backwards, by
precomputing the entries of~$B_\TT$ and~$B_\RR$ on the top and right side,
and then running the updates backwards.  This allows us to use Hirschberg's algorithm, 
and the algorithms can run in a synchronized manner such
that at any point in time the required data is available, using
only~$O(n)$ space.  

A new issue, however, appears.  The update~$B_\TT[s,t] = \max \{B_\TT[s-1,t]
- 1,0\}$ is not easily reversible.  When running the dynamic
program backwards, it is not clear whether~$B_\TT[s,t] = 0$
implies~$B_\TT[s-1,t] = 0$ or~$B_\TT[s-1,t] = 1$ at a port step.
To remedy this issue, fix a column~$s$ of the table
corresponding to a port event and consider the circumstances under
which~$B_\TT[s-1,t] - 1 = -1$, i.e.,~$B_\TT[s-1,t] = 0$.  This implies
that, for any rectangle~$K$ with lower right corner~$G(s-1,t)$ whose
top side is contained in the top side of~$R$, there are at most as
many sites in~$K$ as there are ports in the top side of~$K$.  Assume
that this is the case for some fixed value~$t_0$, i.e., ~$B_\TT[s-1,t_0]$.
Since the possible rectangles for an entry~$B_\TT[s-1,t]$ with~$t \ge
t_0$ contain at most as many sites as the ones for~$B_\TT[s-1,t_0]$,
this implies~$B_\TT[s-1,t_0] = B_\TT[s-1,t] = 0$ for all~$t \ge t_0$.  If
on the other hand,~$t_0$ is such that~$B_\TT[s-1,t_0] > 0$, then the
rectangles corresponding to~$B_\TT[s-1,t]$ for~$t<t_0$ contain at least as
many sites as the ones for~$B_\TT[s-1,t_0]$, and we have~$B_\TT[s-1,t] \ge
B_\TT[s-1,t_0]$ for~$t<t_0$.  Thus, there is a single gap~$t_0$
such that, for any~$t \ge t_0$, we have~$B_\TT[s-1,t] = 0$ and, for 
any~$t<t_0$, we have~$B_\TT[s-1,t] > 0$; see Fig.~\ref{fig:gap_rectangle}.
Storing this gap for each column~$s$ that is a port event allows us to
efficiently reverse the dynamic program.  Note that storing one value
per column only incurs~$O(n)$ space overhead.  Of course, the same
approach works for the dynamic program computing~$B_\RR$.  Overall, 
we have shown the following theorem.

\newcommand{\thmEfficientTwoSidedText}{\textsc{Two-Sided Boundary
  Labeling with Adjacent Sides} can be
  solved in $O(n^2)$ time using~$O(n)$ space.}
\begin{theorem}\label{thm:efficient-twosided-labeling}
  \thmEfficientTwoSidedText
\end{theorem}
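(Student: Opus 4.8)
The plan is to reduce the space from $O(n^2)$ (established in Theorem~\ref{thm:two-sided-correct}, and already reduced for $T$ via Lemma~\ref{lem:interval} plus Hirschberg) to $O(n)$ by addressing the two remaining $O(n^2)$-size tables, $B_\TT$ and $B_\RR$. The running time $O(n^2)$ should already follow from the interval representation of $T$, so the theorem's content beyond Theorem~\ref{thm:two-sided-correct} is essentially the space reduction for the auxiliary tables. First I would recall that $T$ is computed from the top-right corner toward the bottom-left, and that Hirschberg's divide-and-conquer technique lets us recover an optimal curve while storing only a single row/column of $T$ at a time, giving $O(n)$ space for $T$. The obstacle is that $B_\TT$ is naturally built left-to-right and $B_\RR$ bottom-to-top, opposite to the direction in which $T$ consumes them, so they cannot simply be streamed in lockstep with the $T$ computation.

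Next I would run the recurrences for $B_\TT$ and $B_\RR$ \emph{backwards}, precomputing the boundary entries on the top and right sides of the grid and then reversing the update. The only real difficulty is that the port-event update $B_\TT[s,t]=\max\{B_\TT[s-1,t]-1,0\}$ is not invertible: knowing $B_\TT[s,t]=0$ does not determine whether $B_\TT[s-1,t]$ was $0$ or $1$. The main obstacle is therefore to recover, in only $O(n)$ total extra space, exactly the information lost by the $\max$ with $0$.

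To resolve this I would prove the monotonicity structure already hinted at in the surrounding text: for a fixed column $s$ and a fixed $t_0$, if $B_\TT[s-1,t_0]=0$ then $B_\TT[s-1,t]=0$ for all $t\ge t_0$, while if $B_\TT[s-1,t_0]>0$ then $B_\TT[s-1,t]\ge B_\TT[s-1,t_0]$ for all $t<t_0$. The key reason is that the candidate rectangles $K$ with lower-right corner $G(s-1,t)$ grow as $t$ decreases, so they can only enclose more sites relative to their top ports; hence the deficit-compensation value is monotone in $t$. This yields, for each column $s$ that is a port event, a single threshold (gap) $t_0$ separating the region where $B_\TT=0$ from where $B_\TT>0$; see Fig.~\ref{fig:gap_rectangle}. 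Storing just this one threshold per column costs $O(n)$ overall and makes the backward port-event update unambiguous.

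Finally I would assemble the pieces: running the reversed computations of $B_\TT$ and $B_\RR$ synchronized with the backward/Hirschberg computation of $T$, so that at every step the needed table entries are available, with every table kept to $O(n)$ space. Combined with the $O(1)$-per-entry work afforded by the interval representation and the $O(n^2)$ grid points, this gives the claimed $O(n^2)$ time and $O(n)$ space, establishing Theorem~\ref{thm:efficient-twosided-labeling}. The one step I expect to need the most care is verifying the monotonicity-of-deficit claim cleanly enough that the single stored gap truly suffices to reverse every port-event update, since an off-by-one there would silently corrupt the reconstructed separating curve.
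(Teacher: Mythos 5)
Your proposal is correct and follows essentially the same route as the paper: $O(n^2)$ time via the interval representation of $T$ with $O(1)$ work per grid point, Hirschberg's technique for $T$, reversed (backward) computation of $B_\TT$ and $B_\RR$ from precomputed boundary entries, and the single-gap monotonicity argument (one threshold $t_0$ per port-event column, $O(n)$ total) to disambiguate the non-invertible $\max\{\,\cdot - 1,0\}$ update. No substantive difference from the paper's own proof.
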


\section{Extensions}\label{sec:two-sided-extensions}

The techniques we used to obtain
Theorem~\ref{thm:efficient-twosided-labeling} can be applied to
solve a variety of different extensions of the two-sided labeling
problem with adjacent sides.  We now show how to
\begin{inparaenum}[a)]
  \item generalize to sliding ports instead of fixed ports,
  \item maximize the number of labeled sites, and
  \item minimize the total leader length in a planar solution.
\end{inparaenum}

  \begin{figure}[tb]
    \begin{minipage}[b]{.25\textwidth}
    \centering
    \includegraphics[width=\linewidth]{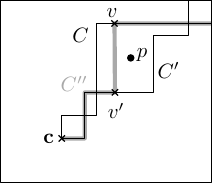}
   \end{minipage}\hfill
    \begin{minipage}[b]{.42\textwidth}
    \centering
    \includegraphics{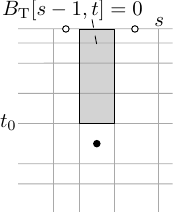}
   \end{minipage}\hfill
  \begin{minipage}[b]{.25\textwidth}
    \centering
   \includegraphics[page=1,width=\linewidth]{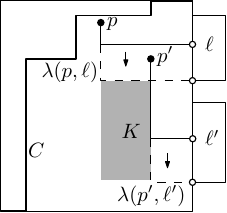}
  \end{minipage}

  \begin{minipage}[t]{.25\textwidth}
    \centering
    \caption{Sketch for the proof of Lemma~\ref{lem:interval}.}
    \label{fig:storage_improvement}
  \end{minipage}\hfill
  \begin{minipage}[t]{.42\textwidth}
    \centering
     \caption{The gap~$t_0$ is defined such that we have~$B_\TT[s-1,t] = 0$ for any~\mbox{$t \ge
         t_0$}, and~$B_\TT[s-1,t] > 0$ for any~\mbox{$t < t_0$}.}
     \label{fig:gap_rectangle}
  \end{minipage}\hfill
    \begin{minipage}[t]{.25\textwidth}
    \centering
    \caption{Sketch for the proof of Lemma~\ref{lem:sliding}}
   % \caption{The grid induced by sites and the boundary of the
%labels}
    \label{fig:sliding-ports}
    \end{minipage}
    %\belowfigure
\end{figure}

\subsection{Sliding Ports}\label{subsec:sliding}

First, observe that Proposition~\ref{prop:solut}, which guarantees the
existence of an $xy$-separated planar solution, also holds for sliding
ports.  The same proofs apply by conceptually fixing the ports of a
given planar solution when applying the rerouting operations.  The following
lemma shows that, without loss of generality, we can simply fix all
ports at the bottom-left corner of their corresponding labels.  This
immediately solves the problem.

\begin{lemma}\label{lem:sliding}
  If there exists an $xy$-separated planar
  solution~$\mathcal L$ for the two-sided boundary labeling problem with
  adjacent sides and sliding ports, then there also exists an $xy$-separated planar
  solution~$\mathcal L'$ in which the ports are fixed at the bottom
  left corners of the labels.
\end{lemma}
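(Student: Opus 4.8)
The plan is to exploit the fact that an $xy$-separated solution decomposes into two essentially independent one-sided instances and to re-solve each of them with the ports nailed to the bottom-left corners, invoking the characterization of Lemma~\ref{lem:condit} rather than trying to slide the ports of $\mathcal L$ in place. Concretely, let $C$ be the separating curve of the given $xy$-separated solution $\mathcal L$, and let $P_\TT$ and $P_\RR$ be the sites above and below $C$ (connected to the top and to the right, respectively). By $xy$-separation all top leaders lie in $R_\TT$ and all right leaders lie in $R_\RR$, so the two sides cannot interfere: it suffices to produce, independently, a planar labeling of $P_\TT$ by the top labels inside $R_\TT$ and of $P_\RR$ by the right labels inside $R_\RR$, in each case with the ports at the bottom-left corners. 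Gluing these along $C$ then yields the desired $\mathcal L'$. Note that I keep only the partition of $P$ into $P_\TT$ and $P_\RR$ induced by $C$; I do \emph{not} try to keep the matching of $\mathcal L$.

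For the top side I would first fix the ports of $\mathcal L$ at their current positions, so that $\mathcal L$ restricted to $R_\TT$ is a planar one-sided solution using all top labels; Lemma~\ref{lem:condit} then guarantees that every point of $C$ satisfies the horizontal strip condition for these port positions. The key observation is a monotonicity property: the bottom-left corner of a top label is its leftmost admissible port position, so replacing each top port by the corresponding bottom-left corner moves every port weakly to the left. Since each rectangle $R_p$ extends to the right of the vertical line through $p$, such a leftward move can only remove top ports from $R_p$ and never add any, while the number of sites of $P_\TT$ inside $R_p$ is unchanged. Hence every rectangle that was valid stays valid, so every point of $C$ still satisfies the horizontal strip condition, now with respect to the leftmost (bottom-left-corner) ports. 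Applying Lemma~\ref{lem:condit} in the other direction produces a planar labeling that uses all top labels, with ports at the bottom-left corners, labeling exactly $P_\TT$ with all leaders inside $R_\TT$. The right side is handled symmetrically: there the bottom-left corner is the \emph{bottommost} port position, pushing each right port downward only decreases the number of right ports in the relevant rectangles, so the (symmetric) vertical strip condition is preserved and the symmetric version of Lemma~\ref{lem:condit} supplies a planar labeling of $P_\RR$ by the right labels inside $R_\RR$ with ports at the bottom-left corners. Because these two labelings occupy the disjoint regions above and below $C$, their union is planar and $xy$-separated, labels all of $P_\TT\cup P_\RR=P$, uses every label, and places every port at a bottom-left corner, giving $\mathcal L'$.

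The main obstacle, and the reason one cannot simply slide each port of $\mathcal L$ to its label's corner while keeping the matching, is that such an in-place slide can create crossings: a short direct check shows that moving a right label's port leftward past a site assigned to a left neighbor can force its vertical segment through that neighbor's horizontal segment. The remedy is exactly to allow rematching, and the monotonicity of the strip condition under leftward/downward port motion is what lets Lemma~\ref{lem:condit} hand us a \emph{fresh} valid matching for the corner ports for free. I therefore expect the bulk of the care in the full proof to go into (a) verifying the ``ports only leave $R_p$'' claim against the boundary conventions of $R_p$ (its treatment of the upper-left corner), and (b) stating the symmetric right-side instance of Lemma~\ref{lem:condit} precisely; the decomposition along $C$ and the gluing are then routine given Proposition~\ref{prop:solut}, which already holds for sliding ports.
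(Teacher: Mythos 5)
Your proof is correct, but it takes a genuinely different route from the paper's. The paper keeps the matching of~$\mathcal L$, slides every port to its label's bottom-left corner, and observes that this moves bends of top leaders only leftward and bends of right leaders only downward, so both sides stay on their side of~$C$; the intra-side crossings this may create are then removed by repeated pairwise swaps (in the style of the claim inside Lemma~\ref{lem:boxlemma}), processed from the topmost crossing downward, with an argument that each swap can spawn new conflicts for only one of the two leaders involved, so finitely many steps suffice. You instead discard the matching entirely, keep only the partition of the sites induced by~$C$, and re-solve each side from scratch: Lemma~\ref{lem:condit} (necessity, with the ports of~$\mathcal L$ conceptually fixed at their slid positions) gives the strip conditions along~$C$; your monotonicity observation --- moving ports weakly left (top) or weakly down (right) can only remove ports from the half-open rectangles~$R_p$, so valid rectangles stay valid --- transfers those conditions to the corner ports; Lemma~\ref{lem:condit} (sufficiency) and its vertical counterpart then produce fresh planar one-sided solutions, which glue because they live in the disjoint regions~$R_\TT$ and~$R_\RR$. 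Your route is more modular: it reuses the characterization already proved, needs no iterative conflict-resolution or termination argument, and the monotonicity fact is exactly the conceptual justification for why the algorithm may pin ports at bottom-left corners at all. The paper's route is more elementary and self-contained --- it never invokes the strip-condition machinery and works by local surgery on the given solution, the technique reused throughout the paper. One boundary point that both treatments gloss over, and that your caveat~(a) nearly captures: Definition~\ref{def:xysep} allows top leaders to lie \emph{on}~$C$, whereas~$R_\TT$ is open at~$C$, so strictly speaking one should perturb~$C$ slightly downward before quoting Lemma~\ref{lem:condit} (there is room, since right leaders lie strictly below~$C$); with that understood, your argument goes through as stated.
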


\begin{pf}
  We show how to transform~$\mathcal L$ into~$\mathcal L'$. Let~$C$ be
  the $xy$-monotone curve that separates the top leaders  from the right
  leaders of~$\mathcal L$. We move the ports induced by~$\mathcal L$ to
  the bottom-left corner of their corresponding labels such that the
  assignment between labels and sites remains;
  see~Fig.\ref{fig:sliding-ports}. Obviously, the bends of the leaders
  connected to the right side only move downwards. Thus, the leaders
  lie entirely below~$C$. Symmetrically, the bends of the leaders connected to
  the top side only move to the left and thus these leaders lie entirely
  above~$C$.

  Consequently, only conflicts between the same type of leaders can
  arise. Consider the topmost intersection of two
  leaders~$\lambda(p,\ell)$ and~$\lambda(p',\ell')$ connected to the
  right side and assume that~$p$ lies to the left of~$p'$. Let~$K$ be
  the rectangle that is spanned by the bends of~$\lambda(p,\ell)$
  and~$\lambda(p',\ell')$. Due to moving the ports downwards, the
  leaders lie entirely below~$C$ and the bend of~$\lambda(p',\ell')$ must lie
  below~$\lambda(p,\ell)$. Hence,~$K$ lies completely
  in~$R_\RR$. In order to resolve the conflict, we reroute~$p$ to~$\ell'$
  and~$p'$ to~$\ell$ using the bottom-left corners of~$\ell$ and~$\ell'$
  as ports. Obviously, the leaders only change on~$\partial K$. Therefore, new
  conflicts can only arise on the left and bottom sides of~$K$. In
  particular, only the leader of~$\ell'$ can be involved in new
  conflicts, while the leader of~$\ell$ is free of any conflict. Thus,
  after finitely many such steps we have resolved all
  conflicts, from top to bottom. Symmetric arguments apply for the leaders
  connected to the top side.
\end{pf}

\begin{theorem}\label{thm:twosided-sliding-ports-labeling}
  \textsc{Two-Sided Boundary Labeling with Adjacent Sides and Sliding
    Ports} can be solved in~$O(n^2)$ time using~$O(n)$ space.
\end{theorem}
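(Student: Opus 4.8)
The plan is to reduce \textsc{Two-Sided Boundary Labeling with Adjacent Sides and Sliding Ports} to the fixed-port variant already solved in Theorem~\ref{thm:efficient-twosided-labeling}. Concretely, I would fix every port at the bottom-left corner of its label, thereby turning the given sliding-ports instance into an instance with fixed ports, and then simply invoke the algorithm of Theorem~\ref{thm:efficient-twosided-labeling} on the derived instance.

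To see that this reduction is correct, I would show that the sliding-ports instance is solvable if and only if the derived fixed-ports instance is solvable. The \emph{if} direction is immediate: any planar solution of the fixed-ports instance is in particular a planar solution of the sliding-ports instance, since placing a port at the bottom-left corner is a legal choice of sliding port. For the \emph{only if} direction, I would first invoke the sliding-ports version of Proposition~\ref{prop:solut} — which, as observed above, holds verbatim for sliding ports by conceptually fixing the ports of a given solution while applying the rerouting operations — to obtain an $xy$-separated planar solution~$\mathcal L$. Then Lemma~\ref{lem:sliding} transforms $\mathcal L$ into an $xy$-separated planar solution~$\mathcal L'$ in which every port sits at the bottom-left corner of its label. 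Since $\mathcal L'$ is precisely a planar solution of the derived fixed-ports instance, that instance is solvable. This establishes the biconditional, so running the fixed-ports algorithm on the derived instance correctly decides solvability and, when a solution exists, returns one that is simultaneously valid for the sliding-ports instance.

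For the resource bounds, the reduction itself costs only $O(n)$ time and $O(n)$ space (placing the ports at the bottom-left corners), after which Theorem~\ref{thm:efficient-twosided-labeling} runs in $O(n^2)$ time and $O(n)$ space. Hence the overall bounds remain $O(n^2)$ time and $O(n)$ space, as claimed.

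The heavy lifting has already been carried out in Proposition~\ref{prop:solut} and Lemma~\ref{lem:sliding}; the only point left to verify carefully is that solvability is preserved in both directions of the reduction, and in particular that fixing every port at the bottom-left corner never rules out a solution that genuinely sliding ports would have admitted. This is exactly the content of Lemma~\ref{lem:sliding}, so I do not expect any obstacle beyond confirming this equivalence cleanly.
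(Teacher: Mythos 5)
Your proposal is correct and follows essentially the same route as the paper: the paper likewise observes that Proposition~\ref{prop:solut} carries over to sliding ports, applies Lemma~\ref{lem:sliding} to justify fixing all ports at the bottom-left corners of their labels, and then invokes the fixed-port algorithm of Theorem~\ref{thm:efficient-twosided-labeling}. Your explicit spelling-out of the two directions of the reduction matches what the paper leaves implicit.
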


\subsection{Maximizing the Number of Labeled Sites}\label{subsec:maximize}

So far our algorithm only returns a leader layout if there is a planar
solution that matches each label to a site. As Bekos et al.~\cite[Fig.~16]{bkps-afbl-10}
observed, this need not always be the case, so it becomes important
to be able to maximize the number of labels
connected to sites in a planar solution. We achieve this by
removing labels from a given instance and using our algorithm to
decide whether a crossing-free solution exists.

Lemma~\ref{lem:sliding} shows that we can move top ports to the left
and right ports to the bottom without making a solvable instance
unsolvable. Thus, it suffices to remove the rightmost top labels
and the topmost right labels. Let~$k$ be the number of labels we
want to use with~$k_\TT$ of them being top labels
and~$k_\RR$ right labels, so that~$k_\TT+k_\RR=k$. For a given~$k$, we can
decide whether a crossing-free solution that uses exactly~$k$ labels
exists by removing the~$m_\TT - k_\TT$ rightmost
top labels and the~$m_\RR - k_\RR$ topmost right
labels for any possible~$k_\TT$ and~$k_\RR$. We
therefore start with~$k_\TT = \min\{k,m_\TT\}$
and~$k_\RR=k-k_\TT$. We keep decreasing~$k_\TT$
and increasing~$k_\RR$ by~1, until a crossing-free solution
is found or~$k_\RR=\min\{k,m_\RR\}$. In the latter case,
no crossing-free solution that uses exactly~$k$ labels exists.
With this approach we can use binary search to find the maximum~$k$,
using our algorithm up to~$k$ times per step. Since~$k \le n$, this
yields an algorithm for \textsc{Two-Sided Boundary Labeling with
Adjacent Sides} that maximizes the number of labeled sites that
runs in~$O(n^3\log n)$ time and uses~$O(n)$ space.

\begin{theorem}
  \sloppy
  \textsc{Two-Sided Boundary Labeling with Adjacent Sides} can be solved
  in $O(n^3\log n)$ time using~$O(n)$ space such that the number of labeled
  sites is maximized.
\end{theorem}

\ADD{ Assume that~$t$ sites cannot be labeled. Then \textsc{Two-Sided
    Boundary Labeling with Adjacent Sides} can be solved in $O(n^2
  t\log t)$ time using~$O(n)$ space and such that the number of labeled
  sites is maximized. To that end we use the following approach. We
  check for $h=2^i$ with~$i=0,1,2,\dots$ whether there is a
  planar solution with~$h$ unlabeled sites. We stop this procedure
  when we have found such a solution, which takes place
  after~$\lceil \log(t) \rceil$ steps. Using the approach described above, we
  need~$O(n^2t)$ time for each test. We then know that~$\frac{h}{2}<t
  \leq h$. We apply a binary search to determine~$t$. Overall, this
  approach needs~$O(n^2 t\log t)$ time.}

\subsection{Minimizing the total leader length}\label{subsec:minimize}

Recall that, by Proposition~\ref{prop:solut}, there always exists a
length-minimal planar solution that is $xy$-separated. To
obtain a length-minimal planar solution, we mainly change the
table~$T$ used by the dynamic program given in
Section~\ref{sec:two-sided-algorithm}.  Let~$C$ be an $xy$-monotone
curve~$C$ that starts at~$r$ and ends at~$G(s,t)$. We assign to every
table entry the length of the leaders that are connected to the ports
in the rectangle~$K$ spanned by~$r$ and~$G(s,t)$.

\begin{figure}[tb]
  \centering
  \begin{subfigure}{.45\textwidth}
    \centering
    \includegraphics[page=1]{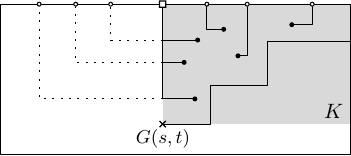}
    \caption{}\label{fig:leader_length_min:1}
  \end{subfigure}
  \begin{subfigure}{.45\textwidth}
    \centering
    \includegraphics[page=2]{leader_length_min}
    \caption{}\label{fig:leader_length_min:2}
  \end{subfigure}
  \caption{\ADD{Illustration of the curve~$C$ and the rectangle $K$
      spanned by~$G(s,t)$ and the top-right corner of~$R$.
      \subref{fig:leader_length_min:1}) There are more sites than
      ports in~$K$ above~$C$. The unlabeled sites are connected to a
      dummy port located at the top-left corner of~$K$. The dummy port
      is illustrated as a square.  \subref{fig:leader_length_min:2})
      There are more ports than sites in~$K$ above~$C$. The unlabeled
      ports are labeled to sites that lie to the left of~$K$, which
      induce the front with bottom-left point~$\vect F_\TT$.  }}
  \label{fig:leader_length_min}
\end{figure}

If there are more sites than top ports in~$K$ above~$C$, we have to
connect some of these sites to ports that lie to the left of~$K$\ADD{; see
Fig.~\ref{fig:leader_length_min:1}}. The vertical lengths of their
leaders, however, are fixed. We imagine a dummy top port at the left
border of~$K$ and connect all unlabeled sites to this port. When
traversing the grid horizontally, this dummy port moves to the
left. In order to update the total length of the leaders in~$K$, we
only have to keep track of the number of unlabeled sites and increase
the horizontal length of their leaders. The sites in~$K$ below~$C$ are
handled analogously.

If there are more top ports than sites in~$K$ above~$C$,
we have to connect these ports to sites that lie to the left of~$K$\ADD{; see Fig.~\ref{fig:leader_length_min:2}}.
In order to remember which sites are already labeled, we store the
\emph{top front} as the rectangle with top-right corner~$r$ that
includes all sites that are already connected to a top port inside~$K$,
and the \emph{right front} as the rectangle with top-right corner~$r$ that
includes all sites that are already connected to a right port inside~$K$.

Let~$\vect F_\TT=(x_\TT,y_\TT)$ be the bottom-left point of the top
front for a given $xy$-monotone curve~$C$ that starts at~$r$ and ends
at~$G(s,t)$. Similarly, let~$\vect F_\RR=(x_\RR,y_\RR)$ be the right
front for~$C$. We define $T[\vect c=(s,t),u,\vect F_\TT,\vect
F_\RR]=(l,g_\TT,g_\RR)$ if there exists an $xy$-monotone curve~$C$ and
leaders inside $K \cup F_\TT \cup F_\RR$ such that the following
conditions hold, otherwise it contains~$(-1,0,0)$.
\begin{compactenum}[(i)]
  \item Curve~$C$ starts at the top-right corner~$r$ of~$R$ and ends
        at~$G(s,t)$.
  \item Inside the rectangle~$K$ spanned by~$r$
        and~$G(s,t)$, there are~$u$ sites of~$P$ above~$C$.
  \item For each strip in the two regions~$R_\TT$ and~$R_\RR$ defined
        by~$C$ the strip condition holds.
  \item The sites in~$K\cup \vect F_\TT\cup \vect F_\RR$ are connected to the
  		  ports on the border of~$K\cup \vect F_\TT\cup \vect F_\RR$ such that the
  		  induced solution is	planar, length-minimal, the sites
  		  above~$C$ or in~$\vect F_\TT$ are only connected to top ports, and
  		  the sites below~$C$ or in~$\vect F_\RR$ are only connected to right
  		  ports.
	\item There are~$g_\TT$ unlabeled top sites and~$g_\RR$ unlabeled
        right sites in~$K$.
\end{compactenum}

Note that~$g_\TT$ and~$g_\RR$ depend on~$s$,~$t$,~$u$ and
can be precomputed, but to make the algorithm more intuitive, we update these values
on-line and store them in~$T$.
We first describe how to handle the top front while traversing the
grid. Initially,~$\vect F_\TT=G(s,t)$. As long as we have more top sites
than top ports in~$K$, we can connect all ports to sites and thus can
maintain~$\vect F_\TT=G(s,t)$. Once we have exactly the same number
of top ports and top sites in~$K$ and we encounter a port event for a top
port, we have to check the strip condition and find the
rightmost point~$F_\TT$ with~$y(\vect F_\TT)=G_y(t)$ such that the rectangle~
$R_{\vect F_\TT}$ spanned by~$\vect F_\TT$ and~$r$ is valid. By storing~$\vect F_\TT$, we
know that all ports to the right of~$x(\vect F_\TT)$ are already connected
to a site, all sites to the top-right of~$\vect F_\TT$ are already
connected to a port, and all top sites to the bottom-left of~$\vect F_\TT$
have to be connected to a port that lies to the left of~$x(\vect F_\TT)$. Thus, we do
not have to check new strip conditions until~$s<x(\vect F_\TT)$. We
handle~$\vect F_\RR$ similarly.

We now look at the length of the top leaders, the length of the
right leaders can be handled similarly. Note that by moving from~$t$
to~$t-1$, the length of the top leaders does not change. If~$g_\TT>0$,
we imagine an additional port at~$x(\vect F_\TT)$ that can be connected to
~$g_\TT$ top sites.  When moving from~$s$ to~$s-1$, we add $g_\TT\cdot(G
_x(s)-G_x(s-1))$ to~$l$.  When we calculate a new value~$\vect F_\TT$ by
checking the strip condition, we can immediately connect all
top sites inside the top front to top ports, and add the
corresponding leader length to~$l$. Thus, we only encounter
site events for sites that are a) inside $\vect F_\TT \setminus K$ or b)
have to be connected to a top port that lies to the left of~$x(\vect F_\TT)$. In case a) we
do not change~$l$, in case b) we connect the site to the imaginary
port, add the length of the corresponding leader to~$l$ and
increase~$g_\TT$ by 1. When we encounter a port event, if the port
lies inside~$\vect F_\TT \setminus K$, we do not change~$l$, otherwise we
can connect any of the unlabeled sites to this port. We add the
horizontal distance between~$G_x(s)$ and the port to~$l$ and
decrease~$g_\TT$ by 1. Note that by choosing any unlabeled site, the
resulting solution may not be planar. However, because the bends of
all unconnected sites will be above~$C$, we can use
Lemma~\ref{lem:boxlemma} to remove the crossings without changing the
total leader length.

Since the matrix now has four additional fields, the running time and
storage is increased by a factor of $n^4$ over the algorithm from
Theorem~\ref{thm:two-sided-correct}.  Additionally, we need~$O(n\log
n)$ time to check the strip condition and to compute a length-minimal
solution for the sites and ports inside~$\vect F_\TT \setminus K$ and~$\vect F_\RR
\setminus K$.

\begin{theorem} \sloppy
  \textsc{Two-Sided Boundary Labeling with Adjacent Sides} can be solved
  in $O(n^8\log n)$ time using~$O(n^6)$ space such that the total
  leader length is minimized.
\end{theorem}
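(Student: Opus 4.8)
The plan is to refine the feasibility dynamic program behind Theorem~\ref{thm:two-sided-correct} into one that computes, instead of a Boolean value, the minimum total leader length of an $xy$-separated planar solution extending a given partial curve. By Proposition~\ref{prop:solut} a length-minimal planar solution is guaranteed to be $xy$-separated, so it suffices to minimize over $xy$-separated solutions; the structural machinery (in particular the strip condition of Lemma~\ref{lem:condit}) carries over unchanged, and what is genuinely new is the bookkeeping of length.

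First I would augment the state. The Boolean table $T[(s,t),u]$ becomes $T[\vect c,u,\vect F_\TT,\vect F_\RR]=(l,g_\TT,g_\RR)$, where $l$ is the minimum length of the leaders incident to ports inside the rectangle $K$ spanned by $r$ and $G(s,t)$ (together with the two fronts), and $g_\TT,g_\RR$ count the currently unlabeled top and right sites in $K$. The difficulty forcing this extra state is that a leader may straddle the boundary of $K$: the portion inside $K$ has a fixed orthogonal length but an as-yet-undetermined parallel length that is only pinned down once the sweep passes the matching port or site. I would resolve this through the two symmetric cases described before the statement, distinguishing whether $K$ currently contains a surplus of sites or a surplus of ports.

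The second step is the length update at each grid event. When $K$ holds more top sites than top ports, I route the surplus sites to an imaginary top port on the left boundary of $K$; each horizontal step from $s$ to $s-1$ then increases $l$ by $g_\TT\cdot(G_x(s)-G_x(s-1))$, and encountering an unmatched top site increments $g_\TT$. When $K$ holds more top ports than top sites, I instead advance the top front $\vect F_\TT$ to the rightmost point making $R_{\vect F_\TT}$ valid, greedily matching the enclosed sites left to right and charging their lengths to $l$; the right side is handled symmetrically via $\vect F_\RR$. The recurrence for $T[\vect c,\cdot]$ takes the minimum over the two predecessors $T[(s+1,t),\cdot]$ and $T[(s,t+1),\cdot]$, adjusting $l$, $u$, $g_\TT$, $g_\RR$, and the fronts according to whether the traversed segment is a port event or a site event, exactly paralleling Lemma~\ref{lem:recurrence} but now propagating accumulated length rather than a truth value.

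The main obstacle, and the point I would argue most carefully, is that the greedy left-to-right (respectively bottom-to-top) matching used to fix the parallel lengths can create crossings among same-type leaders, so the value $l$ is a priori only the length of a possibly non-planar layout. Here I would invoke Lemma~\ref{lem:boxlemma}: because all bends of the unconnected top sites lie above $C$ (and those of the right sites below $C$), the crossings are confined to a rectangle that no leader of the other type meets, hence they can be rerouted away without increasing $\length$. Thus every $l$ computed by the program is realized by a genuinely \emph{planar} $xy$-separated solution of the same length, and conversely every $xy$-separated planar solution induces a consistent table entry, which together establish optimality. Finally I would tally the cost: the four front coordinates multiply the state of Theorem~\ref{thm:two-sided-correct} by $n^4$, and each strip-condition check together with the incremental length computation inside $\vect F_\TT\setminus K$ and $\vect F_\RR\setminus K$ costs $O(n\log n)$, yielding $O(n^8\log n)$ time and $O(n^6)$ space.
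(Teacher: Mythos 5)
Your proposal matches the paper's own proof essentially step for step: the same augmented table $T[\vect c,u,\vect F_\TT,\vect F_\RR]=(l,g_\TT,g_\RR)$, the same imaginary-port trick for surplus sites and front-advancing trick for surplus ports, the same appeal to Lemma~\ref{lem:boxlemma} to repair same-type crossings above (below) the curve without changing the length, and the same accounting giving the $n^4$ blow-up plus the $O(n\log n)$ per-entry cost. Correct, and no meaningful deviation from the paper's argument.
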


Using an appropriate data structure to precompute the fronts, it may
be possible to decrease the running time slightly.

\section{The Three- and Four-Sided Cases}

In this section, we also allow labels on the bottom
and the left side of~$R$.  In order to solve an instance of the three-
and four-sided case, we adapt the techniques we developed for the
two-sided case.  We assume that the ports are fixed and the number of labels and sites is equal. 
In Section~\ref{sec:four-sided-solutions} we first analyze the structure
of planar solutions obtaining a result similar to
Proposition~\ref{prop:solut}.  In
Sections~\ref{sec:algorithm-three-sided-case}
and~\ref{sec:algorithm-four-sided-case}, we present
algorithms for the three- and four-sided cases.

\subsection{Structure of Three- and Four-Sided Planar Solutions}
\label{sec:four-sided-solutions}

Similar to our approach to two-sided boundary labeling, we pursue the idea that if there exists a planar solution, then we can also find a planar solution such that there are four $xy$-monotone curves connecting the
four corners of~$R$ to a common point~$o$, and such that these curves
separate the leaders of the different label types from each other; see
Fig.~\ref{fig:partitions}.
\begin{figure}[tb]
  \centering
  \includegraphics[page=2]{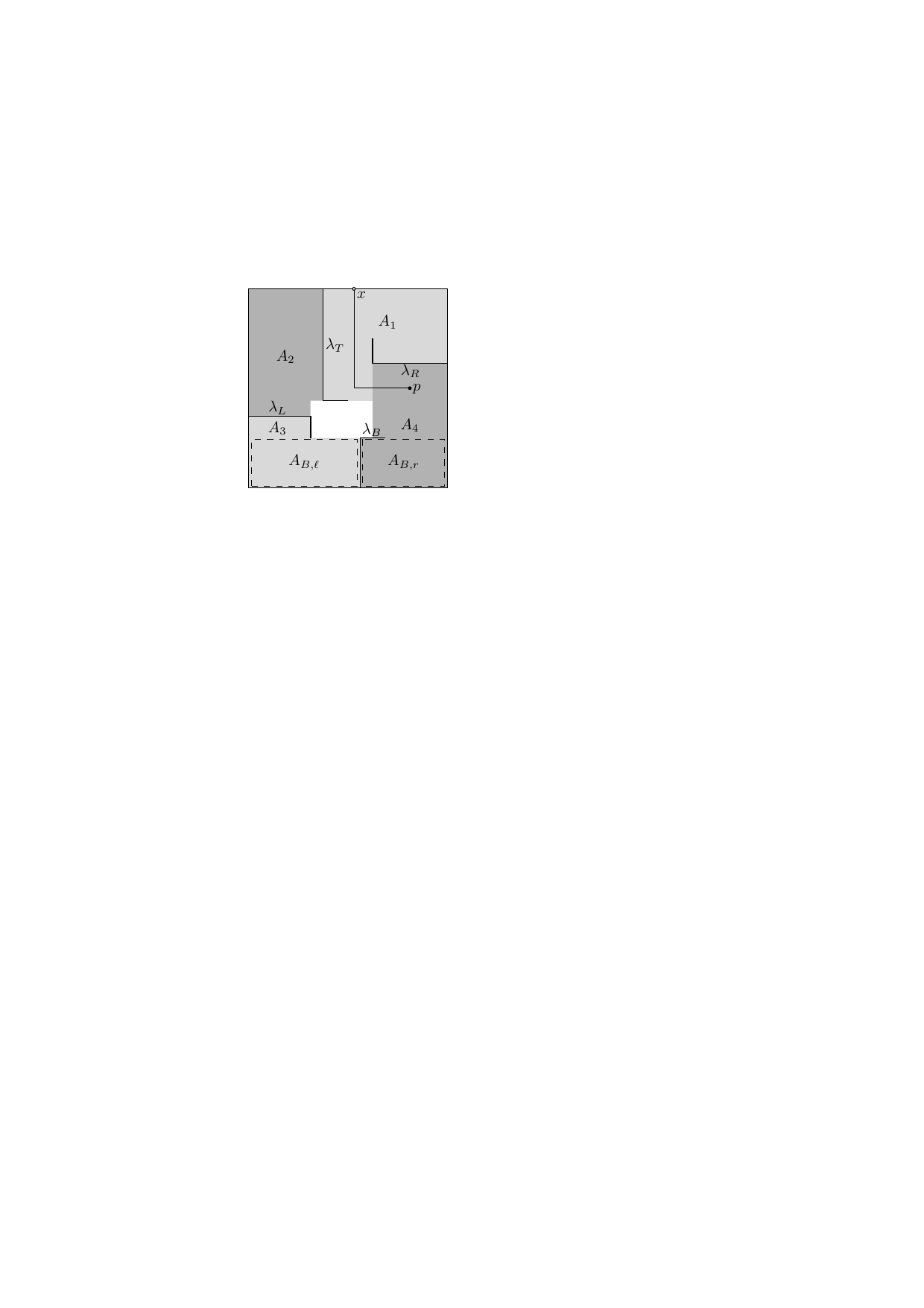}
  \caption{The curves $C_1$, $C_2$, $C_3$ and $C_4$ meeting at the
     point $o$ partition the rectangle into four regions.}
  \label{fig:partitions}
\end{figure}
To that end, we first show that leaders of left and right labels can
be separated vertically and leaders of top and bottom labels can be
separated horizontally.  Afterwards, we apply the result of
Lemma~\ref{lem:xysep} in order to resolve the remaining overlaps,
e.g., between top and right leaders. We first introduce some
notions. 

\begin{definition}\label{def:xsep}
  A planar solution for the four-sided boundary labeling problem is
  \begin{compactenum}[(i)]
  \item \emph{$x$-separated} if there exists a vertical line~$\ell$
    such that the sites that are labeled to the left side are to the
    left of~$\ell$ and the sites that are labeled to the right side
    are to the right of~$\ell$, and
  \item \emph{$y$-separated} if there exists a horizontal line~$\ell$
    such that the sites that are labeled to the top side are
    above~$\ell$ and the sites that are labeled to the bottom side are
    below~$\ell$.
  \end{compactenum}
\end{definition}

A left leader~$\lambda$ and a right leader~$\lambda'$ \emph{overlap}
if $x(\bend(\lambda)) > x(\bend(\lambda'))$. Analogously, a bottom
leader~$\lambda$ and a top leader~$\lambda'$ \emph{overlap} if
$y(\bend(\lambda)) > y(\bend(\lambda'))$. Hence, a planar
solution~$\mathcal L$ is both~$x$-separated and~$y$-separated if and
only if no left and right leaders overlap, and no bottom and top
leaders overlap.  We are now ready to prove that we can always
find a planar solution that is both~$x$-separated and~$y$-separated,
if a solution exists.

\begin{figure}[tb]
  \centering
  \begin{subfigure}{.45\textwidth}
    \centering
    \includegraphics[page=1]{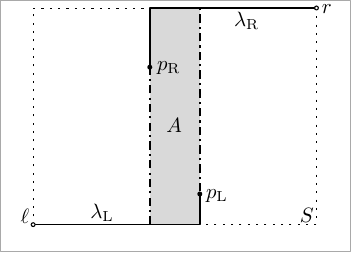}
    \caption{}\label{fig:resolve_overlap:1}
  \end{subfigure}
  \begin{subfigure}{.45\textwidth}
    \centering
    \includegraphics[page=2]{resolving_horizontal_overlap_new}
    \caption{Case 1.}\label{fig:resolve_overlap:2a}
  \end{subfigure}
  \begin{subfigure}{.45\textwidth}
    \centering
    \includegraphics[page=3]{resolving_horizontal_overlap_new}
    \caption{Case 2.1}\label{fig:resolve_overlap:2b}
  \end{subfigure}
  \begin{subfigure}{.45\textwidth}
    \centering
    \includegraphics[page=4]{resolving_horizontal_overlap_new}
    \caption{Case 2.2}\label{fig:resolve_overlap:2c}
  \end{subfigure}
  \caption{Different constellations of leaders intersecting the
    rectangle~$A$.  \subref{fig:resolve_overlap:1}) The rectangle~$A$
    is empty.
    \subref{fig:resolve_overlap:2a})--\subref{fig:resolve_overlap:2c})~Different
    cases where~$A$ is intersected by a top leader. $A$ is not
    explicitly illustrated, but spanned by $\bend(\lambda_\RR)$ and
    $\bend(\lambda_\LL)$.}
  \label{fig:resolve_overlap}
\end{figure}

\begin{lemma}\label{lem:xsep}
\ADD{If there exists a planar solution for the four-sided
  boundary labeling problem, then there exists a planar
  solution~$\mathcal L$ that is both $x$-separated and~$y$-separated.}
\end{lemma}

\begin{pf}
 % \color{blue}
  Among all planar solutions let $\mathcal L$ be one that minimizes
  $|\mathcal L|_x+|\mathcal L|_y$. We prove that then $\mathcal L$ is
  $x$- and $y$-separated by showing that otherwise we could reroute
  some leaders and obtain a planar solution~$\mathcal L'$ with
  $|\mathcal L'|_x+|\mathcal L'|_y < |\mathcal L|_x+|\mathcal L|_y$.

  Assume that~$\mathcal L$ is not $x$-separated. Symmetric arguments
  hold for the case that~$\mathcal L$ is not $y$-separated.  Then
  there exist sites~$p_\RR$ and~$p_\LL$ with~$x(p_\RR) < x(p_\LL)$,
  such that~$p_\RR$ is labeled by a right port~$r$, and~$p_\LL$ is
  labeled by a left port~$\ell$; see
  Fig.~\ref{fig:resolve_overlap:1}. Without loss of generality, assume
  that the horizontal segment of~$\lambda_\RR=\lambda(p_\RR,r)$ is
  above~the horizontal segment of~$\lambda_\LL=\lambda(p_\LL,\ell)$,
  otherwise we mirror the instance vertically.

  We choose $p_\LL$ and $p_\RR$ as a closest pair in the sense that
  the horizontal segments of their leaders have minimum vertical
  distance among all such pairs. Let~$A$ be the rectangle spanned
  by~$\bend(\lambda_\LL)$ and $\bend(\lambda_\RR)$.  By the minimality
  of $p_\LL$ and $p_\RR$, that rectangle can only be intersected by
  top and bottom leader, but not by left or right leaders. If no such
  leader intersects~$A$, we reroute~$p_\RR$ to the port
  of~$\lambda_\LL$ and~$p_\LL$ to the port of~$\lambda_\RR$, which
  decreases~$|\mathcal L|_x$ without increasing~$|\mathcal L|_y$; see
  Fig.~\ref{fig:resolve_overlap:1}. It does not introduce any
  crossings.

  In the following we assume that some leaders intersect~$A$.  Without
  loss of generality we assume that there is a top
  leader~$\lambda_\TT$ that intersects~$A$; otherwise we rotate the
  instance by $180^\circ$.  We denote its site by~$p_\TT$. Let~$S$ be
  the rectangle spanned by the ports~$\ell$ and~$r$; see
  Fig.~\ref{fig:resolve_overlap:1}. Depending on the leaders
  intersecting~$S$, we distinguish two cases.  Note that in
  particular~$\lambda_\TT$ intersects~$S$.

  \textbf{Case 1:} For any top leader~$\lambda$ intersecting~$S$ and
  for any bottom leader~$\lambda'$ intersecting~$S$ such
  that~$\lambda$ and~$\lambda'$ overlap, the site of~$\lambda$ lies to
  the left of the site of~$\lambda'$; see
  Fig.~\ref{fig:resolve_overlap:2a}. Let~$q_\RR$ denote the bottommost
  site that is connected by a right leader, and that lies in the
  rectangle spanned by~$\bend(\lambda_\TT)$ and $p_\RR$. Since~$p_\RR$
  lies in that rectangle, the site~$q_\RR$ exists. We denote the
  leader of~$q_\RR$ by~$\lambda'_\RR$. Further, let~$q_\TT$ be
  the topmost site that is connected by a top leader and that lies in
  the rectangle spanned by~$\bend(\lambda'_\RR)$ and the bottom-right
  corner of $R$. Since~$p_\TT$ lies in that rectangle, the
  site~$q_\TT$ exists. We denote its leader by~$\lambda'_\TT$.

  We now define two rectangles that we use to reroute leaders such
  that~$|\mathcal L|_x+|\mathcal L|_y$ is decreased and arising
  crossings can be resolved. The rectangle~$K_1$ is spanned
  by~$\bend(\lambda'_\RR)$ and $q_\TT$, and the rectangle~$K_2$ is
  spanned by~$\bend(\lambda'_\TT)$ and $q_\RR$.
  
  \begin{claim}\ 
  \begin{compactenum}[(1)]
  \item $K_1$ is only intersected by right leaders whose bends are
    contained in~$K_1$,
    \item $K_2$ is only intersected by top leaders whose bends are
      contained in~$K_2$, and
    \item $K_1$ and $K_2$ are internally disjoint.
  \end{compactenum}
  \end{claim}
   
  Assuming that the claim holds, we can reroute the sites as follows;
  we illustrate this rerouting by dash-dotted lines in
  Fig.~\ref{fig:resolve_overlap:2a}. The site~$q_\TT$ is rerouted to
  the port of~$\lambda'_\RR$ creating crossings only on the right side
  of~$K_1$. The site~$q_\RR$ is rerouted to the port of~$\lambda'_\TT$
  creating crossings only on the top side of~$K_2$.  Each rerouting
  decreases either~$\mathcal |\mathcal L|_x$ or $|\mathcal L_y|$
  increasing the other one. Further, only crossings between leaders of
  the same type are created. We apply Lemma~\ref{lem:boxlemma} to
  resolve the conflicts without increasing $|\mathcal L|_x$ or
  $|\mathcal L_y|$. In the remainder of this case we show that the
  stated claim holds.
  
  First, we show that~$K_1$ is only intersected by right leaders whose
  bends lie in~$K_1$. It is not intersected by any bottom leader,
  because such a leader would overlap~$\lambda'_\TT$, and its site
  would lie to the left of~$q_\TT$---a contradiction to the assumption
  of this case. It is not intersected by any left leader, because such
  a leader would intersect~$\lambda'_\TT$. It is not intersected by
  any top leader, because such a leader would either
  intersect~$\lambda'_\RR$ or contradict the choice
  of~$\lambda'_\TT$. Hence,~$K_1$ can only be intersected by right
  leaders. Further, all those leaders have their bend in~$K_1$,
  because the bottom-right corner is a site connected by a top
  leader. That leader would be intersected if a right leader
  intersecting~$K_1$ had its bend outside of~$K_1$.
 
  Next, we show that~$K_2$ is only intersected by top leaders whose
  bends lie in~$K_2$.  It is not intersected by any
  right leader, because such a leader would contradict the choice
  of~$\lambda'_\RR$ or intersect~$\lambda_\TT$. It is not intersected
  by any bottom leader, because such a leader would
  overlap~$\lambda'_\TT$, and its site would lie to the left
  of~$q_\TT$---a contradiction to the assumption of this case. It is
  not intersected by any left leader, because such a leader would
  intersect~$\lambda'_\TT$. Hence,~$K_2$ can only be intersected by
  top leaders. Further, all those leaders have their bend in~$K_2$,
  because the top-right corner is a site connected by a right leader.
 
  Finally, the rectangles~$K_1$ and~$K_2$ are internally disjoint, because~$K_1$ lies to
  the right of the vertical line through~$q_\RR$, while $K_2$ lies
  to the left of that line. 

  \textbf{Case 2:} There exist a top leader~$\lambda_\TT$
  intersecting~$S$ and a bottom leader~$\lambda_\BB$ intersecting~$S$
  such that they overlap and the site of~$\lambda_\TT$ lies to the
  right of the site of~$\lambda_\BB$; see
  Fig.~\ref{fig:resolve_overlap:2b}. Among
  all such pairs we choose~$\lambda_\TT$ and $\lambda_\BB$ such that
  their horizontal segments have minimal vertical distance. We denote
  the site of~$\lambda_\TT$ by~$p_\TT$ and the site of~$\lambda_\BB$
  by $p_\BB$. Due to the choice of~$\lambda_\TT$ and $\lambda_\BB$,
  the open rectangle that is spanned by~$p_\BB$ and~$p_\TT$ is
  intersected by no leader. The open rectangle spanned
  by~$\bend(\lambda_\TT)$ and $\bend(\lambda_\BB)$ is denoted by~$B$.
  Depending on the sites that are contained in~$B$, we distinguish
  four cases.

  \textbf{Case 2.1:} The rectangle $B$ contains no sites that are
  connected by left of right leaders; see
  Fig.~\ref{fig:resolve_overlap:2b}. Let~$K_1$ be the rectangle
  spanned by~$\bend(\lambda_\TT)$ and $p_\BB$, and let~$K_2$ be the
  rectangle spanned by~$\bend(\lambda_\BB)$ and $p_\TT$. While~$K_1$
  is only intersected by left leaders,~$K_2$ is only intersected by
  right leaders. Further, both rectangles are disjoint. We
  reroute~$p_\BB$ to the port of~$\lambda_\TT$ and $p_\TT$ to the port
  of~$\lambda_\BB$. Obviously, this decreases~$|\mathcal L|_y$ without
  increasing~$|\mathcal L|_x$. By applying Lemma~\ref{lem:boxlemma}, we
  resolve the arising conflicts.

  \textbf{Case 2.2:} The rectangle $B$ contains sites that are
  connected by left leaders as well as sites that are connected by right
  leaders; see Fig.~\ref{fig:resolve_overlap:2c}. Let~$q_\RR$ be the
  bottommost site in~$B$ that is connected to the right. We denote the
  leader of~$q_\RR$ by~$\lambda'_\RR$.  Let~$q_\BB$ be the leftmost
  site with~$y(q_\BB)\geq y(p_\BB)$ and $x(q_\BB)\leq x(p_\BB)$ that
  is connected to the bottom. Since~$p_\BB$ also satisfies these
  requirements, the site~$q_\BB$ exists. We denote the leader
  of~$q_\BB$ by~$\lambda'_\BB$. Let~$q_\LL$ be the topmost site in~$B$
  that is connected to the left. We denote the leader of~$q_\LL$
  by~$\lambda'_\LL$. Finally, let~$q_\TT$ be the rightmost site
  with~$y(q_\TT)\leq y(p_\TT)$ and $x(q_\TT)\geq x(p_\TT)$ that is
  connected to the top. Since~$p_\TT$ also satisfies these
  requirements, the site~$q_\TT$ exists. We denote the leader
  of~$q_\TT$ by~$\lambda_\TT'$.

  We now define four rectangles that we use to reroute leaders such
  that~$|\mathcal L|_x+|\mathcal L|_y$ is decreased and arising crossings
  can be resolved. The rectangle~$K_1$ is spanned by~$\bend(\lambda'_\RR)$
  and~$q_\BB$, the rectangle~$K_2$ is spanned
  by~$\bend(\lambda'_\BB)$ and~$q_\LL$, the rectangle~$K_3$ is spanned
  by~$\bend(\lambda'_\LL)$ and~$q_\TT$, and the rectangle~$K_4$ is
  spanned by~$\bend(\lambda'_\TT)$ and $q_\RR$.  Note that the
  rectangles~$K_3$ and~$K_4$ are rotationally symmetric to~$K_1$
  and~$K_2$, respectively. 
  
  \begin{claim}\ 
    \begin{compactenum}[(1)]
  \item $K_1$ is only intersected by right leaders whose bends are
    contained in~$K_1$,
  \item $K_2$ is only intersected by bottom leaders whose bends are
    contained in~$K_2$,
  \item $K_3$ is only intersected by left leaders whose bends are
    contained in~$K_3$,
  \item $K_4$ is only intersected by top leaders whose bends are
    contained in~$K_4$, and
    \item $K_1$, $K_2$, $K_3$ and $K_4$ are pairwise internally
      disjoint.
  \end{compactenum} 
  \end{claim}
  Assuming that the claim holds, we can reroute the sites in a
  circular fashion as follows; we illustrate the rerouting as
  dash-dotted lines in Fig.~\ref{fig:resolve_overlap:2c}. The
  site~$q_\BB$ is rerouted to the port of~$\lambda'_\RR$ creating
  crossings only on the right side of~$K_1$. The site~$q_\LL$ is
  rerouted to the port of~$\lambda'_\BB$ creating crossings only on
  the bottom side of~$K_2$. The site~$q_\TT$ is rerouted to the port
  of~$\lambda'_\LL$ creating crossing only on the left side
  of~$K_3$. Finally, the site~$q_\RR$ is rerouted to the port
  of~$\lambda'_\TT$ creating crossings only on the top side of~$K_4$.
  Each rerouting decreases either~$|\mathcal L|_x$ or $|\mathcal
  L_y|$ without increasing the other one. Further, only
  crossings between leaders of the same type are created. We apply
  Lemma~\ref{lem:boxlemma} to resolve the conflicts.  In the remainder
  of this case we show that the stated claim holds.
  
  First, we show that~$K_1$ is only intersected by right leaders whose
  bends lie in~$K_1$. This rectangle is not intersected by any bottom
  leader, because~$q_\BB$ is the leftmost site with~$y(q_\BB)\geq
  y(p_\BB)$ and $x(q_\BB)\leq x(p_\BB)$ that is connected to the
  bottom.  It is not intersected by any top leader, because such a
  leader would intersect~$\lambda'_R$ whose site lies
  below~$q_\BB$. Finally, it is not intersected by any left leader,
  because such a leader would intersect~$\lambda'_\TT$ whose site lies
  to the right of~$q_\BB$. Hence, only right leaders
  intersect~$K_1$. In particular, all those leaders have their bend
  in~$K_1$, because the bottom-right corner of~$K_1$ is the site of a
  bottom leader. That leader would be intersected if a right leader
  intersecting~$K_1$ had its bend outside of~$K_1$. Since $K_3$ is
  rotationally symmetric to~$K_1$, we can use symmetric arguments to
  prove that $K_3$ is only intersected by left leaders whose bends are
  contained in~$K_3$

  Next, we show that~$K_2$ is only intersected by bottom leaders whose
  bends lie in~$K_2$.  This rectangle is not intersected by any left
  leader, because such a leader would contradict the choice
  of~$q_\LL$. It is also not intersected by any top leader, because
  such a leader would intersect~$\lambda'_\RR$ or contradict the
  choice of~$\lambda_\TT$ and~$\lambda_\BB$. Finally, it cannot be
  intersected by any right leader, because such a leader would
  intersect~$\lambda'_\BB$. Hence, $K_2$ is only intersected by bottom
  leaders. Further, all those leaders have their bend in~$K_2$,
  because the bottom-left corner of~$K_2$ is a site connected to a
  left leader. That leader would be intersected if a bottom leader
  intersecting~$K_2$ had its bend outside of~$K_2$.  Since $K_4$ is
  rotationally symmetric to~$K_2$, we can use symmetric arguments to
  prove that $K_4$ is only intersected by top leaders whose bends are
  contained in~$K_4$.

  % The rectangle~$K_3$ is spanned by~$\bend(\lambda'_\LL)$
  % and~$q_\TT$. That rectangle is not intersected by any top leader,
  % because such a leader would contradict the choice of~$q_\TT$. It is
  % not intersected by any right leader, because such a leader would
  % intersect~$\lambda'_\BB$. It is not intersected by any bottom
  % leader, because such a leader would intersect~$\lambda'_\LL$, whose
  % site lies above~$q_\TT$. Hence, $K_3$ is only intersected by left
  % leaders. Further, all those leaders have their bend in~$K_3$,
  % because the top-left corner of~$K_3$ is a site connected by a top
  % leader. That leader would be intersected if a left leader
  % intersecting~$K_3$ has its bend outside of~$K_3$.

  % The rectangle~$K_4$ is spanned by~$\bend(\lambda'_\TT)$ and
  % $q_\RR$. That rectangle is not intersected by any right leader,
  % because such a leader would contradict the choice of~$q_\RR$. It is
  % not intersected by any bottom leader, because such a leader would
  % intersect~$\lambda_\LL$ or contradict the choice of~$\lambda_\BB$
  % and~$\lambda_\TT$. Finally, it is not intersected by any left
  % leader, because such a leader would
  % intersect~$\lambda'_\TT$. Hence,~$K_4$ is only intersected by top
  % leaders. Further, all those leaders have their bend in~$K_4$,
  % because the top-right corner is a site connected by a right
  % leader. That leader would be intersected if a top leader
  % intersecting~$K_4$ has its bend outside of~$K_4$.

  Finally, we show that the rectangles~$K_1$, $K_2$, $K_3$ and $K_4$ are pairwise
  internally disjoint. For a site~$p$
  let~$v(p)$ denote the vertical line through~$p$ and let~$h(p)$
  denote the horizontal line through~$p$. By construction we have
  that~$h(q_\BB)$ lies above~$h(q_\TT)$,~$K_1$ lies above~$h(q_\BB)$,
  and~$K_3$ lies below~$h(q_\TT)$. Hence, the rectangles~$K_1$ and
  $K_3$ are internally disjoint. Analogously, we have that~$v(q_\LL)$
  lies to the right of~$v(q_\RR)$,~$K_2$ lies to the right
  of~$v(q_\LL)$, and~$K_4$ lies to the left of~$v(q_\RR)$. Hence, the
  rectangles~$K_2$ and $K_4$ are internally disjoint. Further, the
  sites~$q_\LL$ and~$q_\RR$ lie in between~$h(q_\BB)$ and $h(q_\TT)$,
  because both lie in~$B$. Consequently,~$K_1$ and~$K_3$ do not
  intersect $K_2$ and~$K_4$, respectively.

  % We reroute the sites in a circular fashion. The site~$q_\BB$ is
  % rerouted to the port of~$\lambda'_\RR$ creating crossings only on
  % the right side of~$K_1$. The site~$q_\LL$ is rerouted to the port
  % of~$\lambda'_\BB$ creating crossings only on the bottom side
  % of~$K_2$. The site~$q_\TT$ is rerouted to the port of~$\lambda'_\LL$
  % creating crossing only on the left side of~$K_3$. Finally, the
  % site~$q_\RR$ is rerouted to the port of~$\lambda'_\TT$ creating
  % crossings only on the top side of~$K_4$.  Each rerouting effects
  % that either~$|\mathcal L|_x$ or $|\mathcal L_y|$ is decreased
  % without increasing the other one. Further, only crossings between
  % leaders of the same type are created. We therefore apply
  % Lemma~\ref{lem:boxlemma} to resolve the conflicts.

  \textbf{Case 2.3:} The rectangle $B$ contains only sites connected
  by right leaders. We apply the same procedure as in the previous
  case. However, we do not need to consider left leaders. Hence,~$K_3$
  is removed and $K_2$ is the rectangle that is spanned
  by~$\bend(\lambda'_\BB)$ and~$p_\TT$. By the choice of~$B$, the
  rectangle~$K_2$ is only intersected by right leaders whose bend is
  contained in~$K_2$. Further, the remaining rectangles~$K_1$, $K_2$ and
  $K_4$ are pairwise internally disjoint. The reroutings are again
  done in a circular fashion decreasing $|\mathcal L|_x+|\mathcal
  L|_y$. Finally, we apply Lemma~\ref{lem:boxlemma} to resolve
  crossings.

  \textbf{Case 2.4:} The rectangle $B$ contains only sites connected
  by left leaders. This case can be handled analogously to the
  previous case by mirroring the instance vertically.
\end{pf}

This lemma shows that, when searching for a planar solution of the
labeling problem, we can restrict ourselves to solutions that are
$x$-separated and~$y$-separated.  Let~$\mathcal L$ denote such a
solution, and let~$\ell_v$ and~$\ell_h$ be the lines separating the
sites labeled by left and right labels, and the ones labeled by top
and bottom labels, respectively.  Let~$o \in R$ denote the
intersection of~$\ell_v$ and~$\ell_h$, called \emph{center point}.
Let~$r_1,\dots,r_4$ denote the
corners of~$R$, named in counterclockwise ordering, and such
that~$r_1$ is the top-right corner. Consider the rectangles that are
spanned by~$o$ and~$r_i$ for~$i=1,\ldots,4$. Each of them contains
only two types of leaders. For example, the top-right rectangle
contains only top and right leaders.
An~$x$- and~$y$-separated planar solution is
\emph{partitioned} if, for each rectangle spanned by~$o$ and one of
the corners~$r_i$ of~$R$, there exists an
$xy$-monotone curve~$C_i$ from~$r_i$ to~$o$ that separates the two
different types of leaders contained in that rectangle; see
Fig.~\ref{fig:partitions}. Our next step is to show
that a planar solution can be transformed into a partitioned solution
without increasing~$|\mathcal L|_x$ and~$|\mathcal L|_y$.

\begin{proposition}\label{prop:partitioned}
  If there exists a planar solution~$\mathcal{L}$ for
  \textsc{Four-Sided Boundary Labeling}, then there exists a
  partitioned solution~$\mathcal{L}'$.
\end{proposition}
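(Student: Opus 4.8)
The plan is to derive Proposition~\ref{prop:partitioned} from the two-sided machinery by a global minimality argument. By Lemma~\ref{lem:xsep} the set of planar solutions that are both $x$- and $y$-separated is non-empty, and since leader layouts are determined by the (finitely many) matchings with fixed ports, this set is finite. I would fix in it a solution~$\mathcal L$ of minimum total leader length and claim that this minimizer is already partitioned; the entire argument is a proof by contradiction against minimality. Let $\ell_v$, $\ell_h$, $o$ and $R_1,\dots,R_4$ be the separating lines, the center point, and the four quadrant rectangles as in the text.

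First I would record the quadrant geometry. Because a $po$-leader of a given type never crosses the separating line of its own pair — left leaders stay left of $\ell_v$, bottom leaders below $\ell_h$, and symmetrically — each quadrant is met by only its two adjacent leader types, and more importantly a top leader and a right leader can coexist only inside the top-right quadrant~$R_1$ (top leaders live above $\ell_h$, right leaders right of $\ell_v$, so their common region is exactly $R_1$). Hence it suffices to exhibit in each quadrant an $xy$-monotone curve from the corresponding corner to~$o$ that separates the two adjacent types; I focus on~$R_1$ and the curve~$C_1$ from $r_1$ to~$o$, the remaining three quadrants being symmetric.

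Next, viewing $R_1$ as a two-sided adjacent instance in which $o$ plays the role of the bottom-left corner, I would run the curve construction underlying Lemma~\ref{lem:xysep} inside~$R_1$ on the top and right leaders. If it fails, it exposes a pattern P3 or P4 of Fig.~\ref{fig:patterns} formed by a right leader (site~$p$, port~$r$) and a top leader (site~$q$, port~$t$) with $x(p)<x(q)$ and $y(p)>y(q)$. The crucial observation is that both sites then lie in~$R_1$: from $x(p)>x(o)$ (as $p$ is right-labeled) and $x(p)<x(q)$ we get $x(q)>x(o)$, while from $y(q)>y(o)$ (as $q$ is top-labeled) and $y(p)>y(q)$ we get $y(p)>y(o)$. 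I would then reroute $p\to t$ and $q\to r$ exactly as in Proposition~\ref{prop:solut}, which strictly decreases the total leader length; and since $p$ becomes a top site still above~$\ell_h$ and $q$ a right site still right of~$\ell_v$, the rerouted solution stays $x$- and $y$-separated.

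The main obstacle, and the only genuine departure from the two-sided proof, is removing the crossings introduced by this rerouting. As in Proposition~\ref{prop:solut} the new crossings are confined to the rectangles $A'$ (spanned by $\bend(q,t)$ and~$p$) and $B'$ (spanned by $q$ and~$\bend(p,r)$), but now $A'$ may protrude left of~$\ell_v$ and $B'$ below~$\ell_h$, into quadrants that also carry left, respectively bottom, leaders — precisely the foreign types that would break Lemma~\ref{lem:boxlemma}. The plan is to show these foreign leaders cannot actually reach the rerouting rectangles: in the protruding case the left edge of~$A'$ sits at $x(t)$, where the original leader~$\lambda(q,t)$ runs a vertical segment over $[y(q),H]\supseteq[y(q),y(p)]$, so any left leader entering~$A'$ from the left would have to cross $\lambda(q,t)$, contradicting the planarity of~$\mathcal L$; symmetrically $\lambda(p,r)$ shields~$B'$ from bottom leaders along its bottom edge. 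Therefore $A'$ is crossed only by top leaders and $B'$ only by right leaders, exactly as in the two-sided setting, so Lemma~\ref{lem:boxlemma} applies to each, removing all new crossings without increasing the length and without changing any leader's type, hence preserving $x$- and $y$-separation. The outcome is an $x$- and $y$-separated planar solution of strictly smaller length, contradicting the choice of~$\mathcal L$. Consequently $\mathcal L$ admits all four separating curves $C_1,\dots,C_4$ and is partitioned.
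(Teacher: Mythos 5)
Your proof is correct and follows essentially the same route as the paper's: apply Lemma~\ref{lem:xsep}, eliminate the top--right patterns by the rerouting of Proposition~\ref{prop:solut} repaired via Lemma~\ref{lem:boxlemma}, and then build the quadrant curves as in Lemma~\ref{lem:xysep}; the only structural difference is that you phrase the elimination as a contradiction against a minimum-length $x$- and $y$-separated solution, whereas the paper runs it as an iteration that terminates because $|\mathcal L|_x$ and $|\mathcal L|_y$ both decrease in every step. Your shielding argument---that $\lambda(q,t)$, resp.\ $\lambda(p,r)$, blocks left, resp.\ bottom, leaders from the parts of $A'$, resp.\ $B'$, that protrude past $\ell_v$, resp.\ $\ell_h$, so that Lemma~\ref{lem:boxlemma} still applies---is a correct and welcome filling-in of a detail the paper silently subsumes under ``we proceed as in the proof of Proposition~\ref{prop:solut}.''
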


\begin{pf}
  By Lemma~\ref{lem:xsep}, we can assume that~$\mathcal L$ is $x$-
  and~$y$-separated.  Let~$o$ be the center point as defined above and
  let~$\ell_v$ be the vertical line through~$o$. We show how to ensure
  that the area~$K$ of~$R$ right of $\ell_v$ admits an $xy$-monotone
  curve from the top-right corner of~$R$ to $o$ that separates the top
  leaders from the right leaders inside $K$. The remaining cases are
  symmetric.

  Essentially, we proceed as in the proof of
  Proposition~\ref{prop:solut} to remove obstructions of
  types~(\subref{fig:pattern-1})--(\subref{fig:pattern-4}); see
  Fig.~\ref{fig:patterns}.  We note that in the rerouting, we only
  shorten vertical segments of top leaders and right segments of right
  leaders; hence the solution remains~$x$- and~$y$-separated.
  Moreover, in each step we decrease both $|\mathcal L|_x$ and
  $|\mathcal L|_y$.  Hence, after finitely many steps all patterns
  between top and right leaders have been removed without creating new
  patterns with other types of leaders.

  After all patterns have been removed, an $xy$-monotone curve connecting
  the top-right corner of~$R$ to~$o$, separating the top labels from
  the right labels, can be found as in the proof of Lemma~\ref{lem:xysep}.
\end{pf}

\subsection{Algorithm for the Three-Sided Case}
\label{sec:algorithm-three-sided-case}

In the three-sided case, we assume that the ports of the given instance~$I$
are located on three sides of~$R$; without loss of generality, on the left, top
and right side of~$R$. Basically, we solve a three-sided
instance by splitting the instance into two two-sided $L$-shaped instances that
can be solved independently; see Fig.~\ref{fig:three-sided:sketch}.

\begin{figure}[tb]
  \centering
  \begin{subfigure}{.45\textwidth}
     \centering
   \includegraphics[page=1]{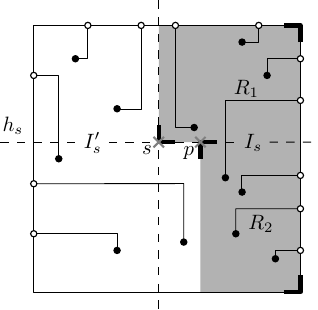}
   \caption{}\label{fig:three-sided:sketch}
 \end{subfigure}
   \begin{subfigure}{.45\textwidth}
     \centering
   \includegraphics[page=2]{three-sided2}
   \caption{}\label{fig:three-sided:first-direction}
 \end{subfigure}
\caption{\subref{fig:three-sided:sketch}) The three-sided instance partitioned
into two two-sided $L$-shaped instances~$I_{s}$ and~$I'_{s}$. The
instances are induced by the grid point~$s$ of $G$ and are balanced.
\subref{fig:three-sided:first-direction}) Illustration of the proof for
Lemma~\ref{lem:three-sided}. Assuming that the grid point $s$ of $G$, the
balanced instances~$I_{s}$ and~$I'_{s}$, and the curves~$C$ and~$C'$ are
given, a planar solution for the whole instance can be constructed.}
\label{fig:three-sided}
\end{figure}

Let~$G$ be the dual of the grid that
is induced by the sites and ports of the given
instance.  The idea is that each grid point~$s$ of~$G$
induces two two-sided $L$-shaped instances with some useful properties. We will
show that
there is a planar solution for~$I$ if and only if there is a grid point $s$ of
$G$ such that its induced two-sided instances both have planar solutions. Thus,
considering all~$O(n^2)$ grid points of~$G$ the problem reduces to solve those
$L$-shaped instances of the two-sided case. By means of a simple adaption of
the dynamic program presented in Section~\ref{sec:two-sided-algorithm} we solve
these instances in~$O(n^2)$ time achieving~$O(n^4)$ running time in total.

In the following we call horizontal and vertical lines through grid points of
$G$ \emph{horizontal} and \emph{vertical grid lines}, respectively.
We now define the two two-sided $L$-shaped instances~$I_s$ and~$I'_s$ of
a grid point~$s$ of~$G$ formally.  To that end, let~$R_1$ be
the rectangle that is spanned by the top-right corner of~$R$ and~$s$, and
let~$R_2(p)$ be the rectangle that is spanned by a point~$p$ on the horizontal
grid line~$h$ through~$s$ and the bottom-right corner of~$R$; see
Fig.~\ref{fig:three-sided:sketch}.
The instance~$I_s(p)$ contains all sites and ports in~$R_1\cup R_2(p)$
and~$I'_s(p)$ contains all sites and ports in~$R \setminus
(R_1\cup R_2(p))$. We say that~$I_s(p)$ and~$I'_s(p)$ are \emph{balanced}
if all right ports lie in~$R_1\cup R_2(p)$, all left ports lie in~$R\setminus
(R_1\cup R_2(p))$ and~$R_1\cup R_2(p)$ contains the same number of sites as it
contains ports. Since the number of ports and sites in~$I$ is equal, this
directly implies that~$R\setminus (R_1\cup R_2(p))$ contains the same number of
sites as it contains ports. In particular, the choice of balanced
instances~$I_s(p)$ and~$I'_s(p)$ for a grid point~$s$ of~$G$ is
unique with respect to the contained sites and ports; only the location
of~$p$ might differ. We can therefore write~$I_s$
and~$I'_s$ for balanced instances and $R_1$ and $R_2$ for their defining
rectangles.
For any solution of~$I_s$ and any solution of~$I'_s$,
we require that all leaders are completely contained in~$R_1\cup R_2$
and in~$R \setminus (R_1\cup R_2)$, respectively.  The next lemma states that a
three-sided instance $I$ has a planar solution if and only if it can be
partitioned into two two-sided $L$-shaped instances that have planar solutions.
To that end let $h_s$ denote the horizontal grid line through $s$.
Figure~\ref{fig:three-sided} illustrates the lemma.

\begin{lemma}\label{lem:three-sided}
  There is a planar solution~$\mathcal L$ for a three-sided instance~$I$ if and
  only if there is a grid point $s$ of~$G$ with balanced instances~$I_s$
  and~$I'_s$ over rectangles~$(R_1,R_2)$, an $xy$-monotone curve~$C$ from the
  top-right corner to the bottom-left corner of~$R$ and an
  $xy$-monotone curve~$C'$ from the top-left corner to the bottom-right corner
  of~$R$ such that
  \begin{compactenum}
   \item each point on~$C$ satisfies the strip condition with respect to the
      ports and sites in~$I_s$,
   \item $C$ contains the top-left corner of~$R_2$ and the intersection of $h_s$
with the left segment of~$R$,
  \item each point on~$C'$ satisfies the strip condition with respect to the
      ports and sites in~$I'_s$,
   \item $C'$ contains the top-left corner of~$R_2$ and the intersection
      of $h_s$ with the right segment of~$R$.
  \end{compactenum}
\end{lemma}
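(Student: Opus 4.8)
The statement is an equivalence, so the plan is to prove the two implications separately, in both cases reducing the three-sided problem to the two-sided adjacent-sides machinery of Section~\ref{sec:two-sided-structure} applied to each of the two $L$-shaped regions. Throughout I write $o=(x(p),y(s))$ for the common inner corner (the top-left corner of $R_2$), and I use that, by conditions~2 and~4, $C$ runs from $r$ to $o$ and then horizontally to $(0,y(s))$ and down to the bottom-left corner, while $C'$ runs from the top-left corner to $o$ and then horizontally to $(W,y(s))$ and down to the bottom-right corner. Thus inside $R_1\cup R_2$ the curve $C$ is an $xy$-separating curve of the top+right instance $I_s$, and inside $R\setminus(R_1\cup R_2)$ the curve $C'$ is, after a horizontal mirror turning the top+left instance $I'_s$ into a top+right one, an $xy$-separating curve of $I'_s$.

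For the sufficiency direction I would build $\mathcal L$ by solving the two $L$-shaped instances independently and taking the union; see Fig.~\ref{fig:three-sided:first-direction}. Since every point of $C$ satisfies both strip conditions with respect to $I_s$, the $L$-shaped analogue of Lemma~\ref{lem:condit} (and its vertical-strip mirror) lets me match all top ports of $I_s$ to sites of $I_s$ above $C$ and all right ports to sites below $C$, with every leader contained in $R_1\cup R_2$; balancedness of $I_s$ guarantees the numbers agree so that no port is left over. The mirrored argument applied to $I'_s$ with $C'$ produces a planar labeling whose leaders stay in $R\setminus(R_1\cup R_2)$. Because the two regions share only the boundary staircase through $o$ and neither instance has sites or ports on the other's side of that staircase, the two partial labelings are interior-disjoint, and their union is a planar solution of $I$ using all ports.

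For the necessity direction I would start from a planar solution $\mathcal L$ and manufacture a witnessing split. Regarding $I$ as a four-sided instance with no bottom labels, Proposition~\ref{prop:partitioned} yields a partitioned solution; since there are no bottom leaders, the horizontal separating line may be pushed onto the bottom side, so the center point lies on the bottom side and the vertical line $\ell_v$ through it cuts $R$ into a right rectangle (carrying only top and right leaders, separated by an $xy$-monotone curve $C_1$ from $r$ to the center) and a left rectangle (carrying only top and left leaders, separated by a curve $C_2$). I would take $s=o$ to be the grid point where $\ell_v$ meets the bottom side and set $p=s$, so that $R_2$ degenerates and $R_1\cup R_2$ is exactly the right rectangle; then $C$ is $C_1$ continued along the bottom to $(0,0)$ and $C'$ is $C_2$ continued along the bottom to $(W,0)$. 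Balancedness is immediate because in the partitioned solution every site of a rectangle is matched to a port on that rectangle, and the strip conditions~1 and~3 follow from the necessity direction of Lemma~\ref{lem:condit} and its mirror applied to the two one-sided halves of each rectangle (the added bottom tails lie left of, respectively right of, all sites and ports of the relevant instance and hence satisfy the conditions trivially).

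The main obstacle is the $L$-shaped adaptation invoked in both directions: I must verify that the strip-by-strip inductive construction behind Lemma~\ref{lem:condit} still yields a crossing-free labeling when its domain is the union $R_1\cup R_2$ of two rectangles meeting along $h_s$ rather than a single rectangle, and, dually, that the necessity of the strip condition persists there. Conditions~2 and~4 are exactly what make this work: routing both $C$ and $C'$ through the common inner corner $o$ forces the $I_s$-leaders into $R_1\cup R_2$ and the $I'_s$-leaders into its complement, so the two sub-solutions cannot interfere; and the balancedness condition is what lets the counting argument of Lemma~\ref{lem:condit} close on each side. Once these two points are established, assembling the solution in the sufficiency direction and reading off the witness from the partition in the necessity direction are routine.
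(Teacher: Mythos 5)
Your sufficiency direction is essentially the paper's: split along $C$ and $C'$, solve each sub-instance via Lemma~\ref{lem:condit} and its vertical mirror, and take the union. One remark: the ``$L$-shaped analogue of Lemma~\ref{lem:condit}'' that you flag as the main obstacle is not actually needed. The paper applies Lemma~\ref{lem:condit} verbatim to the sub-instance consisting of the sites and ports of~$I_s$, viewed on the \emph{full} rectangle~$R$ (this is exactly what ``strip condition with respect to the ports and sites in~$I_s$'' means in condition~1), and then uses condition~2 --- that $C$ passes through the top-left corner of~$R_2$ and follows~$h_s$ to the left side --- to conclude a posteriori that all resulting leaders lie in~$R_1\cup R_2$. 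So that obstacle dissolves, and this half of your argument is sound.

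The necessity direction, however, has a genuine gap. You take the partitioned solution from Proposition~\ref{prop:partitioned} (center~$o$ on the bottom side), set $s=o$, let $R_2$ degenerate so that $R_1\cup R_2$ is the full right rectangle cut off by~$\ell_v$, and justify balancedness by asserting that ``in the partitioned solution every site of a rectangle is matched to a port on that rectangle.'' That assertion is false: $x$-separation constrains only left versus right leaders, and being partitioned constrains only pairs of types within each corner rectangle, but a \emph{top} leader may cross~$\ell_v$ --- a top port just right of~$\ell_v$ labeling a site just left of it is fully consistent with planarity, $x$-/$y$-separation, and partitionedness, since top leaders are an admissible type on both sides of~$\ell_v$. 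Whenever this happens, $I_s$ and~$I'_s$ fail to be balanced (a port of the right rectangle has its site outside it), the restriction of~$\mathcal L$ to either rectangle is not a solution of the corresponding sub-instance, and therefore your appeal to the necessity half of Lemma~\ref{lem:condit} for conditions~1 and~3 collapses as well. The defect is structural, not cosmetic: one can build partitioned planar solutions in which a descending cascade of top leaders spans the entire horizontal gap between the left-labeled and the right-labeled sites, so \emph{every} admissible vertical line is crossed by some top leader and no witness with degenerate~$R_2$ can be extracted --- the notch of the $L$-shape is essential. This is precisely why the paper's proof instead derives~$s$ from the extremal top leader~$\lambda_\TT$ (longest vertical segment): the horizontal grid line~$h$ just below~$\bend(\lambda_\TT)$ lies below all top leaders, the vertical grid line~$v$ adjacent to~$\lambda_\TT$ is crossed by no top leader (such a crossing would force a crossing with~$\lambda_\TT$ itself), and a second vertical grid line~$v'$, just left of the bend of the extremal right leader~$\lambda_\RR$, bounds~$R_2$ and separates left from right leaders below~$h$. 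These choices guarantee that every leader of~$\mathcal L$ lies entirely inside or entirely outside~$R_1\cup R_2$, which is what balancedness and the strip conditions for~$C$ and~$C'$ actually require.
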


\begin{proof}
  First, assume that $s$, $I_{s}$, $I'_{s}$, $(R_1,R_2)$, $C$ and~$C'$
  exist as required; see Fig.~\ref{fig:three-sided:first-direction}. The curve~$C$
  partitions~$R$ into two regions; we denote the region above~$C$
  by~$A_1$ and the region below~$C$ by~$A_2$. By Lemma~\ref{lem:condit}, there
  is a planar solution~$\mathcal L_1$ for the sites and ports in~$A_1$ such
  that all leaders
  of~$\mathcal L_1$ lie in~$A_1$. Since~$C$ contains
  the top-left corner of~$R_2$ and does not cross~$h_s$ until it reaches the 
  intersection point of~$h_s$ with the left
  segment of~$R$, we know that all leaders of
  $\mathcal L_1$ are contained in~$R_1\cup R_2$. Analogously, there is a
  planar
  solution~$\mathcal L_2$ for the sites and ports in~$A_2$ such that all leaders
  of~$\mathcal L_2$ lie in~$A_2$. Consequently, we can combine~$\mathcal L_1$
  and~$\mathcal L_2$ into a planar solution~$\mathcal L_s$ for the sites and
  ports in~$I_s$. Using symmetric arguments, we obtain a planar
solution~$\mathcal
  L'_s$ for~$I'_s$. As~$I_s$ and~$I'_s$ are defined over complementary areas,
  the solutions~$\mathcal L_s$ and~$\mathcal L'_s$ can be combined into a planar
  solution of~$I$.

  \begin{figure}[tb]
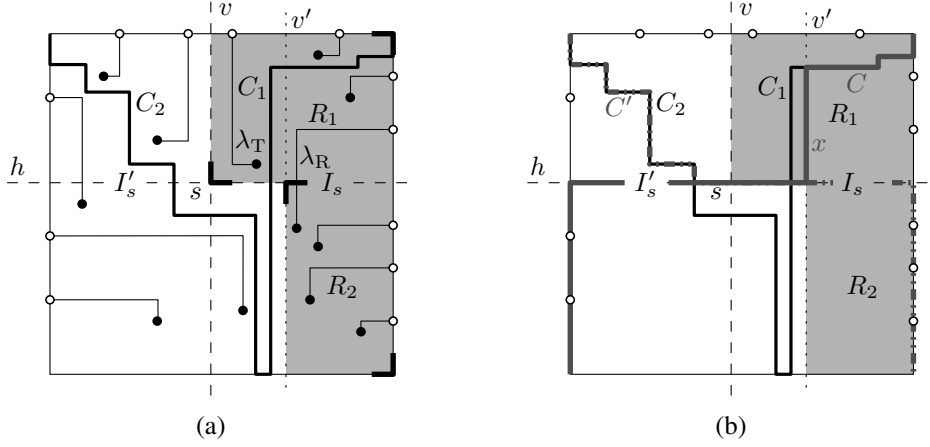

    \centering
    \begin{subfigure}{.45\textwidth}
       \centering
     \includegraphics[page=3]{three-sided2}
     \caption{}\label{fig:three-sided:partitioned}
   \end{subfigure}
     \begin{subfigure}{.45\textwidth}
       \centering
     \includegraphics[page=4]{three-sided2}
     \caption{}\label{fig:three-sided:curves}
   \end{subfigure}
   \caption{Illustration of the proof for
  Lemma~\ref{lem:three-sided}. It is assumed that the partitioned planar
  solution~$\mathcal L$ for the three-sided instance is given.
  \subref{fig:three-sided:partitioned}) By Proposition~\ref{prop:partitioned} we
  can assume that~$\mathcal L$ is partitioned by the curves~$C_1$ and~$C_2$. The
  extremal top leader~$\lambda_\TT$ induces the site~$s$ and the extremal
  right leader~$\lambda_\RR$ induces the line~$v'$.
  \subref{fig:three-sided:curves}) Based on $C_1$, $C_2$, $h$ and~$v'$, the
  curves~$C$ and~$C'$ can be constructed such that they do not cross any leader
  of~$\mathcal L$. }\label{fig:three-sided2}
  \end{figure}

  Assume that there is a planar solution~$\mathcal L$ for a three-sided
  instance~$I$; see Fig.~\ref{fig:three-sided2}. First, note that we can
imagine an instance of \textsc{Three-Sided
  Boundary Labeling} as a degenerated instance of \textsc{Four-Sided Boundary
  Labeling} with no bottom ports. Thus, Proposition~\ref{prop:partitioned} also holds for the
  three-sided case, when assuming that the four $xy$-monotone curves partitioning
  the solution meet on the bottom segment of~$R$. Hence, without
  loss of generality, we assume that~$\mathcal L$ is also partitioned by four
  $xy$-monotone curves $C_1$, $C_2$, $C_3$ and~$C_4$. In particular, let~$C_1$
  denote the curve
  that starts at the top-right corner of~$R$ and let~$C_2$ denote the curve that
  starts at the top-left corner of~$R$; see
  Fig.~\ref{fig:three-sided:partitioned}. The curves~$C_3$ and~$C_4$ are
  completely contained in the bottom side of~$R$ and can therefore be
  omitted. We first show how to construct the grid point~$s$ and the
  instances~$I_{s}$ and~$I'_{s}$ such that they are balanced. Afterwards, we
  explain how to obtain~$C$ and~$C'$ from~$C_1$ and~$C_2$, respectively. Finally,
  we prove that each point on $C$ and $C'$ satisfies the strip condition with
  respect to $I_s$ and $I'_s$, respectively.

  Let~$\lambda_\TT$ be the top leader in~$\mathcal L$ with the longest vertical segment
  of all top leaders in~$\mathcal L$. In case the site of~$\lambda_\TT$
  lies to the right of~$\bend(\lambda_\TT)$, let~$v$ be the rightmost vertical
  grid line that lies to the left
  of~$\lambda_\TT$, and otherwise if the site of~$\lambda_\TT$ lies to the
  left of~$\bend(\lambda_\TT)$, let $v$ be the leftmost vertical grid line that
  lies to the right of~$\lambda_\TT$.  Furthermore, let~$h$
  be the topmost horizontal grid line that lies
below~$\bend(\lambda_\TT)$; see
  Fig.~\ref{fig:three-sided:partitioned}. Due to the choice of~$h$ and~$v$ all
  top leaders lie above~$h$ and none of them intersects~$h$ or~$v$. Furthermore, no
  right  or left leader of~$\mathcal L$ intersects $v$ above~$h$.
   The desired grid point $s$ is then the
  intersection point of $h$ and $v$.

  Now, let~$\lambda_\RR$ be the right leader in~$\mathcal L$ with longest
  horizontal segment among all right leaders in~$\mathcal L$ and let~$v'$ be
  the rightmost vertical grid line that lies to the left
  of~$\bend(\lambda_\RR)$. Note that $v'$ cannot be intersected by a left
  or a right leader, because both leader types are~$x$-separated. We
  define~$R_1$ to be the rectangle that is spanned by the top-right corner
  of~$R$ and $s$. Also, we define~$R_2$ to be the rectangle spanned by the
  bottom-right corner of~$R$ and the intersection point of $v'$ and $h$.
  The instance~$I_{s}$ is defined by~$R_1 \cup R_2$ and the instance~$I'_{s}$
  by~\ADD{$R \setminus (R_1\cup R_2)$}. We show that~$I_s$ and~$I'_s$ are
  balanced. To that end, we prove that a leader of~$\mathcal L$ is
  either completely contained in~$R_1\cup R_2$ or in~\ADD{$R \setminus (R_1 \cup R_2)$},
  that~$R_1\cup R_2$ contains only right and top leaders, and that~\ADD{$R \setminus (R_1
  \cup R_2)$} contains only left and top leaders.

  Due to the choice of~$v'$, all right leaders lie to the right of $v'$.
  Moreover, all right leaders whose site or port lies above~$h$, must lie to
  the right of~$v$, because by
  definition of~$v$ no right leader intersects $v$ above~$h$ \ADD{(otherwise it would intersect~$\lambda_T$)}, and because
  otherwise $C_1$ could not be an
  $xy$-monotone curve separating right and top  leaders.
  Thus, all right leaders lie in~$R_1\cup R_2$.
  For left leaders we can argue similarly. Since left and right leaders
  of~$\mathcal L$ are $x$-separated, all left leaders lie to the left of $v'$.
  All left leaders whose site or port lies above~$h$, must lie to the
  left of $v$, because by definition of~$v$ no left leader intersects~$v$
  above~$h$, and because otherwise~$C_2$ could  not be an~$xy$-monotone curve
  separating left from top leaders. Thus, all left leaders lie in~\ADD{$R\setminus
  (R_1\cup R_2)$}.
  Finally, consider the top leaders in~$\mathcal L$. By definition of~$h$
  and~$v$,  none of the top leaders intersects $h$ or $v$. In particular all
  top leaders lie above $h$ and cannot intersect~$R_2$. Consequently, each
  top leader is either contained in $R_1$ or in $R \setminus (R_1\cup R_2)$.
  This concludes the argument that~$I_{s}$ and~$I'_{s}$ are balanced. 

  We are left with the construction of the curves~$C$ and $C'$; see Fig.~\ref{fig:three-sided:curves}. 
  The curve~$C$ is derived from~$C_1$ as follows. Starting at the top-right corner of~$R$,
  the curve~$C$ coincides with~$C_1$ until~$C_1$ intersects~$h$ or~$v'$ above~$h$.
  If $C$ intersects~$v'$ above~$h$, it follows~$v'$ downwards until it
  hits~$h$. Then, in both cases, it follows~$h$ until~$h$ intersects the left
  segment of~$R$. Finally,~$C$ follows the left segment of~$R$ to the
  bottom-left corner of~$R$. The curve~$C'$ is constructed symmetrically.

  By construction,~$C$ contains the top-left corner of~$R_2$ and the intersection
  point of~$h$ with the left segment of~$R$. Symmetrically,~$C'$ contains the
  top-left corner of~$R_2$ and the intersection point of~$h$ with the right
  segment of~$R$. We finally show that each point on~$C$ satisfies the strip
  condition with respect to the sites and ports in~$I_{s}$. Using symmetric
  arguments we can prove the analogous statement for~$C'$ and~$I'_s$.

  By the previous reasoning, we know that each leader of~$\mathcal L$
  either lies completely inside or completely outside of~$R_1\cup
  R_2$.  Each leader that lies in~$R_1\cup R_2$ is either a
  top or a right leader and does not intersect~$C$. Otherwise, if such
  a leader intersected~$C$, it would also intersect~$C_1$ or the
  segment~$x$ of~$v'$ that is contained in~$C$. In particular,~$x$
  cannot be intersected by any leader because it lies to the left of
  all right leaders and below~$C_1$. Thus, the leaders in~$R_1\cup
  R_2$ form a planar solution for~$I_{s}$ without
  intersecting~$C$. Hence, the claim directly follows from
  Lemma~\ref{lem:condit}.
\end{proof}

\begin{figure}[tb]
  \centering
   \includegraphics[page=5]{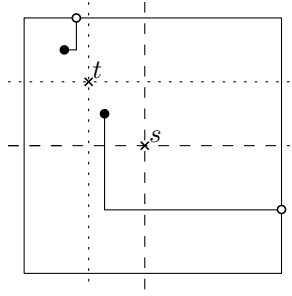}
   \caption{There are no balanced instances~$I_{s}$ and~$I'_{s}$ for the
grid point~$s$. However, by Lemma~\ref{lem:three-sided} there must be
another grid point~$t$ with balanced instances~$I_{t}$
and~$I'_{t}$ if the
instance has a planar solution. }
   \label{fig:three-sided:no-balanced-instances}
\end{figure}

Our approach now works as follows. For each grid point~$s$ of~$G$ we
compute the instances~$I_{s}$ and~$I'_{s}$ such that they are
balanced.  Then, by Lemma~\ref{lem:three-sided}, we can apply our
algorithm presented in Section~\ref{sec:two-sided-algorithm} in order
to solve~$I_{s}$ and~$I'_{s}$ independently.  To that end, we slightly
adapt the dynamic program such that it only considers curves
satisfying the properties required by Lemma~\ref{lem:three-sided}.  If
both instances can be solved, we combine these solutions into one
solution and return that solution as the final result. Otherwise, we
continue with the next grid point of~$G$. If all grid points of~$G$
have been explored without finding a planar solution, the algorithm
decides that there is no planar solution.

Note that it may happen that, for a grid point $s$, there are no balanced
instances~$I_{s}$ and~$I'_{s}$; for an example see
Fig.~\ref{fig:three-sided:no-balanced-instances}. However, in that case, if~$I$
has a solution, we also know by Lemma~\ref{lem:three-sided} that there is
another grid point~$t$ such that for~$t$ we find balanced instances.
Hence, we can refrain from considering~$s$.

Creating the two instances~$I_{s}$ and~$I'_{s}$ for a grid point~$s$ 
takes linear time, if we assume that the sites are sorted by their
$x$-coordinates. By Theorem~\ref{thm:efficient-twosided-labeling} we
then need~$O(n^2)$ time and~$O(n)$ space to solve~$I_{s}$ and~$I'_{s}$.
Consequently, we need~$O(n^2)$
time and~$O(n)$ space to process a single grid point~$s$. Since we
consider~$O(n^2)$ grid points, the following theorem follows.

\begin{theorem}\label{thm:efficient-threesided-labeling}
  \textsc{Three-Sided Boundary Labeling} can be solved in~$O(n^{4})$
  time using~$O(n)$ space.
\end{theorem}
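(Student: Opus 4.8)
The plan is to turn the characterization of Lemma~\ref{lem:three-sided} directly into a search over grid points, invoking the efficient two-sided algorithm of Theorem~\ref{thm:efficient-twosided-labeling} as a black box. Concretely, I would iterate over all grid points~$s$ of the dual grid~$G$. For each such~$s$, I first construct the (unique) balanced instances~$I_s$ and~$I'_s$ over the rectangle pair~$(R_1,R_2)$, if they exist. Assuming the sites have been pre-sorted by $x$-coordinate in an $O(n\log n)$ preprocessing step, this construction takes only linear time per grid point: deciding which sites and ports fall into~$R_1\cup R_2$ is a counting task, and the location of~$p$ on~$h_s$ is forced by the balancing requirement. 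If no balanced instances exist for~$s$, I skip it; by Lemma~\ref{lem:three-sided} this is safe, because any solvable instance is certified by \emph{some} grid point, so overlooking a non-balanced one loses nothing.

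Next, I would solve~$I_s$ and~$I'_s$ independently using a slight adaptation of the dynamic program from Section~\ref{sec:two-sided-algorithm}. Two modifications are needed: (a)~the domain is now the $L$-shaped region~$R_1\cup R_2$ (respectively its complement) rather than a single rectangle, and (b)~the search must be restricted to $xy$-monotone curves passing through the points prescribed in Lemma~\ref{lem:three-sided}, namely the top-left corner of~$R_2$ and the intersection of~$h_s$ with the left (respectively right) boundary of~$R$. Modification~(a) is essentially free: the strip condition and the recurrence of Lemma~\ref{lem:recurrence} only reference sites and ports inside the rectangle spanned by~$r$ and the current grid point, so restricting to the $L$-shaped region merely changes which grid cells carry entities. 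Modification~(b) amounts to forbidding exactly those table transitions that would route the curve away from the mandatory grid points, which leaves the $O(1)$ cost per cell untouched.

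By Theorem~\ref{thm:efficient-twosided-labeling} each adapted subinstance is then solved in~$O(n^2)$ time using~$O(n)$ space. If both~$I_s$ and~$I'_s$ admit planar solutions with curves~$C$ and~$C'$ of the required form, I combine them—legitimate because~$I_s$ and~$I'_s$ are defined over complementary regions—and return the result; otherwise I continue with the next grid point, and if all fail I report that no planar solution exists. Correctness is immediate from Lemma~\ref{lem:three-sided}. For the complexity, processing a single grid point costs~$O(n^2)$ time and~$O(n)$ space, and there are~$O(n^2)$ grid points, giving~$O(n^4)$ time. Reusing the~$O(n)$ working space across grid points (only the best solution found is retained) keeps the total space at~$O(n)$, establishing the theorem.

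The step I expect to be the main obstacle is verifying that modification~(b) genuinely preserves both the $O(n^2)$ running time and the dimension reduction of Lemma~\ref{lem:interval}, on which the $O(n)$ space bound rests. The concern is whether forcing the curve through a fixed point is compatible with the interval structure of admissible $u$-values. I would argue that pinning the curve to a prescribed grid point simply fixes the number~$u$ of sites above the curve at that point, so the effect on the dynamic program is to intersect the maintained $u$-interval with a single admissible value. This is a benign operation that keeps the set of reachable states an interval and therefore does not disrupt the storage-saving argument inherited from Theorem~\ref{thm:efficient-twosided-labeling}.
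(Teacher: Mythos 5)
Your proposal is correct and follows essentially the same route as the paper: enumerate the $O(n^2)$ grid points of~$G$, construct the (unique) balanced $L$-shaped instances~$I_s$ and~$I'_s$ in linear time, solve each with the adapted two-sided dynamic program of Theorem~\ref{thm:efficient-twosided-labeling} in~$O(n^2)$ time and~$O(n)$ space, and invoke Lemma~\ref{lem:three-sided} for correctness, yielding $O(n^4)$ time and $O(n)$ space overall. Your extra discussion of why pinning the curve to prescribed points preserves the interval structure of Lemma~\ref{lem:interval} is a detail the paper leaves implicit, and your argument for it is sound.
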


Note that, except for the length minimization, our approach for the three-sided
case also carries over to the extensions from
Section~\ref{sec:two-sided-extensions}, because we only solve subinstances
of~\textsc{Two-Sided Boundary Labeling with Adjacent Sides}. In particular with
corresponding impact on the running time we can soften the restriction that the number
of labels and sites is equal.

\subsection{Algorithm for the Four-Sided Case}
\label{sec:algorithm-four-sided-case}

In this section, we consider the case that the ports lie on all four
sides of~$R$.  The main idea is to seek a partitioned solution, which
exists by Proposition~\ref{prop:partitioned}. For a given partitioned
solution~$\mathcal L$, we call a leader \emph{extremal} if all other
leaders of the same type in~$\mathcal L$ have shorter orthogonal
segments; recall that the \emph{orthogonal} segment of a $po$-leader
is the segment connecting the bend to the port.  The algorithm
consists of two steps. First, we explore all choices of
(non-overlapping) extremal leaders~$\lambda_\LL$ and~$\lambda_\RR$ for
the left and the right side of~$R$, respectively, plus a horizontal
line~$h$ that separates the top leaders and the bottom leaders. This
information splits the instance into two independent three-sided
instances; see Fig.~\ref{fig:separation-four-sided}.  There are,
however, two crucial differences from a usual three-sided
instance. First, one side of the instance is not a straight-line
segment but an $x$-monotone orthogonal curve~$C$ that is defined
by~$\lambda_\LL, \lambda_\RR$ and~$h$.  Second, the extremal positions
of~$\lambda_\LL$ and~$\lambda_\RR$ imply a separation of the points
that are labeled from the left and the right side. Let~$I^3_1$ be the
three-sided instance above~$C$ and let~$I^3_2$ be the three-sided
instance below~$C$. The algorithm solves~$I^3_1$ and~$I^3_2$
independently from each other. If for at least one of the two
instances there is no solution, the algorithm continues with the next
choice of~$\lambda_\LL$, $\lambda_\RR$ and~$h$. Otherwise, it combines
the planar solutions of~$I^3_1$ and~$I^3_2$ into one planar solution and
returns this solution. In case that all choices of~$\lambda_\LL$,
$\lambda_\RR$ and~$h$ have been explored without finding a solution,
the algorithm returns that there is no planar solution.

We next describe how to solve the three-sided instance~$I^3_1$.  A
symmetric approach can be applied to~$I^3_2$. The algorithm explores
all choices of the extremal leader~$\lambda_{\TT}$ for the top side
of~$R$. This extremal leader partitions the instance into two
two-sided subinstances~$I^2_1$ and~$I^2_2$ as follows.
Let~$A_{\TT,\RR}$ be the rectangle that is spanned
by~$\bend(\lambda_\TT)$ and the top-right corner of~$R$; see
Fig.~\ref{fig:separation-four-sided:areas}. Analogously,
let~$A_{\TT,\LL}$ be the rectangle that is spanned
by~$\bend(\lambda_\TT)$ and the top-left corner of~$R$.  Analogously,
for~$\lambda_\RR$ we define the area~$A_{\RR,\TT}$ to be the rectangle
that is spanned by~$\bend(\lambda_\RR)$ and the top-right corner
of~$R$, and~$A_{\RR,\BB}$ to be the rectangle spanned
by~$\bend(\lambda_\RR)$ and the bottom-right corner of~$R$. For the
leader~$\lambda_\LL$ we define~$A_{\LL,\BB}$ to be the rectangle
spanned by~$\bend(\lambda_\LL)$ and the bottom-left corner of~$R$,
and~$A_{\LL,\TT}$ to be the rectangle spanned by~$\bend(\lambda_\LL)$
and the top-left corner of~$R$. We assume that the port~$p$
of~$\lambda_\TT$ is only contained in that area~$A \in
\{A_{\TT,\RR},A_{\TT,\LL}\}$ that also contains the site
of~$\lambda_\TT$. We make analogous assumptions for~$\lambda_\LL$
and~$\lambda_\RR$.

The instance~$I^2_1$ consists of all ports and sites
in~$A_1=(A_{\RR,\TT}\cup A_{\TT,\RR}) \setminus (A_{\TT,\LL}\cup
A_{\RR,\BB})$, and~$I^2_2$ consists of all ports and sites
in~$A_2=(A_{\LL,\TT}\cup A_{\TT,\LL}) \setminus (A_{\TT,\RR}\cup
A_{\LL,\BB})$; see Fig.~\ref{fig:four-sided-instances}.  We
solve~$I^2_1$ and~$I^2_2$ independently from each other using the
dynamic program introduced in Section~\ref{sec:two-sided-algorithm}
for each instance. However, we enforce that it only considers
$xy$-monotone curves that exclude top leaders crossing the horizontal
line through~$\bend(\lambda_\TT)$, left leaders crossing the vertical
line through~$\bend(\lambda_\LL)$ and right leaders crossing the
vertical line through~$\bend(\lambda_\RR)$.
If for at least one of the two
instances there is no solution, the algorithm continues to
explore the next choice of~$\lambda_{\TT}$. Otherwise, it combines the
solutions of~$I^2_1$ and~$I^2_2$ into one solution and returns the
result as the solution of~$I^3_1$. In case that all choices of
$\lambda_{\TT}$ have been explored without finding a solution, the
algorithm returns that there is no solution for the given three-sided
instance. The following lemma shows that the algorithm is correct.

\begin{figure}[tb]
  \centering
  \begin{subfigure}{.30\textwidth}
    \centering
    \includegraphics[scale=1.0,page=1]{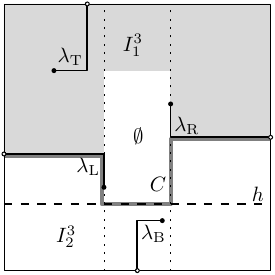}
    \caption{}\label{fig:separation-four-sided}
  \end{subfigure}
  \hspace{3mm}
  \begin{subfigure}{.30\textwidth}
    \centering
    \includegraphics[scale=1.0,page=3]{separation-four-sided}
    \caption{}\label{fig:separation-four-sided:areas}
  \end{subfigure}
  \hspace{3mm}
  \begin{subfigure}{.30\textwidth}
    \centering
    \includegraphics[scale=1.0,page=2]{separation-four-sided}
    \caption{}\label{fig:four-sided-instances}
  \end{subfigure}
  \caption{\subref{fig:separation-four-sided}) The right leader
    $\lambda_\LL$, the left leader $\lambda_\RR$ and the horizontal
    line $h$ split the instance into two three-sided instances $I^3_1$
    and $I^3_2$.  \subref{fig:separation-four-sided}) Sketch of the
    areas $A_{\TT,\LL}$, $A_{\TT,\RR}$, $A_{\RR,\TT}$, $A_{\RR,\BB}$,
    $A_{\LL,\TT}$ and $A_{\LL,\BB}$.
    \subref{fig:four-sided-instances}) The leaders $\lambda_\LL$,
    $\lambda_\RR$ and $\lambda_\TT$ split the three-sided instance
    into two two-sided instances.}
\end{figure}

\begin{lemma}
  Given an instance~$I$ of \textsc{Four-Sided Boundary Labeling},
  the following two statements are true.
  \begin{compactenum}
    \item If there is no planar solution for~$I$, the algorithm
      states this.
    \item If there is a planar solution for~$I$, the algorithm
      returns such a solution.
  \end{compactenum}
\end{lemma}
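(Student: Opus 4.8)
The plan is to prove correctness of the four-sided algorithm by establishing the equivalence between planar solutions of $I$ and successful combinations of the subinstance solutions explored by the algorithm. The two statements together assert exactly this equivalence, so I would prove both directions of a single ``if and only if''.

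First I would argue the easier direction: if the algorithm returns a solution, it is planar. The algorithm only returns a solution when it has found extremal leaders $\lambda_\LL$, $\lambda_\RR$, a separating line $h$, and an extremal top leader $\lambda_\TT$ such that all four two-sided subinstances $I^2_1, I^2_2$ (and the symmetric ones for $I^3_2$) admit planar solutions. I would show that the regions $A_1$, $A_2$ (and their counterparts below $C$) are pairwise interior-disjoint and that their union covers $R$, so that leaders confined to these regions cannot cross leaders in other regions. Since each subinstance is solved by the dynamic program of Section~\ref{sec:two-sided-algorithm} with the extra constraint that no leader crosses the lines through the relevant bends, the leaders respect the region boundaries $C$ and the vertical/horizontal separating lines. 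Combining the region-local planar solutions therefore yields a globally planar solution, and because the regions are defined to contain exactly the matching ports and sites, all labels are used.

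The harder direction is the reverse: if a planar solution exists, the algorithm finds one. Here I would start from an arbitrary planar solution $\mathcal L$ and invoke Proposition~\ref{prop:partitioned} to assume $\mathcal L$ is partitioned by curves $C_1,\dots,C_4$ meeting at the center point $o$. I would then argue that the extremal leaders $\lambda_\LL$, $\lambda_\RR$, $\lambda_\TT$ of $\mathcal L$, together with the horizontal line $h$ through the bottom of the separating structure, are among the choices the algorithm enumerates. The key claim is that these extremal leaders induce exactly the region decomposition $A_{\TT,\RR}, A_{\TT,\LL}, \dots$ used by the algorithm, and that the restriction of $\mathcal L$ to each region is a planar solution of the corresponding subinstance satisfying the enforced crossing constraints (no top leader crosses the line through $\bend(\lambda_\TT)$, etc.). Because the extremal leaders have the longest orthogonal segments of their type, every other leader of that type stays on the correct side of the induced line, so the restrictions are genuinely feasible subinstance solutions. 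Since the dynamic program is complete for two-sided $L$-shaped instances (Theorem~\ref{thm:efficient-twosided-labeling} and Lemma~\ref{lem:three-sided}), it will find solutions for each subinstance, and the algorithm will successfully combine them.

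The main obstacle will be verifying that the extremal-leader decomposition is consistent across all three levels of recursion, i.e., that choosing extremal $\lambda_\LL,\lambda_\RR$ and $h$ at the four-sided level, and then extremal $\lambda_\TT$ at each three-sided level, partitions $\mathcal L$ in a way that simultaneously satisfies the balance and crossing constraints of every subinstance without conflict. In particular, I must confirm that the areas such as $A_1=(A_{\RR,\TT}\cup A_{\TT,\RR})\setminus(A_{\TT,\LL}\cup A_{\RR,\BB})$ really capture all and only the top-and-right leaders of the upper three-sided piece, using the $x$- and $y$-separation guaranteed by Lemma~\ref{lem:xsep} together with the extremality of the chosen leaders. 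Once this geometric bookkeeping is in place, correctness reduces to the already-established correctness of the two- and three-sided algorithms, so the remaining arguments are routine.
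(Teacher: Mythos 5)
Your proposal is correct and follows essentially the same route as the paper's proof: the easy direction via the observation that a returned solution is assembled from planar solutions of disjoint subinstances covering all sites and ports, and the hard direction via Proposition~\ref{prop:partitioned}, the fact that the extremal leaders of the partitioned solution (and a line inducing the same site partition as $h$) are among the enumerated choices, and the completeness of the two-sided dynamic program. The ``main obstacle'' you flag---that extremality plus $x$-/$y$-separation forces every leader to stay inside its region $A_1$ or $A_2$---is exactly the geometric claim the paper's proof settles, so your plan is on target.
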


\begin{pf}
  In case the algorithm returns a solution, it has been constructed from
  planar solutions of disjoint instances of \textsc{Two-Sided Boundary
    Labeling with Adjacent Sides}. As the union of these two-sided
  instances contains all sites and ports of~$I$, the algorithm returns
  a planar solution of~$I$, which shows the first statement.

  Conversely, assume that~$I$ has a planar solution~$\mathcal L$.  By
  Proposition~\ref{prop:partitioned}, we may assume that~$\mathcal L$
  is partitioned.  In particular, let~$\lambda_{\TT}$, $\lambda_\LL$,
  $\lambda_\BB$ and~$\lambda_\RR$ be the extremal leaders in~$\mathcal
  L$ of the top, left, bottom and right side of~$R$, respectively, and
  let~$h$ be a horizontal line that separates the top leaders from the
  bottom leaders.

  Obviously,~$\lambda_\LL$,~$\lambda_\RR$ and~$h$ split the instance into
  two three-sided instances~$I^3_1$ and~$I^3_2$. As the algorithm
  systematically explores all choices of extremal right leaders,
  extremal left leaders and horizontal lines partitioning the set of
  sites, it must find~$\lambda_\LL$,~$\lambda_\RR$ and a horizontal
  line~$h'$ that separates the same sets of sites as~$h$.
  Thus,~$I^3_1$ and~$I^3_2$ are considered by the algorithm.

  Let~$I^3_1$ be the instance above the curve defined by~$\lambda_\LL$,
  $\lambda_\RR$ and~$h'$, and let~$I^3_2$ be the instance below that
  curve. We now show that the algorithm finds a planar solution
  for~$I^3_1$. Symmetric arguments hold for~$I^3_2$.
  As the algorithm explores all choices of extremal top leaders
  in~$I^3_1$, it also considers~$\lambda_{\TT}$ to be the
  extremal top leader. This leader partitions the area
  of~$I^3_1$ into the two disjoint areas~$A_1=(A_{\RR,\TT}\cup A_{\TT,\RR})
  \setminus (A_{\TT,\LL}\cup A_{\RR,\BB})$ and~$A_2=(A_{\LL,\TT}\cup
A_{\TT,\LL})
  \setminus (A_{\TT,\RR}\cup A_{\LL,\BB})$; see
  Fig.~\ref{fig:separation-four-sided}. It directly follows from the
  extremal choice of~$\lambda_\RR$, $\lambda_{\TT}$ and~$\lambda_\LL$ that
  there is no leader in~$\mathcal L$ that intersects both~$A_1$ to~$A_2$. 
  In particular, no left leader intersects~$A_1$ and no
  right leader intersects~$A_2$. Thus,~$A_1$ and~$A_2$ split~$\mathcal L$
  into independent planar solutions~$\mathcal L_1$ and~$\mathcal L_2$
  of two two-sided instances~$I^2_1$ and~$I^2_2$ induced by~$A_1$
  and~$A_2$, respectively. Note that the algorithm considers the same
  two-sided instances independently from each other. As~$I^2_1$ has a
  solution, namely~$\mathcal L_1$, we know that the dynamic program
  finds a solution~$\mathcal L^2_1$ for~$I^2_1$. In particular, all
  leaders of ~$\mathcal L^2_1$ lie in $A_1$.

  Applying symmetric arguments for~$I^2_2$, the algorithm yields a planar
  solution~$\mathcal L^2_2$ that stays in~$A_2$. Consequently,
  combining~$\mathcal L^2_1$ and~$\mathcal L^2_2$ into one solution
  yields a planar solution~$\mathcal L^3_1$ for~$I^3_1$. Analogously,
  we obtain a planar solution~$\mathcal L^3_2$ for~$I^3_2$. Obviously,
  due to the separation by~$\lambda_\LL$,~$\lambda_\RR$ and~$h'$, the union
  of~$\mathcal L^3_1$ and~$\mathcal L^3_2$ is also planar, which is the
  overall solution returned by the algorithm. This proves the second
  statement of the lemma.
\end{pf}

Let us analyze the running time of the algorithm. Obviously, there are~$O(n^5)$
possible combinations of left and right extremal leaders and a
horizontal line separating the top and bottom-labeled sites.  For each
combination, we independently solve two three-sided instances.  For
such a three-sided instance, we consider $O(n^2)$ choices for the
extremal leader $\lambda_\TT$ and independently solve two independent
two-sided instances with Theorem~\ref{thm:efficient-twosided-labeling}
in $O(n^2)$ time.  This implies that solving one three-sided instances
takes $O(n^4)$ time.  Thus, the overall running time is~$O(n^9)$.  The
following theorem summarizes this result.

\begin{theorem}\label{thm:efficient-foursided-labeling}
  \textsc{Four-Sided Boundary Labeling} can be solved in~$O(n^{9})$
  time using~$O(n)$ space.
\end{theorem}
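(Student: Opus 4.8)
The plan is to treat the statement as a complexity bound on the algorithm of this section, since its correctness is already established by the preceding lemma; what remains is to count the configurations explored, to charge the cost of the two-sided solver of Theorem~\ref{thm:efficient-twosided-labeling} to each of them, and then to argue that the enumeration can be performed in place so that the space stays linear.

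First I would bound the outermost enumeration. An extremal left (resp.\ right) leader is determined by its bend $\bend(\lambda_\LL)$, whose coordinates come from the $x$-coordinate of a site and the $y$-coordinate of a left port; with $O(n)$ sites and $O(n)$ ports this yields $O(n^2)$ candidates, and symmetrically $O(n^2)$ for $\lambda_\RR$. The horizontal line $h$ matters only through the partition of the sites into those labeled from the top and those labeled from the bottom, so $O(n)$ positions suffice. Hence there are $O(n^2\cdot n^2\cdot n)=O(n^5)$ triples $(\lambda_\LL,\lambda_\RR,h)$. For each triple I would spend $O(n)$ time to reject inconsistent or overlapping choices and otherwise to cut $I$ into the two three-sided instances $I^3_1$ and $I^3_2$ along the $x$-monotone curve defined by $\lambda_\LL$, $\lambda_\RR$ and $h$.

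Next I would analyse the middle level. Each three-sided instance is solved by guessing its extremal top leader $\lambda_\TT$ ($O(n^2)$ candidates, by the same counting), which splits it into the two $L$-shaped two-sided instances $I^2_1$ and $I^2_2$. The delicate point to check is that invoking the solver of Theorem~\ref{thm:efficient-twosided-labeling} on these subinstances does not exceed its stated bounds, even though one boundary is an orthogonal curve rather than a straight side, and the dynamic program must be restricted to $xy$-monotone curves that forbid top, left, and right leaders from crossing the lines through $\bend(\lambda_\TT)$, $\bend(\lambda_\LL)$, and $\bend(\lambda_\RR)$. I would argue that these restrictions only delete transitions of the dynamic program of Section~\ref{sec:two-sided-algorithm}: they leave the $O(n^2)$ grid size and the $O(1)$ work per cell untouched, so each two-sided instance is still solved in $O(n^2)$ time and $O(n)$ space. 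Thus one three-sided instance costs $O(n^2)\cdot O(n^2)=O(n^4)$.

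Finally I would multiply the levels: $O(n^5)$ triples, each triggering two three-sided solves of $O(n^4)$ each, give $O(n^9)$ total time. For space, I would note that the enumeration is carried out sequentially, reusing one block of $O(n)$ working memory for every triple and every choice of $\lambda_\TT$ and keeping only the first solution that is found; hence the peak space equals that of a single two-sided invocation, namely $O(n)$. The main obstacle is the verification in the middle step, i.e.\ confirming that constraining the two-sided dynamic program to the curved, $L$-shaped boundary and to the forbidden-crossing lines genuinely preserves both the $O(n^2)$ time and the $O(n)$ space of Theorem~\ref{thm:efficient-twosided-labeling}; the surrounding arithmetic is routine bookkeeping.
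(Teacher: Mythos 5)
Your proposal matches the paper's own analysis essentially step for step: the same $O(n^5)$ enumeration of $(\lambda_\LL,\lambda_\RR,h)$, the same $O(n^2)$ choices of $\lambda_\TT$ splitting each three-sided instance into two $L$-shaped two-sided instances solved in $O(n^2)$ time and $O(n)$ space by Theorem~\ref{thm:efficient-twosided-labeling}, giving $O(n^9)$ time with linear space by reusing working memory across the sequential enumeration. The ``delicate point'' you flag is also handled in the paper only by asserting that the constrained dynamic program is a simple adaptation of the two-sided one, so your treatment is faithful to (and no weaker than) the original.
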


Note that, except for the length minimization, our approach for the four-sided
case also carries over to the extensions from
Section~\ref{sec:two-sided-extensions}, because we only solve subinstances
of~\textsc{Two-Sided Boundary Labeling with Adjacent Sides}. In particular with
corresponding impact on the running time we can soften the restriction that the number
of labels and sites is equal.

\section{Conclusion}
\label{sec:conclusion}

In this paper, we have studied the problem of testing whether an
instance of {\sc Two-Sided Boundary Labeling with Adjacent Sides}
admits a planar solution. We have given the first efficient algorithm
for this problem, running in~$O(n^2)$ time.

Our algorithm can also be used to solve a variety of different
extensions of the problem. We have shown
how to generalize to sliding ports instead of fixed ports without
increasing the running time and how to maximize the number of labeled
sites such that the solution is planar
in~$O(n^3\log n)$ time. We further have given an extension to the
algorithm that minimizes the total leader length in~$O(n^8\log n)$ time.

With some additional work, our approach can also be
used to solve {\sc Three-Sided} and {\sc Four-Sided Boundary Labeling}
in polynomial time. We have introduced an algorithm solving the
three-sided case in~$O(n^4)$ time and the four-sided case
in~$O(n^{9})$ time. Also, except for the length minimization, all
extensions carry over.  It remains open whether a minimum
length solution of  {\sc Three-Sided} and {\sc Four-Sided Boundary
Labeling} can be computed in polynomial time.

%\clearpage
\addcontentsline{toc}{section}{\numberline{}\refname}
\bibliographystyle{abbrv} % {abbrv} % {plain}
\bibliography{abbrv,maplab,boundary}

\end{document}